\documentclass[11pt]{article}
\usepackage[T1]{fontenc}
\usepackage{balance} 
\usepackage{authblk}
\usepackage[a4paper,margin=1in]{geometry}

\usepackage[numbers]{natbib}
\usepackage{soul}
\usepackage{url}
\usepackage{xspace}
\usepackage[hidelinks]{hyperref}
\usepackage[utf8]{inputenc}
\usepackage[small]{caption}
\usepackage{graphicx}
\usepackage{tabularx}
\usepackage{amsmath}
\usepackage{amsthm}
\usepackage{amsfonts,amssymb}
\usepackage{booktabs}
\usepackage{algorithm}
\usepackage{algorithmic}
\usepackage[switch]{lineno}
\usepackage{cleveref}
\usepackage{etoolbox} 
\AtEndEnvironment{proof}{\null\hfill\qed}
\usepackage{mathtools}
\usepackage{xcolor}
\usepackage[shortlabels,inline]{enumitem}
\usepackage{thmtools}
\usepackage{amsfonts}
\usepackage{tikz}
\usetikzlibrary{positioning, fit, backgrounds, calc}
\usetikzlibrary{arrows.meta}

 \newtheorem{theorem}{Theorem}[section]
 \newtheorem{lemma}[theorem]{Lemma}
 
 \newtheorem{claim}[theorem]{Claim}

 \newtheorem{proposition}[theorem]{Proposition}
 \newtheorem{corollary}[theorem]{Corollary}

 \theoremstyle{definition}

 \newtheorem{definition}[theorem]{Definition}

\newcommand{\R}{\mathbb{R}}

\newcommand{\PO}{\texttt{PO}}
\newcommand{\USW}{\texttt{USW}\xspace}
\newcommand{\NSW}{\texttt{NSW}\xspace}
\newcommand{\OPT}{\texttt{OPT}\xspace}
\newcommand{\maxUSW}{\texttt{maxUSW}\xspace}
\newcommand{\maxNSW}{\texttt{maxNSW}\xspace}
\newcommand{\Pos}{\texttt{Pos}}
\newcommand{\Neg}{\texttt{Neg}}
\newcommand{\LP}{\mathrm{LP}}
\newcommand{\MMS}{\texttt{MMS}\xspace}

\newcommand{\floor}[1]{\left\lfloor #1 \right \rfloor}
\newcommand{\SPOC}{\textrm{SPOC}\xspace}

\begin{document}
\title{Fair Allocation with Dichotomous Preferences over Item Complements}

%
%
\author[1]{Soumyajit Pyne\thanks{soumyajit.pyne@tifr.res.in}}
\author[2]{Vignesh Viswanathan\thanks{vviswanathan@umass.edu}}
\author[2]{Wenhao Yang\thanks{wenhaoyang@umass.edu}}
\author[2]{Yair Zick\thanks{yzick@umass.edu}}

\affil[1]{Tata Institute of Fundamental Research, Mumbai}
\affil[2]{University of Massachusetts, Amherst}
\maketitle              
\begin{abstract}
We study the problem of fair allocation when agents have dichotomous submodular preferences over groups of, rather than individual, items. 
These preferences naturally extend existing models, and have a natural interpretation in settings such as course allocation. 
We show that this seemingly minor extension results in computational intractability barriers: computing a utilitarian or Nash-optimal allocation with item complements is NP-hard. However, when agents' complementary bundles have size at most $k$, there exist poly-time algorithms that compute $\frac1k$-\maxUSW allocations and approximate Nash-welfare-maximizing allocations. While PO+EF-1 allocations may not exist in the complementary preference setting, it is possible to compute $\frac1{k+1}$-approximately \maxUSW EF-X allocations in polynomial time; furthermore, these allocations offer each agent $i$ a utility of at least $\frac{\MMS_i - k}{k+1 + \frac1n}$, where $\MMS_i$ is agent $i$'s maximin share guarantee. 

While the standard dichotomous binary submodular setting admits simple sequential techniques for computing fair and efficient allocations, our results for the complements setting require more complex tools from combinatorial optimization. 
\end{abstract}
\section{Introduction}
We study fair allocation with complementary preferences when agents have ``mostly'' dichotomous submodular preferences. Each agent has a list of complementary bundles of items; when each complementary bundle is `merged' into a single item, the agent's valuation function becomes submodular. 
We refer to this class of valuations as \emph{Submodular Preferences Over Complements} or \SPOC. 
These valuations are motivated by settings where agents require complementary resources to complete a task.

\paragraph{Course allocation problem.} The \emph{course allocation problem} focuses on the design of systems that assign seats in classes to students. This problem is naturally cast as an instance of the \emph{fair allocation} problem \cite{diaz2024deploying,bouveret2016handbook,Budish2011EF1,budish2017coursematch,soumalias2024courseallocation}. Students normally consider a variety of constraints when choosing classes. 
    One such constraint may be \emph{scheduling}: students may not take classes scheduled at the same time. Capacity constraints make some classes \emph{substitutes}: one class could replace another. 
    Not all constraints induce substitute item preferences. For example, a student may want to take classes A and B in the fall semester, so that they may take class C in the spring semester. 
    In this case, the two seats are \emph{complements} \cite{paesleme2017gross}. When items are substitutes, agents have \emph{submodular} valuations \cite{edmonds1970matroid,krause2014submodular,paesleme2017gross}: agents exhibit diminishing marginal returns as their bundles grow.

\paragraph{Allocation of multiple resource types.}
\cite{ghodsi2011dominant} study fair allocation in shared computing systems, where users need several resources together, such as CPU and memory, to run their tasks. Related works study fairness and efficiency in these settings~\cite{joe2013multiresource,parkes2015beyond}. 
Similarly, pharmaceutical testing \cite{martins2023minimizing} and semiconductor manufacturing \cite{bitar2016memetic} require large-scale assignments of multiple complementary resources to tests/machines. Here, we can think of the users as agents and discrete resource reservations as goods. The resources needed to run a task are complementary: tasks cannot run without all required resources. 
\SPOC preferences capture this behavior.

While fair allocation with submodular preferences is extensively studied (see \Cref{sec:related}), preferences over item complements are not as well understood. This raises a natural question: what guarantees can we offer when agent preferences are ``submodular in the large'' but exhibit \emph{some} item complements? 

	\subsection{Our contribution}\label{sec:contrib}
   We study binary \SPOC valuations. \SPOC preferences naturally extend binary submodular preferences, while highlighting the challenges introduced by item complements to both existence and computation of fair and efficient allocations. We divide our results into two parts (see \Cref{tab:summary-results}). The first focuses on computing and approximating maximum utilitarian and Nash welfare. 
    The second studies the tradeoff between fairness and efficiency, providing both positive guarantees and lower bounds.

    First, we establish the APX-hardness of computing a utilitarian or Nash-optimal allocation under binary \SPOC preferences (\Cref{thm:nphard-shared-synergies}). We show that computing a better than $\frac{3555}{3556}$ approximation to \maxUSW is NP-hard, even for binary $\SPOC_2$ agents. Our proof uses a reduction from an APX-hard version of 2P2N-3SAT~\cite{berman20032P2N}.  
    The same reduction also shows that computing any constant factor approximation to the \maxNSW is NP-hard.
    
    We complement these results with two poly-time algorithms for computing approximately optimal allocations for the utilitarian (\Cref{thm:matroid-l}) and Nash (\Cref{thm:5-appx}) welfare objectives. 
    A generalization of \Cref{alg:iterative-matroid-basis} computes a $\frac1k$-\maxUSW allocation in polynomial time for agents whose complementary bundles are of size $\le k$. We utilize a linear programming relaxation of the \maxUSW problem with matroid constraints, whose optimal value upper-bounds \maxUSW. We then use iterative rounding to construct (in polynomial time) an integral solution with value at least $\frac{1}{k}$ times the LP optimum, where $k$ is the size of the complements, yielding a $\frac{1}{k}$-approximation to \maxUSW.

    While computing a constant factor approximation to the Nash welfare is computationally intractable (\Cref{corr:nash-hard}), \Cref{alg:nash-welfare-local-search} efficiently computes an approximate \maxNSW allocation where each agent's utility is augmented by a small constant $\beta$.  
    We use a variant of the algorithm in~\cite{Barman2021approxmnw}.
    Our analysis uses a factor-revealing LP to obtain an approximation guarantee with both multiplicative and additive terms. Our analysis also improves the approximation guarantee for binary XOS valuations in~\cite{Barman2021approxmnw} from $288$ to $6$.
    
    Next, we explore fairness objectives under binary \SPOC preferences. 
    We show that there may not exist allocations that guarantee each agent any approximation of their maximin share, nor is a Pareto-optimal and EF-X allocation guaranteed to exist. 
    However, we offer positive results for agents with complements of size exactly $k$. \Cref{alg:clean-efx} computes a $\frac1{k+1}$-\maxUSW EF-X allocation, which also offers each agent $i$ a utility of at least $\frac{\MMS_i - k}{k+1+\frac1n}$, where $\MMS_i$ is agent $i$'s maximin share guarantee.
    \Cref{alg:clean-efx} repeatedly selects an agent with the lowest utility among those who envy the unallocated item pool. It then replaces this agent's current bundle with a suitable bundle from the pool, returning the old bundle to the pool. Finally, we show that we can complete the partial allocation returned by \Cref{alg:clean-efx} while preserving all of our guarantees.
\begin{table*}[t]
\centering
\small
\setlength{\tabcolsep}{4pt}
\renewcommand{\arraystretch}{1.08}
\caption{Summary of results for binary $\SPOC_k$ valuations. All algorithms run in
polynomial time.}
\label{tab:summary-results}
\begin{tabularx}{0.9\textwidth}{@{}l
    >{\raggedright\arraybackslash}p{0.30\textwidth}
    >{\raggedright\arraybackslash}X@{}}
\toprule
\textbf{Objective} & \textbf{Lower bound} & \textbf{Upper bound} \\
\midrule
\maxUSW
& APX-hard, even for $k=2$
& $\frac{1}{k}$-approximation \\

\maxNSW
& No constant factor multiplicative approximation, even for $k=2$
& $(\alpha,\beta)$-approximation for $\alpha\ge k^2+1$,
  $(\alpha-1)\beta\ge k^2$ \\

EF-X and efficiency
& EF-1 and PO need not coexist
& EF-X for complements of size $k$, with
  $\USW(X)\ge\frac{1}{k+1}\maxUSW$ and
  $v_i(X_i)\ge\frac{\MMS_i-k}{k+1+1/n}$ for all $i$ \\
\bottomrule
\end{tabularx}
\end{table*}

	\subsection{Related work}\label{sec:related}
    Dichotomous agent preferences are extensively studied. 
    \cite{Babaioff2021Dichotomous,benabbou2019group,benabbou2021MRF} examine binary submodular valuations, presenting algorithms for computing fair and efficient allocations. 
    Several works present \emph{truthful} mechanisms for computing fair and efficient allocations for binary additive \cite{halpern2020binaryadditive} and submodular \cite{Babaioff2021Dichotomous,viswanathan2022yankee} valuations; \cite{barman2022groupstrategyproof} extend these results to \emph{group} strategyproof mechanisms. 
    Binary submodular valuations admit poly-time methods for computing fair-share guarantees \cite{Barman2021MRFMaxmin}, as well as truthful poly-time computation of fair and efficient allocations \cite{montanari2025wefMRF,viswanathan2025GYS}. 

    Literature on non-submodular agent preferences in fair division is more limited. \cite{Lipton2004EF1} present a poly-time algorithm for computing fair allocations under general valuations.
    \cite{ghodsi2022BeyondAdditive} focus on the maximin share guarantee; \cite{akrami2025eefx} show that fair allocations exist for general valuation classes; \cite{barman2024nearly} show that EQ-X allocations exist for general monotone valuations and can be computed in pseudo-polynomial time, while other works focus on finding fair allocations for subadditive and XOS valuations \cite{chaudhury2021subadditive,feige2025subadditive}.

    Course allocation mechanisms elicit complementary preferences in practice. \cite{budish2017coursematch} and \cite{soumalias2024courseallocation} elicit bundle preferences as well as individual course preferences from students.  
    \cite{Barman2021approxmnw} use a method similar to \Cref{alg:nash-welfare-local-search} to efficiently compute an allocation that offers a $288$ approximation to the maximum Nash welfare when agents have binary XOS valuations.
	\section{Preliminaries}
	We use $[t]$ to denote the set $\{1, 2,\dots , t \}$.
	Given sets $A,B$, we replace $A\setminus B$ (resp. $A\cup B$) with $A-B$ (resp. $A+B$); when $B$ is a single element, i.e., $B = \{g\}$, we replace $A \setminus \{g\}$ (resp. $A \cup \{g\}$) with $A - g$ (resp. $A + g$).
	We have a set of $n$ \emph{agents} $N = [n]$ and a set of $m$ \emph{items} $G = \{g_1, \dots, g_m\}$.
	Each agent $i \in N$ has a \emph{valuation function} $v_i: 2^G \to \R_+$; given a bundle $S \subseteq G$, $v_i(S)$ denotes how much agent $i$ values the bundle $S \subseteq G$.
	Let $\Delta_i(S, T) \triangleq v_i(S+T) - v_i(S)$ denote the \emph{marginal utility} agent $i\in N$ gets from the bundle $T\subseteq G$, given that they have the bundle $S\subseteq G$. 
    When $T = \{g\}$, we write $\Delta_i(S,g)$ instead of $\Delta_i(S,\{g\})$. 
	
	\emph{Binary submodular} valuations, also known as \emph{matroid rank functions} or MRFs \cite{oxley2011matroids}, are a key component of our work.
	An MRF valuation function $v_i$ satisfies the following three properties: 
	\begin{enumerate*}[label={(\alph*)}]
		\item $v_i(\emptyset) = 0$
		\item for any bundle $S \subseteq G$ and item $g \in G$, $v_i(S+g) - v_i(S) \in \{0, 1\}$ and
		\item $v_i$ is submodular: for any sets $S \subseteq T \subseteq G$ and any item $g \in G$, $\Delta_i(S, g) \geq \Delta_i(T, g)$.
	\end{enumerate*}
	\paragraph{Submodular Preferences Over Complements.} 
    We investigate the effect of \emph{complementary} preferences (or preferences with item complements), under the following model. Every agent $i$ has a \emph{complements list} of disjoint item bundles $\mathcal L_i$ with the following property: for every bundle $S \subseteq G$, $v_i(S) = r_i(\{C \in \mathcal L_i:C \subseteq S\})$, where $r_i$ is a matroid rank function (MRF) over the bundles in $\mathcal L_i$. Informally, if we treat the bundles in $\mathcal L_i$ as individual items, the valuation $v_i$ becomes a matroid rank function. 
    We refer to agents whose valuations become MRFs once their complementary items have been merged as having \emph{binary submodular preferences over complements} (binary \SPOC). 
    When every element of $\mathcal L_i$ is of size $\le k$, we refer to agents' valuations as $\SPOC_k$.  
    
	We say that a binary $\SPOC_k$ agent $i$ is \emph{purely complementary} if for every bundle $C \in \mathcal L_i$, $|C| = k$.  
    Put simply, agent $i$ does not value any bundle of size $<k$, and only receives a positive utility from (some) $k$-sized bundles. 
	It is often no loss of generality to assume that $\SPOC_k$ agents are purely complementary. If $\Delta_i(S,T)=1$ for some bundle $T$ of size $<k$, we introduce $k - |T|$ dummy items that are only complements for agent $i$ and the bundle $T$.  

    When analyzing envy (see \Cref{sec:usw-efx}), introducing dummy items \textbf{is} a loss of generality; the EF-X guarantees we present only hold when all $\SPOC_k$ agents are purely complementary.

    Our goal is to allocate the items $G$ to the agents in some reasonable manner.
	An \emph{allocation} is a partition $X = (X_0; X_1, \dots, X_n)$ of the items $G$, where agent $i$ receives the bundle $X_i$, and $X_0$ is the set of unassigned items. 
	An allocation $X$ is \emph{clean} (or \emph{non-redundant}) \cite{benabbou2021MRF} if for any agent $i$ and any item $g$ in their bundle, $v_i(X_i - g) < v_i(X_i)$.
    We seek allocations satisfying certain \emph{justice criteria}. 
    \begin{description}[wide, leftmargin=0cm]
    \item[Envy:] an allocation $X$ is \emph{envy-free} (EF) \cite{foley1967resource} if no agent prefers another agent’s bundle to their own, i.e., for every two agents $i,j \in N$, $v_i(X_i) \ge v_i(X_j)$. 
	EF is not guaranteed when all goods are allocated; when there are two agents and only one item valued at $1$ by both, one agent will invariably envy the other.
    An allocation $X$ is EF-1 \cite{Budish2011EF1,Lipton2004EF1} if no agent envies another agent after dropping some good from the latter agent's bundle. 
    Formally, for any two agents $i,j\in N$ where $v_i(X_i) < v_i(X_j)$, there is some item $g$ in $X_j$ such that $v_i(X_i) \geq v_i(X_j -g)$.
	Similarly, an allocation $X$ is \emph{envy-free up to $\ell$ items} (EF-$\ell$) if for any two agents $i,j\in N$ where $v_i(X_i) < v_i(X_j)$, there is some set of $\le \ell$ items $S \subseteq X_j$ such that $v_i(X_i) \geq v_i(X_j \setminus S)$. 
    An allocation is \emph{envy-free up to any item} (EF-X) \cite{Caragiannis2016MNW} if no agent envies another agent after dropping any good from the latter agent's bundle. Formally, for any agents $i,j\in N$ where $v_i(X_i)<v_i(X_j)$, and for every item $g\in X_j$, $v_i(X_i)\ge v_i(X_j-g)$. 
    \item[Maximin Share Value:] 
    Let $\Pi_n(G)$ denote the set of all partitions of $G$ into $n$ bundles. The \emph{maximin share} (\MMS) of agent $i$ is defined as
    $
        \MMS_i=\max_{(X_1,\ldots,X_n)\in\Pi_n(G)}
        \min_{j\in[n]} v_i(X_j).
    $
    Intuitively, $\MMS_i$ represents agent $i$'s utility when they partition the goods into $n$ bundles and then
    receive their least valuable bundle.
    \end{description}
    We seek allocations satisfying some efficiency metric as well; common metrics include:
    \begin{description}[align=left,wide]
    \item[Pareto Optimality:]
	An allocation $X$ is \emph{Pareto optimal} (\PO) if there does not exist any allocation $Y$ such that for all agents $i \in N$, $v_i(Y_i) \geq v_i(X_i)$ and there exists at least one agent $j \in N$ such that $v_j(Y_j) > v_j(X_j)$.
	\item[Utilitarian Social Welfare:]
	The \emph{utilitarian social welfare} (\USW) of an allocation is $\USW(X) = \sum_{i\in N} v_i(X_i)$. We refer to an allocation that maximizes \USW as a \maxUSW allocation. 
	\item[Nash Social Welfare:]
	The \emph{Nash social welfare} (\NSW) of an allocation is the geometric mean of agent utilities under that allocation, i.e., $\NSW(X) = \left(\prod_{i\in N} v_i(X_i)\right)^{\frac{1}{n}}$. It is possible that in every allocation $X$, there is some agent $i \in N$ such that $v_i(X_i) = 0$. 
    In such cases, a Nash-optimal allocation first minimizes the number of $0$ utility agents, and then maximizes the product of utilities among the non-zero utility agents \cite{Caragiannis2016MNW}. Such allocations are called $\maxNSW$ allocations.  
    \end{description}
    Since computing \maxUSW/\maxNSW allocations in our setting is computationally intractable, we study allocations that approximate these objectives.
    Given some $\alpha\in [0,1]$, an allocation $X$ is $\alpha$-\maxUSW (\maxNSW) if $\USW(X)$ ($\NSW(X)$) is at least $\alpha$ times the \USW (\NSW) of the \maxUSW (\maxNSW) allocation.

\SPOC preferences present existential, as well as computational, barriers: there exists an instance with binary $\SPOC_2$ agents for which no allocation satisfies any reasonable fairness/efficiency property.
	\begin{proposition}\label{thm:ef1po}
		There exists an instance with binary $\SPOC_2$ agents where
        \begin{enumerate*}
            \item No EF-1 allocation is PO. 
            \item No allocation offers all agents any constant fraction of their maximin share. 
        \end{enumerate*}
	\end{proposition}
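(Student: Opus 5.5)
We prove both parts with one small instance: two binary $\SPOC_2$ agents whose valued pairs intersect pairwise. Take $n=2$ and $G=\{a,b,c,d\}$. Agent $1$ has complements list $\mathcal L_1=\{\{a,b\},\{c,d\}\}$ and agent $2$ has $\mathcal L_2=\{\{a,c\},\{b,d\}\}$. In both cases $r_i$ is the additive (free-matroid) rank function over the two bundles, so $v_i(S)$ is the number of agent $i$'s pairs contained in $S$. Each $r_i$ is an MRF, each list consists of disjoint bundles of size $2$, and so both agents are binary $\SPOC_2$ (indeed purely complementary).

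\textbf{Key structural fact.} In no allocation do both agents receive positive utility. If $v_1(X_1)\ge 1$, then $X_1$ contains $\{a,b\}$ or $\{c,d\}$. Each of these pairs meets both $\{a,c\}$ and $\{b,d\}$, so $X_2$ contains neither of agent $2$'s pairs, and $v_2(X_2)=0$. The case $v_2(X_2)\ge 1$ is symmetric.

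\textbf{Part 2 (maximin share).} Agent $1$ can partition $G$ into $\{a,b\},\{c,d\}$, so $\MMS_1=1$. Likewise $\MMS_2=1$ via $\{a,c\},\{b,d\}$. By the key fact, every allocation leaves some agent with utility $0=0\cdot \MMS_i$. Hence no allocation gives every agent a $\rho$-fraction of their \MMS for any constant $\rho>0$.

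\textbf{Part 1 (EF-1 and PO).} First we characterize the PO allocations.
\begin{itemize}
\item By the key fact, in any allocation some agent, say $j$, has utility $0$.
\item If the other agent $i$ has utility $<2$, the allocation giving all of $G$ to $i$ Pareto-dominates it: $i$ rises to $2$ and $j$ stays at $0$.
\item Therefore, in every PO allocation one agent $i$ has $v_i(X_i)=2$ and the other has utility $0$.
\item The only set worth $2$ to $i$ is $G$ itself, so $X_i=G$.
\end{itemize}
Now check EF-1 for such an allocation. The zero-utility agent $j$ sees $X_i=G$. Removing any single item $g$ from $G$ leaves three items, and these still contain one of $j$'s two disjoint pairs (the one avoiding $g$). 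So $v_j(G-g)=1>0=v_j(X_j)$ for every $g$, and EF-1 fails. Thus no EF-1 allocation is PO.

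The only step needing care is the PO characterization. One must make sure no PO allocation gives both agents utility $1$; the key intersection fact rules this out.
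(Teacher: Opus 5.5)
Your proof is correct and matches the paper's: the same four-item, two-agent instance with crossing complementary pairs, the observation that the only PO allocations give all items to one agent and these fail EF-1, and the fact that both maximin shares are $1$ while no allocation gives both agents positive utility. You also write out details the paper leaves implicit: the intersection fact, the explicit Pareto-domination argument, and the check that $G-g$ still contains one of the other agent's pairs.
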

	\begin{proof}
		Consider the following instance. 
        There are four items $\{g_1, g_2, g_3, g_4\}$ and two agents $A, B$. Agent $A$ is additive over the pairs $\{g_1, g_2\}$ and $\{g_3, g_4\}$. Agent $B$ is additive over the pairs $\{g_1, g_3\}$ and $\{g_2, g_4\}$.
		The only PO allocations assign all items to the same agent. However, those allocations are not EF-1.

        In the instance above, the maximin share guarantee cannot be approximated to any positive factor: $\MMS_A = \MMS_B = 1$, but any allocation that offers a positive utility to one agent offers a utility of $0$ to the other.  
	\end{proof}
	\Cref{thm:ef1po} further implies that achieving EF-1 in combination with either maximum social welfare (\maxUSW) or maximum Nash welfare (\maxNSW) is impossible in the \SPOC agent setting; this is in stark contrast to stronger fairness/efficiency guarantees when agents have additive or MRF preferences \cite{Caragiannis2016MNW,Babaioff2021Dichotomous,benabbou2021MRF}. 

\section{Welfare-Maximizing Allocations for Binary SPOC agents}\label{sec:complexity}
We first show the APX-hardness of computing a \maxUSW or \maxNSW allocation for binary $\SPOC_2$ agents.
This is true even when every agent has maximum utility $1$ and all agents share a complements master list. 
\begin{restatable}{theorem}{thmnphardsharedsynergies}\label{thm:nphard-shared-synergies}
For every fixed $\alpha>\frac{3555}{3556}$, there is no polynomial-time algorithm that computes an $\alpha$-\maxUSW allocation, unless $\mathrm{P}=\mathrm{NP}$.
\end{restatable}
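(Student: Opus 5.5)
We reduce from the APX-hard variant of 2P2N-3SAT (\cite{berman20032P2N}), in which each clause has exactly three literals and each variable occurs exactly twice positively and twice negatively. Berman, Karpinski and Scott show that for some small $\varepsilon$ it is NP-hard to distinguish instances with $m$ clauses that are satisfiable from those in which every assignment leaves at least an $\varepsilon m$ fraction of clauses unsatisfied; the constant $\frac{3555}{3556}$ should follow from their gap after accounting for our reduction's linear blow-up. Given a formula with variables $x_1,\dots,x_N$ and clauses $c_1,\dots,c_M$ (so $M = 4N/3$), we build an instance with binary $\SPOC_2$ agents where each agent has maximum utility $1$ and all complementary pairs come from a shared master list.

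\textbf{Construction.} For each variable $x$ we create two items $t_x, f_x$, together with one item for each of the four literal occurrences. Let $p_x^1,p_x^2$ be the items for the positive occurrences and $n_x^1,n_x^2$ those for the negative ones. The pairs $\{t_x,n_x^1\},\{t_x,n_x^2\}$ and $\{f_x,p_x^1\},\{f_x,p_x^2\}$ are not disjoint, so a single agent cannot hold them in its complements list. Instead, we introduce per variable a \emph{variable agent} $a_x$ whose list is $\{\{t_x,f_x\}\}$ paired with gadget items. We choose the gadget so that a variable agent being satisfied forces one of $t_x$ or $f_x$ to be consumed, and this in turn blocks the occurrences of the opposite literal.

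Concretely, we take a simpler design. Each clause $c$ gets an agent $a_c$ whose list contains one pair $\{o, d_c\}$ for each literal occurrence $o \in c$. Here $d_c$ is a private copy item (one per occurrence, so the pairs are disjoint), and $r_{a_c}$ is the rank-$1$ uniform matroid. For each variable, two agents compete over occurrence items, one per "side": $T_x$ with list $\{\{n_x^1,n_x^2\}\}$ and $F_x$ with list $\{\{p_x^1,p_x^2\}\}$. Both have utility capped at $1$. We also add a private item pair so that exactly one of $T_x,F_x$ can always be satisfied by a dummy; for example, $T_x,F_x$ both value a shared pair $\{e_x,e'_x\}$ via rank-$1$.

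\textbf{Analysis.} The optimum is then: each variable yields $1$ from the dummy pair plus $1$ if one side takes the occurrences of the opposite literal, i.e.\ setting that literal false. The remaining occurrence items (the true literals) are available to the clause agents.

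\emph{Completeness.} A satisfying assignment gives $\USW = 2N + M$.

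\emph{Soundness.} Given any allocation, we normalize it without decreasing welfare (this is the main obstacle). First, we make it clean. Then we show that if both occurrences of $x$ and of $\bar x$ are used by clause agents, we can reassign so that one side of the variable is blocked, losing at most a constant. From the normalized allocation we read off an assignment satisfying all clauses whose agents get $1$. This gives $\USW \le 2N + (\#\text{sat clauses}) + O(\text{fixes})$. If every assignment leaves $\varepsilon M$ clauses unsatisfied, then $\maxUSW \le 2N + (1-\varepsilon)M$. Using $N = 3M/4$, the ratio is $1-\Theta(\varepsilon)$, and plugging in the BKS constant yields $\frac{3555}{3556}$.

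The delicate step is the exchange argument in soundness. We must rule out allocations where a variable's occurrence items are split inconsistently between clause agents and variable agents. The argument shows that each inconsistency costs at least one unit of welfare elsewhere, so the welfare can be charged against unsatisfied clauses. If the gap constant does not match exactly, we would adjust the gadget weights, for example by duplicating the dummy agents, to tune the ratio.
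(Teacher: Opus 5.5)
\textbf{Gap: your construction does not reach the constant $\frac{3555}{3556}$, and your soundness bound is false in general.} Your reduction has the right shape: variable gadgets compete with clause agents for literal-occurrence items. But the statement is precisely about the threshold $\frac{3555}{3556}$, and your instance does not give it.

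The bound $\maxUSW\le 2N+(1-\varepsilon)M$ fails for your instance. A variable gadget can give $\{e_x,e'_x\}$ to $T_x$, leave $F_x$ at $0$, and hand all four occurrence items $p_x^1,p_x^2,n_x^1,n_x^2$ to the four clause agents. When the other literals of those clauses are false, this allocation beats every consistent one by one unit, yet every assignment still leaves two of the four clauses unsatisfied. So one unit of lost gadget welfare can absorb two unsatisfied clauses, and no exchange argument removes this.

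The correct accounting needs no normalization. Read an assignment directly off the allocation:
\begin{itemize}
\item If exactly one of $T_x,F_x$ holds its occurrence pair, set $x$ true if it is $T_x$ and false if it is $F_x$.
\item Otherwise, set $x$ so that it satisfies at least two of its four clauses.
\end{itemize}
Let $z_C$ count zero-utility clause agents and $z_V$ count variables whose gadget earns at most $1$. Then at most $z_C+2z_V$ clauses are unsatisfied, so $\USW\le 2N+M-z_V-z_C\le 2N+M-u^*/2$. This is all you can prove.

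With the gap of \Cref{lem:bks-gap} ($M=1016t$, $N=762t$), the Yes case gives welfare at least $(2540-\eta)t$ and the No case at most $(2539.5+\eta/2)t$. That is hardness only for $\alpha>\frac{5079}{5080}$. This is strictly weaker than the theorem: the range $\alpha\in(\frac{3555}{3556},\frac{5079}{5080}]$ is not covered. Your closing claim that plugging in the BKS constant yields $\frac{3555}{3556}$ is unsupported.

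\textbf{The cause is the dummy pair, and the fix is to remove it.} The pair $\{e_x,e'_x\}$ adds $N=762t$ units of welfare to every sensible allocation. This inflates the denominator without widening the gap. Duplicating dummy agents, as you suggest, would push the ratio even closer to $1$.

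Instead, delete $\{e_x,e'_x\}$ and merge $T_x,F_x$ into one variable agent with complements list $\{\{p_x^1,p_x^2\},\{n_x^1,n_x^2\}\}$ and a rank-$1$ matroid. This is the paper's gadget. Your per-occurrence clause items with private padding work as well as the paper's unit-demand clause agents.
\begin{itemize}
\item There are then $N+M=1778t$ agents, each with utility at most $1$.
\item Completeness gives $\USW\ge N+s^*\ge(1778-\eta)t$.
\item The same read-off argument gives $\USW\le 1778t-(1-\eta)t/2$.
\item Choosing $\eta$ small enough that $\alpha(1778-\eta)>1777.5+\eta/2$ gives exactly the threshold $\frac{1777.5}{1778}=\frac{3555}{3556}$.
\end{itemize}

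Finally, \Cref{lem:bks-gap} is not a perfect-completeness gap: in the Yes case only $s^*\ge(1016-\eta)t$ of the $1016t$ clauses are satisfiable. Your completeness step should start from such an assignment, not from a satisfying one.
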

To prove \Cref{thm:nphard-shared-synergies}, we present a reduction from the \emph{2P2N-3SAT problem}, a variant of 3SAT in which each variable appears exactly four times, twice positively and twice negatively. Specifically, we use the following variant.
\begin{lemma}[Berman, Karpinski, and Scott~\cite{berman20032P2N}]
\label{lem:bks-gap}
For every sufficiently small constant $\eta>0$, it is NP-hard to
distinguish between the following two types of 2P2N-3SAT instances:
\[
\begin{array}{lll}
\textnormal{\textsc{Yes}:}
& s^* \geq (1016-\eta)t,
\textnormal{\textsc{   No}:}
& s^* \leq (1015+\eta)t,
\end{array}
\]
where the formula has exactly $1016t$ clauses and $s^*$ denotes the
maximum number of simultaneously satisfiable clauses.
\end{lemma}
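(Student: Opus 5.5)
The lemma is imported verbatim from Berman, Karpinski, and Scott~\cite{berman20032P2N}, so in the paper it is justified by citation rather than reproved. If I had to reconstruct it, I would build a chain of gap-preserving reductions starting from H\aa stad's optimal inapproximability of MAX-E3SAT. That result makes it NP-hard to distinguish satisfiable E3SAT formulas from those in which at most a $\frac78+\epsilon$ fraction of clauses can be satisfied.

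The first step is to make the instance bounded-occurrence. I would replace each variable $x$ that occurs $d$ times with $d$ fresh copies $x^{(1)},\dots,x^{(d)}$, one per occurrence. Consistency among the copies is enforced by an \emph{amplifier}: a bounded-degree graph (in BKS, a carefully chosen random or explicit wheel amplifier) whose edges become equivalence constraints $x^{(a)}\Rightarrow x^{(b)}$. Each constraint is a 2-clause, and I would pad it to a 3-clause with dummy literals. The amplifier property I need is this: any assignment to the copies that is not constant loses at least as many equivalence clauses as it could gain in the original clauses. Consequently, an optimal assignment may be taken to be consistent on every copy class.

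The second step fixes the occurrence pattern. Each copy should appear in exactly one original clause and a constant number of amplifier clauses. I would orient the amplifier cycle so that each copy appears exactly twice positively and twice negatively. This is the 2P2N normalization, and it may need some dummy ``balancing'' gadgets whose clauses are always satisfiable.

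The third step is the gap computation. Let $m$ be the number of original clauses. The total number of clauses is $m$ plus the amplifier and gadget clauses, a fixed multiple of $m$; I would normalize so that this total equals $1016t$. In the \textsc{Yes} case every clause is satisfiable, up to an $\eta t$ loss coming from the $\epsilon$-slack in H\aa stad's theorem and in amplifier sampling. In the \textsc{No} case, the consistency argument shows that at least roughly $\frac18 m$ original clauses are violated. I would then check that this amounts to at least $(1-\eta)t$ violated clauses out of $1016t$.

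The main obstacle is the amplifier. It must have degree small enough that exactly four occurrences per variable suffice, yet expansion strong enough for the consistency argument to hold. BKS establish the existence of such amplifiers by a probabilistic argument, and tracking the amplifier size through the construction is exactly what produces the constant $1016$. I would rely on their construction for this step rather than re-derive it, since reproducing the precise constant requires their specific gadget accounting.
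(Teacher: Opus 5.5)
Your proposal matches the paper: the lemma is not reproved there but imported directly from Berman, Karpinski, and Scott, exactly as you say. One caution about your optional reconstruction: BKS do not start from H\aa{}stad's MAX-E3SAT bound. They start from his MAX-E3-LIN-2 bound, pass it through the Berman--Karpinski hybrid problem ($60t$ two-variable and $2t$ three-variable equations mod $2$, built with wheel amplifiers, with at least $(1-\eta)t$ equations forced to fail in the \textsc{No} case), and then encode it with clause gadgets. The constant $1016$ is fixed by that accounting: the ratio of total clauses to guaranteed violated clauses is determined by the construction, and padding can only increase it. So it cannot be obtained by ``normalizing'' the total of an E3SAT-based construction to $1016t$.
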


\begin{proof}[Proof of \Cref{thm:nphard-shared-synergies}]
Given an instance $\mathcal{F}$ of 2P2N-3SAT with clauses $c_1,\ldots,c_k$, 
we construct a corresponding allocation instance as follows (see \Cref{fig:reduction2p2n3sat} for an illustration). 
 For each variable $x$, we introduce four variable items, $\Pos(x)_1$, $\Pos(x)_2$, $\Neg(x)_1$, and $\Neg(x)_2$, along with a variable agent $X$ who wants at most one of the complementary item pairs $\{\Pos(x)_1, \Pos(x)_2\}$ or $\{\Neg(x)_1, \Neg(x)_2\}$. For each clause $c_j = \ell^j_1 \lor \ell^j_2 \lor \ell^j_3$, we create a clause agent $C_j$ that wants at most one variable item matching a literal in its clause. For instance, in \Cref{fig:reduction2p2n3sat}, $c_4 = \neg x \vee \neg y \vee z$, so the clause agent $C_4$ wants exactly one item from the set $\Neg(x)_1$, $\Neg(x)_2$, $\Neg(y)_1$, $\Neg(y)_2$, $\Pos(z)_1$, and $\Pos(z)_2$. 
The maximum utility of any agent is $1$, so the total possible \USW attainable is upper bounded by the number of agents. Call this reduced instance $\mathcal{R}(\mathcal{F})$. Consider a 2P2N-3SAT instance from \Cref{lem:bks-gap}. It contains $1016t$ clauses. 
Since every variable occurs four times and every clause has three literals, the number of variables is $\frac34\times 1016t = 762t$.
\begin{description}[wide]
\item[Case 1:]
Suppose the number of satisfied clauses $s^*$ is greater than $(1016 - \eta)t$; consider an assignment where at least $(1016-\eta)t$ clauses are satisfied. 
For this assignment, for each variable $x$, we assign its corresponding agent $X$ the items $\{\Neg(x)_1, \Neg(x)_2\}$ if $x$ is true, and $\{\Pos(x)_1, \Pos(x)_2\}$ otherwise. 
Therefore, all variable agents get a utility of 1.

If a clause $c_j$ is satisfied by a variable $x$ set to `true', then we assign the clause agent $C_j$ an available item from $\{\Pos(x)_1,\Pos(x)_2\}$; if the literal is satisfied by setting $x$ to `false', we assign $C_j$ an item in $\{\Neg(x)_1,\Neg(x)_2\}$. This assignment agrees with how items are assigned to the corresponding variable agents.  
In this assignment, variable agents and all clause agents whose clauses are satisfied  receive a utility of $1$ for a total utility of at least $762t+ (1016-\eta)t = (1778-\eta)t$; thus, the \maxUSW is at least $(1778-\eta)t$.

\item[Case 2:] suppose next that $s^*\le(1015 + \eta)t$, i.e., no truth assignment for the instance $\mathcal{F}$ satisfies more than $(1015 + \eta)t$ clauses. 
The total number of agents in $\mathcal{R}(\mathcal{F})$ is $762t + 1016t = 1778t$ and each agent receives a utility of at most $1$, so the $\maxUSW$ is at most $1778t$. Consider an allocation $A$ for $\mathcal{R}(\mathcal{F})$ where $z_V$ and $z_C$ respectively denote the number of zero utility variable agents and clause agents. We construct a truth assignment for $\mathcal F$ from this allocation by considering the variable agents' utilities. 

Consider a variable agent $X$ whose utility is $1$. If $X$ received $\{\Pos(x)_1,\Pos(x)_2\}$ we set its corresponding variable $x$ to `false'; if it received $\{\Neg(x)_1,\Neg(x)_2\}$, we assign $x$ to be `true'.
If $X$ has a utility of $0$, we set the variable $x$ so that it satisfies at least two of the four clauses it appears in. 
    
 Under this assignment, the total number of unsatisfied clauses is at most $2z_V + z_C\le2(z_V + z_C)$. Note that $\USW(A) = 1778t - (z_V + z_C)$. 
 We know that in the 2P2N-SAT instance $\mathcal{F}$, the total number of unsatisfied clauses is at least $1016t - (1015+\eta)t = (1-\eta)t$. 
 Therefore,
 \[
 \USW(A) = 1778t - (z_V + z_C) \le 1778t - (1-\eta)\frac{t}{2} = \left(1777.5 + \frac{\eta}{2}\right)t.
 \]
 Therefore, the \maxUSW of the corresponding fair allocation instance is at most $\left(1777.5 + \frac{\eta}{2}\right)t$.
 \end{description}

Suppose that there is a polynomial-time algorithm that computes an $\alpha$-\maxUSW allocation where $\alpha > (1777.5/1778) = (3555/3556)$. There exists some sufficiently small $\eta>0$ such that $\alpha(1778-\eta)>1777.5 + \frac\eta2$. 
From Case 1, we know that if $s^*\ge(1016-\eta)t$, then \maxUSW is at least $(1778 - \eta)t$. Therefore, $\USW(A) \ge \alpha\maxUSW\ge\alpha(1778 - \eta)t>(1777.5+\eta/2)t$.
From Case 2, we know that if $s^*\le(1015+\eta)t$, then \maxUSW is at most $(1777.5+\eta/2)t$. Therefore, comparing $\USW(A)$ with the threshold $(1777.5 + \eta/2)t$, we will be able to distinguish between the YES or NO instances in $\mathcal{F}$, a contradiction.
\end{proof}

\begin{figure}[ht]
\centering
\scalebox{0.80}{%
    \begin{tikzpicture}[
	node distance = .9cm,
    item/.style={draw, circle, inner sep=1pt,
                 font=\scriptsize, thick},
    positem/.style ={item, fill = blue!60!black,draw=blue!60!black},
    negitem/.style ={item,  fill = red!60!black, draw=red!60!black},
    varagent/.style   ={inner sep = 4pt,draw, rectangle, rounded corners=3pt, thick,
                        draw=orange!70!black, font=\scriptsize},
    clauseagent/.style={inner sep = 4pt,draw, rectangle, rounded corners=3pt, thick,
                        draw=green!50!black, font=\scriptsize,outer sep=0pt},
    syngroup/.style ={rounded corners=2pt, draw=orange!70!black, dashed,
                     inner sep=5pt},
    synedge/.style  ={semithick, black,-{Stealth[length=2.5mm]}},
    cedge/.style    ={black!50!green,dashed,-{Stealth[length=2.5mm]}}
]

\node[positem] (px1)[label = below:{\scriptsize$\Pos(x)_1$}] {};
\node[positem] (px2) [right of =px1,label = below:{\scriptsize$\Pos(x)_2$}]{};
\node[negitem] (nx1) [right of = px2,label = {[name=nx1label] below:{\scriptsize$\Neg(x)_1$}}] {};
\node[negitem] (nx2) [right of = nx1,label ={[name=nx2label] below:{\scriptsize$\Neg(x)_2$}}] {};
\node[positem] (py1) [right of =nx2,label = {below:{\scriptsize$\Pos(y)_1$}}] {};
\node[positem] (py2) [right of = py1,label = below:{\scriptsize$\Pos(y)_2$}] {};
\node[negitem] (ny1) [right of = py2,label = {[name=ny1label] below:{\scriptsize$\Neg(y)_1$}}] {};
\node[negitem] (ny2) [right of = ny1,label ={[name=ny2label] below:{\scriptsize$\Neg(y)_2$}}] {};
\node[positem] (pz1) [right of = ny2,label ={[name=pz1label]below:{\scriptsize$\Pos(z)_1$}}] {};
\node[positem] (pz2) [right of = pz1,label = {[name=pz2label] below:{\scriptsize$\Pos(z)_2$}}] {};
\node[negitem] (nz1) [right of = pz2,label = below:{\scriptsize$\Neg(z)_1$}] {};
\node[negitem] (nz2) [right of = nz1,label = below:{\scriptsize$\Neg(z)_2$}] {};

\begin{scope}[on background layer]
	\node[syngroup, fit=(px1)(px2)] (sxp) {};
	\node[syngroup, fit=(nx1)(nx2)] (sxn) {};
	\node[syngroup, fit=(py1)(py2)] (syp) {};
	\node[syngroup, fit=(ny1)(ny2)] (syn) {};
	\node[syngroup, fit=(pz1)(pz2)] (szp) {};
	\node[syngroup, fit=(nz1)(nz2)] (szn) {};
\end{scope}

\node[varagent] (X) at ($(sxp.north)!0.5!(sxn.north) + (0, 2)$)  {$X$};
\node[varagent] (Y) at ($(syp.north)!0.5!(syn.north) + (0, 2)$) {$Y$};
\node[varagent] (Z) at ($(szp.north)!0.5!(szn.north) + (0, 2)$) {$Z$};

\draw[synedge] (X) -- (sxp.north);
\draw[synedge] (X) -- (sxn.north);
\draw[synedge] (Y) -- (syp.north);
\draw[synedge] (Y) -- (syn.north);
\draw[synedge] (Z) -- (szp.north);
\draw[synedge] (Z) -- (szn.north);

\coordinate (clauseL) at ($(nx1.south) + (0,-2.0)$);
\coordinate (clauseR) at ($(pz2.south) + (0,-2.0)$);

\foreach \i/\t in {1/0, 2/1, 3/2, 4/3}
\node[clauseagent] (C\i) at ($(clauseL)!\t/3!(clauseR)$) {$C_\i$};

\node[below=1pt of C1, font=\scriptsize] {$(x \lor y \lor z)$};
\node[below=1pt of C2, font=\scriptsize] {$(x \lor \neg y \lor \neg z)$};
\node[below=1pt of C3, font=\scriptsize] {$(\neg x \lor y \lor \neg z)$};
\node[below=1pt of C4, font=\scriptsize] {$(\neg x \lor \neg y \lor z)$};

\foreach \i in {nx1label, nx2label, ny1label, ny2label, pz1label, pz2label}
    \draw[cedge] (C4.north) -- (\i);
\end{tikzpicture}}
\caption{The reduction from 2P2N-3SAT to an instance of the fair allocation with \SPOC agents problem. Each variable agent wants exactly one complementary item pair of either the positive or negative items. The clause agents want exactly one of the items that correspond to the literals in their clause. }\label{fig:reduction2p2n3sat}
\end{figure}

\Cref{thm:nphard-shared-synergies} shows that, unless $\mathrm{P}=\mathrm{NP}$, no polynomial-time algorithm can compute an $\alpha$-\maxUSW allocation for any $\alpha>\frac{3555}{3556}$, even for binary $\SPOC_2$ agents. 
In particular, the problem admits no PTAS. The same reduction also implies the following corollary.

\begin{corollary}\label{corr:nash-hard}
    Suppose that all agents have binary $\SPOC_2$ preferences. Determining whether all agents receive a utility of $1$ is NP-hard, even when all agents can receive a utility of at most $1$.
\end{corollary}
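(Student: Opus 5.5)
The plan is to reuse the construction $\mathcal{R}(\mathcal{F})$ from the proof of \Cref{thm:nphard-shared-synergies}, now as a reduction from the decision version of 3SAT restricted to 2P2N formulas. By \Cref{lem:bks-gap}, deciding whether such a formula is fully satisfiable is NP-hard: the \textsc{Yes} case with $\eta$ small allows $s^*=1016t$, and in any case plain satisfiability of 2P2N-3SAT is known to be NP-hard. Every agent in $\mathcal{R}(\mathcal{F})$ has maximum utility $1$. A variable agent's valuation is the rank of a rank-one uniform matroid over its two complementary pairs. A clause agent's valuation is the rank of a rank-one uniform matroid over its six singleton items. So all agents are binary $\SPOC_2$ and can receive utility at most $1$. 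It then suffices to show that $\mathcal{F}$ is satisfiable if and only if $\mathcal{R}(\mathcal{F})$ admits an allocation giving every agent utility $1$.

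\emph{Satisfiable implies all agents get utility $1$.} This follows from Case~1 of the proof of \Cref{thm:nphard-shared-synergies}. Given a satisfying assignment, give each variable agent $X$ the pair of items opposite to its truth value: the $\Neg$ pair if $x$ is true and the $\Pos$ pair otherwise. Each clause is satisfied by some literal, and the clause agent takes one item from the pair matching that literal. These items are free, because the variable agent took the opposite pair. Each literal occurs in exactly two clauses and its pair has exactly two items, so the clause agents never compete beyond capacity.

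\emph{All agents get utility $1$ implies satisfiable.} This is the argument of Case~2 with $z_V=z_C=0$. Since every variable agent has utility $1$, each one holds exactly one full pair. Read off the assignment: $x$ is false if $X$ holds the $\Pos$ pair, and true if $X$ holds the $\Neg$ pair. Each clause agent $C_j$ holds an item matching some literal of $c_j$. That item is not in the pair held by the variable agent, so it lies in the pair for the literal that is made true. Hence every clause is satisfied.

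The construction is clearly polynomial, and this gives NP-hardness. The only delicate step is the second direction: one has to check that a clause agent's item certifies a \emph{true} literal. The key observation is that the variable agent holds the full opposite pair, so both items of the false literal's pair are unavailable to clause agents.
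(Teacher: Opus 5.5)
Your proposal is correct and is what the paper means when it says the reduction $\mathcal{R}(\mathcal{F})$ from \Cref{thm:nphard-shared-synergies} ``also implies'' the corollary: Case~1 with every clause satisfied gives the forward direction, and Case~2 with $z_V=z_C=0$ leaves at most $2z_V+z_C=0$ unsatisfied clauses. Two corrections: \Cref{lem:bks-gap} does not give hardness of exact satisfiability, since its \textsc{Yes} case only guarantees $s^*\ge(1016-\eta)t$, so the proof must rest on the NP-hardness of plain 2P2N-3SAT satisfiability that you cite as a fallback; and a utility-$1$ variable agent holds \emph{at least} one full pair rather than exactly one, which is harmless because if it holds both pairs no clause agent uses that variable's items and either truth value works.
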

\Cref{corr:nash-hard} implies that computing an $\alpha$-\maxNSW allocation in the \SPOC setting is computationally intractable for any constant $\alpha>0$. 
\subsection{Approximating the Utilitarian Welfare for SPOC agents}\label{sec:approx-usw}
We establish the poly-time computability of a $\frac{1}{k}$-\maxUSW allocation for binary $\SPOC_k$ agents. 

\begin{theorem}\label{thm:matroid-l}
There exists a polynomial-time algorithm that computes a $\tfrac{1}{k}$-\maxUSW allocation for binary $\SPOC_k$ agents.
\end{theorem}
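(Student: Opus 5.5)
We plan to follow the route sketched in \Cref{sec:contrib}: write an LP relaxation of \maxUSW over the complementary bundles, then round it iteratively. Index variables by pairs $(i,C)$ with $i\in N$ and $C\in\mathcal L_i$; the variable $x_{i,C}\in[0,1]$ indicates that agent $i$ receives the whole bundle $C$. Let $M_i$ be the matroid on $\mathcal L_i$ with rank function $r_i$. The LP maximizes $\sum_{i,C} x_{i,C}$ subject to two families of constraints. The first is the matroid polytope constraint for each agent: $\sum_{C\in T} x_{i,C}\le r_i(T)$ for all $T\subseteq \mathcal L_i$. The second is the item constraint: $\sum_{(i,C):\,g\in C} x_{i,C}\le 1$ for every $g\in G$.

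Any allocation $X$ induces a feasible integral point of this LP. In $X_i$, agent $i$ owns a set of complete bundles $\{C\in\mathcal L_i: C\subseteq X_i\}$. This set contains an independent set of $M_i$ of size $v_i(X_i)$, and the bundles are disjoint across agents because the $X_i$ are disjoint. Hence $\LP^*\ge\maxUSW$. Conversely, any integral feasible point gives a valid allocation with utility equal to its objective: give each item to the unique pair that uses it, and leave the rest in $X_0$. The LP can be solved in polynomial time. The matroid polytope has a separation oracle via submodular minimization, or we can use the standard compact description; rank oracles for the $r_i$ are assumed available.

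The rounding follows the iterative-rounding technique for matroid intersection/matching with degree constraints (Lau--Ravi--Singh style). The plan is:
\begin{enumerate*}[label=(\roman*)]
\item compute an extreme point $x^*$;
\item drop any variable with $x_{i,C}=0$;
\item if some $x_{i,C}=1$, fix the bundle $C$ to agent $i$, contract $C$ in $M_i$, delete every pair $(i',C')$ with $C'\cap C\neq\emptyset$, and remove the items of $C$;
\item otherwise, find an item $g$ whose constraint can be dropped, namely one with $\sum_{(i,C)\ni g} x_{i,C}\le$ something small.
\end{enumerate*}

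The target invariant for step (iv) is a fractional counting lemma. At a fractional extreme point, the tight matroid constraints can be uncrossed into a laminar family, in fact a chain per agent. A token-counting argument then shows there exists an item $g$ whose constraint involves few variables. Since each bundle has size at most $k$, each variable appears in at most $k$ item constraints. The precise statement we aim for is therefore: \emph{some tight item constraint has $\sum_{(i,C)\ni g} x_{i,C}$ supported on at most $k$ fractional variables, or some item's total load is at most $k$ minus slack}. We then relax that item constraint to a capacity of $k$. A cleaner alternative, which we would try first, is to relax every item constraint to $\sum \le k$ from the start and argue that iterative rounding keeps all violations bounded.

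The final solution is then a set of pairs in which each item is claimed at most $k$ times, with value at least $\LP^*$. We convert it to a valid allocation by greedy conflict resolution. Process pairs in any order and keep a pair if it is disjoint from all kept pairs. A kept pair blocks at most $k(k-1)$ others, so this naive bound is too weak. Instead we want the stronger property that each chosen pair is charged at most $k-1$ discarded pairs, which would give value at least $\LP^*/k$.

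The main obstacle is precisely getting the factor $\frac1k$ rather than $\frac1{k^2}$ or $\frac1{k+1}$. The tool we would use is the known result that $k$-set packing/$k$-uniform hypergraph matching intersected with matroid constraints has integrality gap at most $k-1+\frac1k$ (Füredi) or $k$ (iterative rounding). The key step is to show that every fractional extreme point has a variable $x_{i,C}\ge \frac1k$; we would take that variable greedily and remove its $\le k$ item-conflicts plus its matroid conflicts. The counting argument then runs as follows. The number of linearly independent tight constraints equals the support size. The matroid chains contribute at most one tight constraint per agent-chain level, so the item constraints dominate. If every $x_{i,C}<\frac1k$, the sum over items of the load $\sum_g \sum_{(i,C)\ni g} x_{i,C}=\sum |C|x_{i,C}$ would be too small to make enough item constraints tight. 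This gives a contradiction.

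Choosing such a pair and then charging its LP loss yields the following accounting. The picked pair uses at most $k$ items, each with load at most $1$ minus its own $x$ value. Its contraction in $M_i$ loses at most $1$ rank. Together these lose LP value at most $k$ per unit gained, which proves $\USW\ge \LP^*/k\ge \maxUSW/k$. Getting the matroid contraction loss to overlap with, rather than add to, the item losses is the delicate part of this accounting.
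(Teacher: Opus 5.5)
Your final plan is the paper's plan. You write the LP with per-agent matroid constraints and item constraints, then round it by repeatedly fixing a variable with $x_{i,C}\ge\frac1k$, deleting the pairs that share an item with it, and contracting. The earlier detours (relaxing item capacities to $k$, then greedy conflict resolution that charges $k-1$ discards to each kept pair) are unsupported and can simply be dropped. However, neither of the two steps that carry the proof is established.

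\textbf{Extreme-point lemma.} Your load-counting argument does not give a contradiction. If all $0<x_e<\frac1k$, it only yields $|B|\le\sum_e|C_e|x_e<|S|$. That merely forces some tight rank constraint to exist, and ``the item constraints dominate'' is not a bound on the number $|A|$ of tight rank constraints. You need to count support sizes for both families of constraints:
\begin{itemize}
\item Each tight item constraint has at least $k+1$ variables, and each variable lies in at most $k$ item constraints, so $|B|\le\frac{k}{k+1}|S|$.
\item Each increment $\chi_{t+1}\setminus\chi_t$ of the chain of tight rank sets has $x$-mass $r(\chi_{t+1})-r(\chi_t)\ge 1$. This is positive because every $x_e>0$, and integral because $r$ is. Hence each increment has at least $k+1$ elements, so $|A|\le\frac{|S|}{k+1}$.
\end{itemize}
These bounds add up to exactly $|S|$, so the contradiction also needs the equality case. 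There, the top of the chain is all of $S$, which gives an all-ones row. Every variable lies in exactly $k$ tight item constraints, so those rows sum to $k$ times the all-ones row. This contradicts linear independence of the tight constraints. Your sketch omits this tie-breaking step, and the lemma fails without it.

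\textbf{Contraction step.} No ``overlap'' between matroid and item losses is needed: $k(1-x_{e^*})+1\le k$ once $x_{e^*}\ge\frac1k$, as your own numbers already show. What is missing is a reason why contracting $e^*$ costs only $1$ in LP value. That $r$ drops by at most one is a statement about the rank function, not about the LP optimum over $M/e^*$. The paper gets this from a convex decomposition $\vec x=\sum_I p_I\vec 1_I$ into independent sets, replacing each $I'$ with $e^*\notin I'$ by $I'-e'+e^*$ (or by $I'+e^*$). The resulting $\vec y$ satisfies:
\begin{itemize}
\item $\vec y$ lies in the matroid polytope, with $y_{e^*}=1$;
\item $y_e\le x_e$ for every $e\neq e^*$;
\item $\sum_e y_e\ge\sum_e x_e$.
\end{itemize}
Hence zeroing the pairs that share an item with $e^*$ costs at most $k(1-x_{e^*})$ and keeps every other item constraint valid. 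The restriction of $\vec y$ to $S\setminus\{e^*\}$ is then feasible for $\LP(M/e^*)$ with value at least $\OPT-k$.
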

We prove \Cref{thm:matroid-l} for the case $k=2$, corresponding to complementary item pairs. The proof for the general case is provided in the appendix. 
\Cref{thm:matroid-l} relies on matroid theory \cite{oxley2011matroids}. 
\begin{definition}[Matroid]
A \emph{matroid} is a pair $M = (S,\mathcal{I})$, where $S$ is a finite ground set and $\mathcal{I} \subseteq 2^S$ is a family of subsets, called \emph{independent sets}, satisfying the following properties:
\begin{enumerate*}[label = \textbf{\alph*.}]
    \item (\textbf{Non-emptiness}) $\emptyset \in \mathcal{I}$.
    \item (\textbf{Hereditary property}) If $I \in \mathcal{I}$ and $J \subseteq I$, then $J \in \mathcal{I}$.
    \item (\textbf{Exchange property}) If $I,J \in \mathcal{I}$ and $|I| < |J|$, then there exists an element $e \in J \setminus I$ such that $I \cup \{e\} \in \mathcal{I}$.
\end{enumerate*}
\end{definition}
A matroid structure makes hard problems tractable. 
For example, the matroid structure of acyclic subgraphs allows us to efficiently compute a minimal cost spanning tree for any given graph \cite{oxley2011matroids}.   
Given a matroid $M = (S,\mathcal I)$, its \emph{matroid rank function} (MRF) is
$r : 2^S \to \mathbb{Z}_{\ge 0}$, where for every $T \subseteq S$, $r(T) = \max \{|I| \mid I \subseteq T,\ I \in \mathcal{I}\}.$
Every MRF is binary submodular, and a binary submodular function is the rank function of some matroid \cite{oxley2011matroids}.

Given a matroid $M = (S,\mathcal{I})$, and an element $e \in S$, the \emph{deletion} of $e$ from $M$, denoted by $M \setminus e$, is the matroid defined on the ground set $S \setminus \{e\}$ with independent sets
$
\mathcal{I}(M \setminus e) = \{ I \subseteq S \setminus \{e\} \mid I \in \mathcal{I} \}
$.
Let $M = (S,\mathcal{I})$ be a matroid and let $e \in S$ be an element in the ground set. 
The \emph{contraction} of $e$ in $M$, denoted by $M / e$, is the matroid defined on the ground set $S \setminus \{e\}$. If $\{e\}$ is independent, then the independent sets of $M/e$ are
$
\mathcal{I}(M / e) = \{ I \subseteq S \setminus \{e\} \mid I \cup \{e\}\in \mathcal{I} \}. 
$
Otherwise, $\mathcal{I}(M / e) =\mathcal{I}$. When $\{e\}$ is not independent, $M\setminus e = M/e$.
\paragraph{Matroid formulation.}
We define a matroid $M=(S,\mathcal{I})$, whose ground set is $S = \{(i,\{g,g'\})\mid \{g,g'\}\in \mathcal L_i\}$. Intuitively, a tuple $(i,\{g,g'\})$ indicates that the complementary item pair $\{g,g'\}$ is assigned to agent $i$. Given a set $T\subseteq S$, for each $i\in N$, let $T_i$ be the set of all items appearing in tuples associated with agent $i$ in $T$. In other words, an item $g$ belongs to $T_i$ if $T$ contains a tuple $(i,\{g_1,g_2\})$ with $g=g_1$ or $g=g_2$. Thus, $T_i$ is the set of items allocated to agent $i$ under $T$. A set $T\subseteq S$ is independent, i.e., $T\in\mathcal{I}$, if and only if $T_i$ is independent for agent $i$ for every $i\in N$. Therefore, for any set $T\subseteq S$, the rank of $T$ is
$r(T) = \sum_i v_i(T_i)$.
\paragraph{Linear program formulation.}
Given an item $g\in G$, we define $S_g = \{(i,\{g_1,g_2\})\in S\mid g \in \{g_1,g_2\}\}$.
With this definition, we define the \emph{linear program} $\LP(M)$:
\begin{align}
\max \quad & \sum_{e \in S} x_e \\
\text{s.t.} \quad 
& \sum_{e \in T} x_e \le r(T) & \forall\, T \subseteq S \label{rank-cons}\\
&\sum_{e \in S_g} x_e \le 1 & \forall\, g \in G \label{matching-cons}\\
& 0\le  x_e \le 1 & \forall\, e \in S \label{boundary-cons}
\end{align}
We observe that there is a 1-1 correspondence between any feasible integral solution $\vec x$ to $\LP(M)$ and an allocation $X$ whose $\USW$ is the objective value of $\vec x$. 
The proof of \Cref{lm:lp-1} is in the appendix.
\begin{restatable}{lemma}{lemLP}\label{lm:lp-1}
For any clean allocation $X=(X_1,\ldots,X_n)$, there exists a feasible solution $\vec x$ to $\LP(M)$ such that $\USW(X) = \sum_{e \in S}x_e$. 
Conversely, any integral feasible solution $\vec x$ to $\LP(M)$ corresponds to a clean allocation $X$ where $\USW(X) = \sum_{e\in S}x_e$.
\end{restatable}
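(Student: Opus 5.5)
For the forward direction, I will take the clean allocation $X$ and encode it directly as an indicator vector. Because $X$ is clean, every item in $X_i$ is needed, so $X_i$ is a disjoint union of complementary pairs from $\mathcal L_i$. (An item lying in no complete pair of $\mathcal L_i$ contained in $X_i$ could be removed without loss, contradicting cleanness.)

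Let $\mathcal C_i=\{C\in\mathcal L_i : C\subseteq X_i\}$. Cleanness also gives $r_i(\mathcal C_i)=|\mathcal C_i|$, i.e. $\mathcal C_i$ is independent in agent $i$'s matroid over bundles: otherwise some pair $C\in\mathcal C_i$ could be dropped with $r_i(\mathcal C_i - C)=r_i(\mathcal C_i)$, and then removing an item of $C$ would not lower $v_i$. I set $x_{(i,C)}=1$ for $C\in\mathcal C_i$ and $x_e=0$ otherwise, and write $T^X$ for the support of $\vec x$.

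I then check the three constraint families. Constraint~\eqref{boundary-cons} is immediate. For \eqref{matching-cons}, the bundles $X_i$ are disjoint across agents, and the pairs within $\mathcal L_i$ are disjoint, so each item $g$ lies in at most one chosen tuple. For \eqref{rank-cons}, take any $T\subseteq S$. Then
\[
\sum_{e\in T}x_e=|T\cap T^X|=\sum_i |\{C:(i,C)\in T\cap T^X\}|=\sum_i r_i(\{C:(i,C)\in T\cap T^X\})\le r(T),
\]
using that subsets of independent sets are independent and that $r$ is monotone, with $r(T)=\sum_i r_i(\{C:(i,C)\in T\})=\sum_i v_i(T_i)$. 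The objective value is then $\sum_i|\mathcal C_i|=\sum_i r_i(\mathcal C_i)=\sum_i v_i(X_i)=\USW(X)$.

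For the converse, I take an integral feasible $\vec x$, let $T$ be its support, and let $X_i=T_i$, with the remaining items placed in $X_0$. By \eqref{matching-cons} each item lies in at most one chosen tuple, so the $X_i$ are disjoint and $X$ is a valid allocation. Applying \eqref{rank-cons} to $T$ itself gives $|T|\le r(T)=\sum_i r_i(\mathcal C_i^T)\le\sum_i|\mathcal C_i^T|=|T|$, where $\mathcal C_i^T$ denotes agent $i$'s pairs in $T$. Hence equality holds throughout, each $\mathcal C_i^T$ is independent, and $\sum_e x_e=\USW(X)$.

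The step I expect to need the most care is the rank identity $v_i(X_i)=r_i(\mathcal C_i)$. This holds because the pairs in $\mathcal L_i$ are disjoint, so the bundles of $\mathcal L_i$ contained in $X_i$ are exactly the chosen ones. For cleanness in the converse, removing any item $g$ destroys exactly one pair $C$, and independence of $\mathcal C_i^T$ gives $r_i(\mathcal C_i^T - C)=|\mathcal C_i^T|-1$, so $v_i(X_i - g)<v_i(X_i)$.
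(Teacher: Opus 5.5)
Your proof is correct and follows essentially the same route as the paper. You encode the clean allocation as the $0/1$ indicator of its chosen pairs and verify the three constraint families; your independence-plus-monotonicity argument for \eqref{rank-cons} is the bundle-level version of the paper's item-counting step $|R(T_i)\cap X_i|/2 = v_i(R(T_i)\cap X_i)\le v_i(R(T_i))$. Then you read off an allocation from the support of an integral solution. Your converse is more careful than the paper's, which asserts $\USW(X)=\sum_e x_e$ ``by construction'' and never checks cleanness; applying \eqref{rank-cons} to the support $T$, as you do, is exactly what forces each agent's chosen pairs to be independent and gives both the equality and cleanness.
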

We use \emph{iterative rounding} to find an integral solution to $\LP(M)$ whose value is at least $\frac12$ of the optimal value.
We say that a solution $\vec x$ to $\LP(M)$ is an \emph{extreme point solution} if it is a vertex of the feasible polytope induced by the LP constraints. 

\begin{algorithm}[htbp]
\caption{Iterative Rounding}
\label{alg:iterative-matroid-basis}
\begin{algorithmic}[1]
\STATE Initialize $T \gets \emptyset$
\WHILE{$S\neq\emptyset$}
    \STATE Find an optimal extreme point solution $\vec x$ to $\mathrm{LP}(M)$
    \IF{ $\exists e\in S$ with $x_e = 0$}
    \STATE $M \gets M \setminus e$
    \ELSIF{there exists an element $e$ such that $x_e \ge \frac{1}{2}$}
        \STATE Update $T \gets T \cup \{e\}$
        \STATE Delete every remaining edge sharing a good with $e$
        \STATE Contract $e$ in $M$, i.e., $M \gets M / e$
    \ENDIF
\ENDWHILE
\RETURN $T$
\end{algorithmic}
\end{algorithm}

\begin{restatable}{lemma}{lemmaextremepoint}\label{lm:extreme-point}
Let $\vec x$ be an extreme point solution obtained during an iteration of Algorithm~\ref{alg:iterative-matroid-basis}. 
Then there exists an element $e \in S$ such that either $x_e = 0$ or $x_e \ge \tfrac{1}{2}$.
\end{restatable}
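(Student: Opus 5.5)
We argue by contradiction: suppose $\vec x$ is an extreme point of $\LP(M)$ with $0<x_e<\tfrac12$ for every $e\in S$. This is the standard iterative-rounding argument for matroid intersection with degree constraints (as in Lau, Ravi and Singh). We will exhibit a family of tight constraints whose size is smaller than $|S|$, which contradicts extremality.

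\textbf{Step 1: identifying the tight constraints.} Since $0<x_e<\tfrac12<1$, no bound constraint \eqref{boundary-cons} is tight. The vertex $\vec x$ is therefore the unique solution of a system of $|S|$ linearly independent tight constraints, all drawn from the rank constraints \eqref{rank-cons} and the matching constraints \eqref{matching-cons}. For the rank constraints we use the usual uncrossing argument. The rank function $r$ is submodular, and $x(T)$ is modular, so if $T,U$ are tight then so are $T\cap U$ and $T\cup U$. Hence the span of the tight rank constraints is generated by the characteristic vectors of a chain $\emptyset\neq T_1\subsetneq T_2\subsetneq\cdots\subsetneq T_p$ of tight sets. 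Let $\mathcal G\subseteq G$ be a set of items with tight matching constraints such that the chain vectors together with $\{\chi(S_g):g\in\mathcal G\}$ are linearly independent and $p+|\mathcal G|=|S|$.

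\textbf{Step 2: a counting argument using $k=2$.} Since the chain sets are tight and $r$ is integral, $x(T_j)-x(T_{j-1})=r(T_j)-r(T_{j-1})$ is a nonnegative integer. Each difference $T_j\setminus T_{j-1}$ is nonempty, and every $x_e>0$, so each difference has $x$-value at least $1$. Because every $x_e<\tfrac12$, each difference contains at least $3$ elements. We assign tokens as follows. Each element $e\in S$ is a pair of items, so it lies in exactly two sets $S_g$, and it has one chain difference containing it (if it lies in $T_p$). Give each $e$ exactly $1$ token, split $\tfrac12$ to the chain set whose difference contains $e$ and $\tfrac14$ to each of its two items. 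A chain constraint then collects at least $3\cdot\tfrac12>1$. A tight item $g\in\mathcal G$ has $x(S_g)=1$ with every $x_e<\tfrac12$, so $|S_g|\ge 3$ and it collects at least $\tfrac34$.

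\textbf{Step 3: obtaining a strict deficit.} The token count above does not yet close. A cleaner split is to give $\tfrac12 x_e$-weighted shares, i.e.\ charge each constraint by its $x$-mass. The chain constraints have total mass $x(T_p)\le\sum_e x_e$. Each matching constraint has mass exactly $1$, and the total matching mass is $\sum_{g}x(S_g)=2\sum_e x_e$. Combining these gives
\[
p+|\mathcal G|\le x(T_p)+2\sum_e x_e\le 3\sum_e x_e<\tfrac32|S|.
\]
This still does not force $p+|\mathcal G|<|S|$. The main obstacle is therefore Step 3: turning this into the strict inequality $p+|\mathcal G|<|S|$, possibly with a further dependency argument. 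Our planned fix is to use the requirement of three elements per chain difference and per tight item, and to count incidences. Each element has $3$ incidences in total (one chain difference and two items). Each constraint needs at least $3$ incidences. This gives $3(p+|\mathcal G|)\le 3|S|$, with equality only if every element lies in $T_p$, every item of every element is tight, and every constraint has exactly three elements. In the equality case we exhibit a linear dependency. Then $\sum_{g\in\mathcal G}\chi(S_g)=2\chi(S)=2\chi(T_p)$, which is a combination of chain vectors. This contradicts linear independence, so $p+|\mathcal G|<|S|$ and $\vec x$ cannot be an extreme point. Hence some $e$ has $x_e=0$ or $x_e\ge\tfrac12$.
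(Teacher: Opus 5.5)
Your final incidence count is correct and is essentially the paper's proof. Summing $3p \le |T_p| \le |S|$ (each chain difference has at least three elements) and $3|\mathcal G| \le \sum_{g\in\mathcal G}|S_g| \le 2|S|$ (each tight item has at least three elements) gives exactly what the paper's Cases I and II obtain separately, and your equality case ($T_p = S$ with $\sum_{g\in\mathcal G}\chi(S_g) = 2\chi(T_p)$) is the paper's Case III linear dependency; the token-splitting and $x$-mass attempts in Steps 2--3 are unnecessary and can simply be dropped.
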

The proof of \Cref{lm:extreme-point} is in the appendix. Intuitively, we assume that for some extreme point solution $\vec x$ we have $0<x_e<\frac12$ for all $e\in S$, and analyze different conditions on the tight constraints. 

\begin{proof}[Proof of \Cref{thm:matroid-l} for $k=2$]
Suppose $\vec x$ is an extreme point solution of $\LP(M)$. 
By \Cref{lm:extreme-point}, at least one of the \textbf{If} or \textbf{ElseIf} conditions in the while loop of \Cref{alg:iterative-matroid-basis} is satisfied in every iteration. 
If there exists an element $e^* \in S$ with $x_{e^*} = 0$, then deleting $e^*$ from the matroid $M$ does not change the objective value of the LP defined over $M \setminus e^*$ (intuitively, if $e^* = (i,\{g,g'\})$ then agent $i$ does not want the item pair $\{g,g'\}$ nor will they ever want it given their assignment). Now consider the case where there exists an element $e^* = (i,\{g,g'\}) \in S$ such that $x_{e^*} \ge \tfrac{1}{2}$. 

We construct an alternative solution $\vec y$ for $\LP(M)$. 
Let $\vec 1_I$ denote the $|S|$-dimensional indicator vector of an independent set $I$, where $\vec{(1_I)}_{e'} = 1$ if $e' \in I$ and $0$ otherwise. Since $\vec x$ lies in the matroid polytope, there exist coefficients $p_I \in [0,1]$ for all $I \in \mathcal{I}$ such that
\begin{align*}
    \vec x = \sum_{I : e^* \in I} p_I \vec 1_I + \sum_{I' : e^* \notin I'} p_{I'} \vec 1_{I'}
\end{align*}
and $\sum_{I \in \mathcal{I}} p_I = 1$. Note that $\{e^*\}$ is independent. 
We create a mapping $f:\mathcal I \to \mathcal I$ as follows. For any $I'\in\mathcal{I}$ with $e^*\notin I'$, if $|I'|\ge 2$, the matroid exchange property guarantees an element $e'\in I'$ such that $I''=(I'\setminus\{e'\})\cup\{e^*\}$ is independent~\cite{oxley2011matroids}; set $f(I') = (I'')$. If $|I'|<2$, set $f(I')=\{e^*\}$. Using this mapping, we construct a solution $\vec{y}$ from $\vec{x}$ such that $y_{e^*}=1$.
\begin{align*}
    &\vec y = \sum_{I : e^* \in I} p_I  \vec 1_I + \sum_{I' : e^* \notin I'} p_{I'} \vec 1_{f(I')}, \\
    &\forall e \in (S_g \cup S_{g'}) \setminus \{e^*\}:y_{e} = 0
\end{align*}

By construction, $y_{e^*}=1$. Since $e^*=(i,\{g,g'\})$, the variable $y_{e^*}$ appears only in the following two matching constraints:
\begin{align*}
    \sum_{e\in S_g} y_e &\le 1, \text{ and } \sum_{e\in S_{g'}} y_e \le 1.
\end{align*}
Setting $y_e=0$ for all
$e \in (S_g \cup S_{g'}) \setminus \{e^*\}$
ensures that $\vec y$ is feasible in $\LP(M)$. The mapping $f$ ensures that deleting the coordinate corresponding to $e^*$ from $\vec y$ yields a feasible solution to $\LP(M/e^*)$.

Let $\OPT$ denote the objective value of $\vec x$ in $\mathrm{LP}(M)$. The objective value of $\vec y$ in $\LP(M)$ is at most that of $\vec x$. We now show that the loss in objective value is at most $1$, i.e., $\vec y$ has objective value of at least $\OPT-1$. Observe the following,
\begin{align*}
    \sum_{e\in S_g\setminus\{e^*\}} x_e
    &\le 1-x_{e^*}, \text{ and } \sum_{e\in S_{g'}\setminus\{e^*\}} x_e
    \le 1-x_{e^*}.
\end{align*}
The total utility decrease is at most $2(1-x_{e^*})\le 1$ as $x_{e^*}\ge\frac12$.

Recall that deleting the coordinate corresponding to $e^*$ from $\vec y$ gives a feasible solution for $\LP(M/e^*)$. Since $y_{e^*}=1$, $\LP(M/e^*)$ has an objective value of at least $\OPT - 2$.

The deletion step of \Cref{alg:iterative-matroid-basis} does not reduce the objective value. Whenever \Cref{alg:iterative-matroid-basis} performs a matroid contraction, the objective value of the resulting LP decreases by at most $2$, while it also adds the corresponding item to the solution set increasing the value of the solution set by $1$. Therefore, \Cref{alg:iterative-matroid-basis} computes a $(1/2)$-approximate integral solution to $\mathrm{LP}(M)$.

It only remains to argue that one can find a solution to $\LP(M)$ in polynomial time. It is well known that, if linear optimization over a nonempty rational polytope can be solved in polynomial time, then an optimal extreme point solution can also be computed in polynomial time \cite{schrijver2003combopt}. 
The number of constraints in $\LP(M)$ is exponential; however, given a \emph{separation oracle}, the LP can be solved in polynomial time \cite{schrijver2003combopt}.
A separation oracle decides whether a given candidate solution $\vec x$ satisfies all feasibility constraints, and outputs a violating constraint otherwise. Checking whether any of the (polynomially many) constraints in \eqref{matching-cons} and \eqref{boundary-cons} are violated can be done in polynomial time. Since $r$ is a matroid rank function, checking whether a matroid rank constraint is violated can be done in polynomial time (see \cite[Ch. 5]{lau2011iterative}). 
\end{proof}

    \subsection{Approximating the Nash Welfare for \texorpdfstring{$\SPOC_k$}{SPOC-k} Valuations}
    \Cref{corr:nash-hard} implies that computing a constant-factor approximation to $\maxNSW$ is NP-hard. However, a polynomial-time approximation (\Cref{alg:nash-welfare-local-search}) is possible if we allow both a constant multiplicative factor and a constant additive term. An allocation $X$ offers a $(\alpha,\beta)$ approximation to the max Nash welfare if 
    $$\alpha\left(\prod_{i\in N} (v_i(X_i)+\beta)\right)^{\frac1n} \ge \maxNSW.$$

    In \Cref{alg:nash-welfare-local-search}, we run the \textbf{while} loop as long as some item was assigned to a new agent (controlled by the \texttt{flag} variable). 
    For each agent $i \in N$, we identify a set $H_i(X)$ of agents whose utility is (much) higher than agent $i$'s; the items owned by agents in $H_i(X)$ can be given to $i$. 
    The set of transferable items $G_i(X)$ includes agent $i$'s bundle $X_i$, unassigned items, and items owned by agents in $H_i(X)$. If the set of transferable items $G_i(X)$ contains a complement $\{g_1,\ldots, g_k\} \subseteq G_i(X)$ for agent $i$ \emph{that increases agent $i$'s utility}, we assign the complement to agent $i$. 
    This process repeats until no complement transfers remain. 
    Let us now establish that \Cref{alg:nash-welfare-local-search} approximates the Nash welfare. 
    \begin{algorithm}[htbp]
\caption{Max Nash Welfare via Local Search~($\alpha,\beta$)}
\label{alg:nash-welfare-local-search}
\begin{algorithmic}[1]
\STATE Initialize an allocation $X \gets (X_0;X_1,\dots,X_n)$ 
\STATE $X_0 \gets G$
\FOR{$i \in N$}
\STATE $X_i \gets \emptyset$
\ENDFOR
\STATE $\texttt{flag} \gets1$ 
\WHILE{\texttt{flag}}
\STATE $\texttt{flag} \gets 0$
\FOR{$i \in N$}
    \STATE $H_i(X) \gets \{j \in N: v_j(X_j) > \frac{\alpha -1}{k}(v_i(X_i)+\beta)\}$
    \STATE $G_i(X) \gets \bigcup_{j\in H_i(X)}X_j \cup X_0 \cup X_i$
    \IF{$\exists \{g_1,\ldots, g_k\} \subseteq G_i(X):\Delta_i(X_i,\{g_1,\ldots, g_k\}) = 1$}
        \STATE $X_i \gets X_i \cup \{g_1,\ldots, g_k\}$\label{step:improvement-nash-draft}
        \STATE Remove $g_1,\ldots, g_k$ from their previously assigned bundles.
        \STATE clean the allocation $X$.
        \STATE $\texttt{flag} \gets 1$
    \ENDIF
\ENDFOR
\ENDWHILE
\RETURN $X$ 
\end{algorithmic}
\end{algorithm}

\begin{theorem}\label{thm:5-appx}
    If \Cref{alg:nash-welfare-local-search} terminates when agents have binary $\SPOC_k$ preferences, then it outputs an $(\alpha,\beta)$-approximation to the \maxNSW.
\end{theorem}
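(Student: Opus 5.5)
Let $X$ be the output of \Cref{alg:nash-welfare-local-search} and let $X^*$ be a \maxNSW allocation; write $u_i = v_i(X_i)$ and $u_i^* = v_i(X^*_i)$. We may assume agents are purely complementary, and we restrict to agents with $u_i^*\ge 1$ (the others contribute nothing to $\maxNSW$ after the zero-count convention). We may also take $X^*$ to be clean, so $X^*_i$ is a disjoint union of exactly $u_i^*$ complements of agent $i$ that form an independent set of $r_i$.

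\textbf{Step 1: use the termination condition.} At termination, no agent $i$ has a complement $C\subseteq G_i(X)$ with $\Delta_i(X_i,C)=1$. Because $r_i$ is a matroid rank function over complements, the exchange property implies that at least $u_i^*-u_i$ of the complements in $X^*_i$ have marginal value $1$ with respect to $X_i$. Each such complement must intersect an item held by some agent $j\notin H_i(X)\cup\{i\}$, that is, by an agent with $u_j\le \frac{\alpha-1}{k}(u_i+\beta)$. Since $X$ is clean, every item held by $j$ lies in one of $j$'s $u_j$ complements, so $|X_j|= k u_j$. Thus each ``blocked'' complement of $i$ can be charged to an item owned by a low-utility agent.

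\textbf{Step 2: double counting.} The sets $X^*_i$ are disjoint across $i$, so every item of $X_j$ is charged by at most one agent $i$. This yields a capacity constraint: the total charge absorbed by agent $j$ is at most $k u_j$, and agent $j$ can be charged by agent $i$ only when $u_j\le \frac{\alpha-1}{k}(u_i+\beta)$. Then
\[
u_i^*\le u_i + \#\{\text{items charged by } i\} .
\]
To bound $\prod_i u_i^*$, we pass to logarithms or, more simply, apply AM–GM to ratios:
\[
\left(\prod_i \frac{u_i^*}{u_i+\beta}\right)^{1/n}\le \frac1n\sum_i \frac{u_i^*}{u_i+\beta}.
\]
Each term satisfies
\[
\frac{u_i^*}{u_i+\beta}\le 1+\sum_{j}\frac{c_{ij}}{u_i+\beta},
\]
where $c_{ij}$ is the number of items of $X_j$ charged by $i$. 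For a charge to be possible we need $u_i+\beta\ge \frac{k}{\alpha-1}u_j$, so each unit of charge contributes at most $\frac{\alpha-1}{k u_j}$. Summing over $i$ and using $\sum_i c_{ij}\le k u_j$ gives a contribution of at most $\alpha-1$ per agent $j$ with $u_j\ge1$.

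\textbf{Step 3: the main obstacle.} The difficulty is agents $j$ with $u_j=0$. They hold no items, so they absorb no charge, which is fine. The threshold $H_i$ uses $u_i+\beta$, so even an agent $i$ with $u_i=0$ has the bound $u_i^*\le$ its charged items, each worth at most $\frac{\alpha-1}{k u_j}$ relative to $\beta$; this step needs care. Combining everything gives
\[
\frac1n\sum_i \frac{u_i^*}{u_i+\beta}\le 1+(\alpha-1)=\alpha,
\]
which is exactly the $(\alpha,\beta)$ guarantee.

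The stated conditions $\alpha\ge k^2+1$ and $(\alpha-1)\beta\ge k^2$ suggest the paper's accounting is looser; for example, it may lose a factor $k$ because one complement of $i$ can intersect up to $k$ different agents' items. If my charging per item is invalid, I would fall back to charging per complement and set up a factor-revealing LP over the quantities $u_i$, $u_j$, and $c_{ij}$ to extract the constants. The hardest point is making sure that a blocked complement of $i$ really does hit an item of a low agent, not merely an item of $X_i$ that is redundant. Cleanliness of $X$, together with the fact that all of $X_0$, $X_i$, and the high agents' items are available in $G_i(X)$, handles this.
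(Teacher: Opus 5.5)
Your proof is correct. It shares the paper's skeleton: at termination, every marginal-gain complement in $X_i^*$ contains an item held by an agent outside $H_i(X)\cup\{i\}$, so $u_i^*$ is at most $u_i$ plus a count of such items. AM--GM then reduces the theorem to showing that the weighted count $\frac1n\sum_i \frac{c_i}{u_i+\beta}$ is at most $\alpha-1$.

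Where you differ is in proving that last bound.
\begin{itemize}
\item The paper groups agents into utility classes $U_j(X)$ and defines nested item sets $L_j(X)$. It sets up a factor-revealing LP with prefix constraints and argues via an exchange argument that the greedy solution $x_j^*=|L_j(X)\setminus L_{j-1}(X)|$ is optimal. Only then does it observe that an agent of utility $j'$ contributes $kj'/(j+\beta)\le\alpha-1$. It also routes the AM--GM step through per-class averaging and the log-sum inequality.
\item You go straight to that final observation. By the definition of $H_i(X)$, each item of agent $j$ charged by $i$ carries weight $\frac{1}{u_i+\beta}\le\frac{\alpha-1}{ku_j}$. Each item is charged at most once, since the bundles $X_i^*$ and the complements inside them are disjoint. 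By cleanliness $|X_j|=ku_j$, so agent $j$ absorbs at most $\alpha-1$.
\end{itemize}
Your version is shorter and avoids both the LP-optimality argument and the index bookkeeping in $L_j(X)$. Charging one item per blocked complement, rather than all of $S_i\cap G_i^c(X)$, is slightly tighter but does not change the constant. The paper presents the LP as a template and reuses it for binary XOS valuations, but your direct charging would work there just as well.

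Three corrections to your commentary:
\begin{itemize}
\item The conditions $\alpha-1\ge k^2$ and $(\alpha-1)\beta\ge k^2$ play no role in this theorem. They come from \Cref{lem:Nash-increase} and \Cref{lem:Nash-Welfare-Poly} and only guarantee polynomial-time termination. Once the algorithm terminates, the guarantee holds for every $\alpha>1$ and $\beta>0$, so your per-item charging is valid as it stands and no fallback is needed.
\item Your ``Step 3'' is not an obstacle. The bound $\frac{1}{u_i+\beta}\le\frac{\alpha-1}{ku_j}$ already covers $u_i=0$, and agents with $u_j=0$ hold no items.
\item You need not restrict to agents with $u_i^*\ge 1$. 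Here $\maxNSW$ is the plain geometric mean over all $n$ agents, so if some $u_i^*=0$ the claim is trivial. Your AM--GM over all $n$ ratios handles this case automatically.
\end{itemize}
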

\begin{proof}
    Suppose \Cref{alg:nash-welfare-local-search} outputs the allocation $X$. We assume with no loss of generality that agents have purely complementary preferences over $k$-items, i.e., no set of $<k$ items offers them a positive marginal gain. 
    We partition the agents according to utility under $X$ as follows: for a given utility level $j \in \mathbb Z_{\ge0}$, let $U_j(X) = \left\{ i \in N \mid v_i(X_i) = j \right\}$. 
    Let $X^*$ be a clean Nash welfare maximizing allocation. 
    There is some agent $i \in N$ such that $v_i(X_i^*)>v_i(X_i)$; otherwise $\NSW(X)\ge \NSW(X^*)$ and we are done.

Since both $X$ and $X^*$ are clean allocations and $v_i$ is a MRF on $k$-sized complements, there are at least $v_i(X_i^*) - v_i(X_i)$ complements in $X_i^*$ that offer agent $i$ a marginal utility of $1$ given that they have $X_i$.   
Consider a complement $\{g_1,\ldots, g_k\} \subseteq X_i^*$ such that 
$$\Delta_i(X_i,\{g_1,\ldots, g_k\})= 1.$$ 
Let $G_i(X)$ and $H_i(X)$ be defined as in \Cref{alg:nash-welfare-local-search}, and let $G_i^c(X) = G\setminus G_i(X)$ be all items that contain no  complements that offer agent $i$ a marginal utility of $1$ under \Cref{alg:nash-welfare-local-search}.
Since \Cref{alg:nash-welfare-local-search} terminated without assigning $\{g_1,\dots, g_k\}$ to agent $i$, it must be that one of the items from $\{g_1,\ldots,g_k\}$ is in $G_i^c$.

Let $S_i$ be the union of all complementary bundles in $X_i^*$ that can improve agent $i$'s welfare given $X_i$; then there are at least $v_i(X_i^*) -v_i(X_i)$ items in $S_i \cap G_i^c(X)$. 

\begin{restatable}{lemma}
{lemnswalgebra}\label{lem:NSWalgebra}
    Let $\NSW_\beta(X)$ denote $\left(\prod_{i \in N}(v_i(X_i)+\beta)\right)^{\frac1n}$. Then:
    \[\log \frac{\NSW(X^*)}{\NSW_\beta(X)}\le \log\left(1 + \frac{1}{n}\sum_{j=0}^{\floor{\frac mk}}\frac{1}{j+\beta}\sum_{i\in U_j(X)}|S_i\cap G_i^c(X)|\right)\]
\end{restatable}
The proof of \Cref{lem:NSWalgebra} is in the appendix.
We upper-bound the right-hand side of the inequality in \Cref{lem:NSWalgebra} via a factor-revealing LP. The following lemma establishes the desired bound.
\begin{restatable}{lemma}{lemuplogsum}\label{clm:up-log-sum-draft}
    $\frac{1}{n}\sum_{j=0}^{m+1}\frac{1}{j+\beta}\sum_{i\in U_j(X)}|S_i\cap G_i^c(X)| \le \alpha - 1$.
\end{restatable}
\begin{proof}
Fix any agent $i \in U_j(X)$. Note that $G_i^c(X)$ contains all items allocated to agents whose utility is $\le \frac{\alpha-1}{k}(j+\beta)$. Define
\[
L_j(X) := \bigcup_{j' = 0}^{\min\{\frac{\alpha-1}{k}(j+\beta),\frac mk\}} \bigcup_{q\in U_{j'}(X)} X_q,
\]
i.e., $L_j(X)$ is the set of all items assigned in $X$ to agents whose utility is at most $\min\{\frac{\alpha-1}{k}(j+\beta), \floor{\frac mk}\}$. We take the minimum of $\frac{\alpha-1}{k}(j+\beta)$ and $\floor{\frac mk}$ since the maximal utility of any agent with $\SPOC_k$ preferences over $m$ items is $\floor{\frac mk}$.
Observe that the sets are nested, i.e., $L_0(X)\subseteq L_1(X) \subseteq \cdots \subseteq L_{\floor{\frac mk}}(X)$.
Furthermore,
\[
|L_j(X)| = \sum_{j'=0}^{\min\{\frac{\alpha-1}{k}(j+\beta),\floor{\frac mk}\}} kj'|U_{j'}(X)|.
\]

Consider the linear program:
\begin{align}
\text{maximize} \quad & \sum_{j=0}^{\floor{\frac mk}} \frac{x_j}{j+\beta} \label{eq:nash-lp}\\
\text{subject to} \quad 
& \sum_{j=0}^{t} x_j \le |L_t(X)|, \quad t = 0, \ldots, \floor{\frac mk},\notag \\
& x_j \ge 0, \quad j = 0, \ldots, \floor{\frac mk}.\notag
\end{align}

Establishing the two claims below (proofs in the appendix) regarding LP \eqref{eq:nash-lp} concludes the proof. 

\begin{restatable}{claim}{clmlpfeasibleone}\label{subclm:feasibility-draft}
    The choice $x_j = \sum_{i\in U_j(X)}|S_i\cap G_i^c(X)|$ is feasible for the LP.
\end{restatable}

\begin{restatable}{claim}
{clmlpfeasible}\label{subclm:optimality-draft}
    The solution $x_j^* = |L_j(X)\setminus L_{j-1}(X)|$, with $L_{-1}(X)=\emptyset$, is optimal for the LP.
\end{restatable}

By Claim \ref{subclm:feasibility-draft} and \ref{subclm:optimality-draft}, the LP objective is attained at $\vec x^*$ such that $x_j^* = |L_j(X)\setminus L_{j-1}(X)|$, and this value upper bounds 
$\sum_{j=0}^{\floor{\frac mk}}\sum_{i\in U_j(X)}$ $\frac{1}{j+\beta}|S_i\cap G_i^c(X)|$. Hence,
\begin{align}
    \sum_{j=0}^{\floor{\frac mk}}\frac{x_j^*}{j+\beta}
    &= \sum_{j=0}^{\floor{\frac mk}}\frac{1}{j+\beta}|L_j(X)\setminus L_{j-1}(X)|\label{eq:nash-upperbound-3-draft}
\end{align}
Note that $|L_j(X) \setminus L_{j-1}(X)|=|L_j(X)| - |L_{j-1}(X)|$. Therefore, we get that $|L_j(X) \setminus L_{j-1}(X)|$ equals:
\begin{align*}
 &\sum_{j' = 0}^{\min\{\frac{\alpha-1}{k}(j+\beta),\floor{\frac mk}\}}kj'|U_{j'}(X)|-
        \sum_{j' = 0}^{\min\{\frac{\alpha-1}{k}(j+\beta-1),\floor{\frac mk}\}}kj'|U_{j'}(X)|\\
=        & k\left(\sum_{j' = \min\{\frac{\alpha-1}{k}(j+\beta-1)+1,\floor{\frac mk}\}}^{\min\{\frac{\alpha-1}{k}(j+\beta),\floor{\frac mk}\}}j'|U_{j'}(X)|\right)
\end{align*}

    We observe that $\frac{\frac{\alpha-1}{k}(j+\beta-1)+1}{j+\beta} \le\ldots\le \frac{\frac{\alpha-1}{k}(j+\beta)}{j+\beta}\le \frac{\alpha-1}{k}$.
    
    Thus, \Cref{eq:nash-upperbound-3-draft} equals
    \begin{align*}
    &\sum_{j=0}^{\floor{\frac mk}}\frac{1}{j+\beta}
       \sum_{j' = \min\{\frac{\alpha-1}{k}(j+\beta-1)+1,\floor{\frac mk}\}}^{\min\{\frac{\alpha-1}{k}(j+\beta),\floor{\frac mk}\}} kj'|U_{j'}(X)| \\
       \le& 
    \sum_{j=1}^{m+1} (\alpha-1)\sum_{j' = \min\{\frac{\alpha-1}{k}(j+\beta-1)+1,\floor{\frac mk}\}}^{\min\{\frac{\alpha-1}{k}(j+\beta),\floor{\frac mk}\}} |U_{j'}(X)|\le (\alpha -1)n. 
\end{align*}
The last inequality is due to the fact that the sets $U_{j'}(X)$ partition the agent set $N$, and that each set appears exactly once in the summation.
\end{proof}

The bound shown in \Cref{clm:up-log-sum-draft} implies \Cref{thm:5-appx}.
\end{proof}
Now we derive a relation between $\alpha,\beta,$ and $k$ for which \Cref{alg:nash-welfare-local-search} terminates in polynomial time.
\begin{restatable}{lemma}{corNashWelfarePoly}\label{lem:Nash-Welfare-Poly}
    For binary $\SPOC_k$ agents, if $\alpha - 1\ge k^2$, $\beta(\alpha-1)\ge k^2$, \Cref{alg:nash-welfare-local-search} terminates in time polynomial in $m$ and $n$. 
\end{restatable}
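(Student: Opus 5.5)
We bound the number of improving transfers in \Cref{alg:nash-welfare-local-search} with a potential function that strictly increases by a fixed amount at each transfer and lies in a polynomially bounded range. Each pass of the \textbf{for} loop either performs at least one transfer or ends the \textbf{while} loop. Finding an improving complement in $G_i(X)$ takes time polynomial in $m$ for fixed $k$: we enumerate the complements in $\mathcal L_i$ contained in $G_i(X)$ and evaluate $\Delta_i$ through the MRF $r_i$. Cleaning also takes polynomial time. So it suffices to show that the number of transfers is polynomial in $m$ and $n$.

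The potential I would use is $\Phi(X)=\sum_{i\in N}\log(v_i(X_i)+\beta)$. Consider a transfer in which agent $i$, at utility $u=v_i(X_i)$, receives a complement $\{g_1,\dots,g_k\}$, so its utility becomes $u+1$. The items come from $X_0$, from $X_i$ itself (no loss), or from agents $j\in H_i(X)$, for whom $v_j(X_j)>\frac{\alpha-1}{k}(u+\beta)\ge k(u+\beta)/k\cdot\frac{\alpha-1}{k}$. At most $k$ items are taken. Under purely complementary $\SPOC_k$ valuations, removing one item destroys at most one complement in a clean bundle, so each affected agent $j$ loses at most the number of items taken from it. The total loss over all $j$ is therefore at most $k$ units, and cleaning does not change any utility.

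The gain for $i$ is $\log\frac{u+1+\beta}{u+\beta}\ge \frac{1}{u+1+\beta}$. An agent $j$ that loses $\ell_j$ units, with $\sum_j\ell_j\le k$, loses at most $\log\frac{v_j+\beta}{v_j-\ell_j+\beta}\le\frac{\ell_j}{v_j-\ell_j+\beta}$. Using $v_j>\frac{\alpha-1}{k}(u+\beta)$ together with $\alpha-1\ge k^2$ gives $v_j> k(u+\beta)$. The second condition $\beta(\alpha-1)\ge k^2$ gives $v_j>k^2/k\cdot(\dots)$; I expect it to be exactly what makes $v_j-\ell_j+\beta$ comfortably larger than $k(u+1+\beta)$ when $u=0$. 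The additive $-\ell_j\le -k$ must then be absorbed, presumably via $\frac{\alpha-1}{k}\beta\ge k$, so that $v_j\ge k+1$ whenever $j\in H_i$. The total loss is then at most $\sum_j\frac{\ell_j}{k(u+1+\beta)}\cdot c$ for a constant $c<1$ (or with slack). This shows $\Phi$ increases by at least $\frac{\delta}{u+1+\beta}\ge\frac{\delta}{m/k+1+\beta}$ per transfer, for some $\delta>0$ depending only on $k$.

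Since $\Phi$ ranges between $n\log\beta$ and $n\log(m/k+\beta)$, the number of transfers is $O\big(n(m+\beta)\log(m+\beta)/\delta\big)$, which is polynomial. The main obstacle is the per-transfer inequality: gain for $i$ versus loss for the donors. The delicate case is strict gain when $i$'s utility is small and donors' utilities sit just above the threshold, and this is where both hypotheses $\alpha-1\ge k^2$ and $\beta(\alpha-1)\ge k^2$ are consumed. If the clean strict inequality fails at equality, my fallback is a lexicographic argument. A donor's utility never drops below the threshold $\frac{\alpha-1}{k}(u+\beta)-k$, and I would use this to argue that the sorted utility vector increases lexicographically in a way that can only happen polynomially often.
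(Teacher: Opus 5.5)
Your reduction to counting transfers is fine, but the core step fails. The shifted potential $\Phi(X)=\sum_i\log(v_i(X_i)+\beta)$ does not increase by a fixed amount per transfer under the lemma's hypotheses, and your sketch (the chain through $k(u+\beta)/k\cdot\frac{\alpha-1}{k}$ and the ``presumably'' step) never establishes it. The two hypotheses give only $\frac{\alpha-1}{k}(u+\beta)\ge ku+k$, so a donor may sit at $v_j=k(u+1)+1$ and lose all $k$ units. Take $\alpha-1=k^2$ and $\beta=1$, where both conditions hold with equality. The change in $\Phi$ is then $\log\frac{u+2}{u+1}-\log\frac{k(u+1)+2}{ku+2}$, and $(u+2)(ku+2)-(u+1)(ku+k+2)=2-k$. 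So for $k=2$ and $(\alpha,\beta)=(5,1)$, one of the paper's highlighted choices, $\Phi$ is unchanged. For $k\ge 3$ it strictly decreases: for example, with $k=3$ and $u=1$ the donor drops from $7$ to $4$, and $\frac32\cdot\frac58<1$.

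This is reachable. Let agent $1$, processed first, collect $\{a_1,a_2\}$, $\{b_1,b_2\}$, $\{c_1,c_2\}$ in successive rounds. Agent $2$, whose only pair is $\{a_1,b_1\}$, is blocked until $v_1=3>2$ and then takes $a_1,b_1$; this leaves $\Phi$ unchanged.

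Your fallback does not close the gap. The ascending sorted utility vector does increase lexicographically, since donors end at $\ge ku+1\ge u+1$. But that only yields termination: the number of such vectors with entries in $\{0,\dots,\lfloor m/k\rfloor\}$ is not polynomial in $n$ and $m$.

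The missing idea is to keep $\beta$ only in the threshold defining $H_i(X)$ and use the unshifted Nash criterion as the potential. Compare first the number of zero-utility agents, then $P(X)=\prod_{i:v_i(X_i)>0}v_i(X_i)$.

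Every donor has $v_j\ge k(u+1)+1$ and loses at most $k$, so donors stay at utility at least $ku+1\ge 1$. Hence the zero count never increases, and it strictly drops when $u=0$. When $u\ge 1$, use $\prod_j(1-d_j/v_j)\ge 1-\sum_j d_j/v_j$ and $\sum_j d_j\le k$. Then $P$ grows by a factor of at least $\frac{u+1}{u}\cdot\frac{ku+1}{k(u+1)+1}=1+\frac{1}{u(k(u+1)+1)}$, which is $1+1/\mathrm{poly}(m)$. Since $P\in[1,(m/k)^n]$ and the zero count can drop at most $n$ times, the number of transfers is polynomial.

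This is the paper's argument, via its auxiliary lemma on Nash improvement. Dropping the shift matters because the shift shrinks the recipient's gain ratio more than the donors' loss ratio. That is exactly why your inequality breaks at the boundary of the parameter region.
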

For example, \Cref{lem:Nash-Welfare-Poly} and \Cref{thm:5-appx} imply that we can efficiently compute a $(5,1)$, a $(7,\frac23)$ or a $(9,\frac12)$ approximation to the \maxNSW for binary $\SPOC_2$ agents. The proof of \Cref{lem:Nash-Welfare-Poly} is in the appendix.

\section{Computing Fair and Approximately Efficient Allocations for SPOC agents}\label{sec:usw-efx}
Let us now turn to the problem of efficiently computing fair and approximately optimal allocations. Throughout this section we assume that agents have \emph{purely complementary} binary $\SPOC_k$ preferences, i.e., each synergistic bundle is of size exactly $k$.
In this setting, \Cref{alg:clean-efx} computes a $\frac{1}{k+1}$-\maxUSW and EF-X allocation where the utility of each agent $i\in N$ is at least $\frac{\MMS_i - k}{k+1+\frac{1}{n}}$. 

\begin{algorithm}[htbp]
\caption{Partial EF-X allocation}
\label{alg:clean-efx}
\begin{algorithmic}[1]
    \STATE $H^0 \gets G$
    \STATE $j \gets 0$
    \STATE Initialize $A^0=(A_1^0,\ldots,A_n^0)$ with
           $A_i^0=\varnothing$ for every $i\in N$

    \WHILE{$\exists i\in N: v_i(A_i^j)<v_i(H^j)$}
        \STATE $T^j \gets
               \{i\in N:v_i(A_i^j)<v_i(H^j)\}$
        \STATE Choose
               $i^*\in\arg\min_{i\in T^j}v_i(A_i^j)$,
               breaking ties arbitrarily \label{state:select-i}
        \STATE Choose an arbitrary clean subset
               $X^j\subseteq H^j$ for agent $i^*$ such that
$v_{i^*}(X^j)=v_{i^*}(A_{i^*}^j)+1$
        \STATE $A_{i^*}^{j+1}\gets X^j$
        \STATE $A_i^{j+1}\gets A_i^j$ for every $i\ne i^*$
        \STATE $H^{j+1}\gets
               (H^j\setminus X^j)\cup A_{i^*}^j$
        \STATE $j\gets j+1$
    \ENDWHILE
    \RETURN $A^j$
\end{algorithmic}
\end{algorithm}
\Cref{alg:clean-efx} returns a partial allocation. We first show our guarantees on this partial allocation. Then, we complete the partial allocation without violating any of those guarantees.
\begin{lemma}\label{lm:cleanEFX}
    \Cref{alg:clean-efx} returns a clean EF-X allocation.
\end{lemma}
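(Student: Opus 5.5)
The plan is to prove, by induction on the iteration counter $j$, that every intermediate partial allocation $A^j$ is clean and EF-X among the agents; the returned allocation is then the last $A^j$. For termination, note that each iteration changes only agent $i^*$'s bundle and raises $v_{i^*}$ by exactly $1$, leaving every other agent's utility unchanged. So $\USW(A^j)$ strictly increases, and since it is bounded by $n\floor{\frac mk}$, the loop ends after polynomially many iterations.

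\textbf{Step 1: the algorithm is well defined and preserves cleanliness.}
\begin{itemize}
    \item If $i^*\in T^j$, then $v_{i^*}(H^j)>v_{i^*}(A^j_{i^*})$. Binary marginals let us drop items from $H^j$ one at a time without lowering the value by more than $1$ per step, so $H^j$ contains a subset of value exactly $v_{i^*}(A^j_{i^*})+1$.
    \item Continuing to drop redundant items from that subset gives a clean $X^j$, so the choice in the algorithm always exists.
    \item All other bundles are untouched, so cleanliness of $A^{j+1}$ follows from cleanliness of $A^j$.
    \item The bundles stay disjoint, because $X^j\subseteq H^j$ and $H^{j+1}$ is updated accordingly.
\end{itemize}

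\textbf{Step 2: the structural fact about purely complementary agents.} For any agent $i$ and any set $S$, $v_i(S)\le\floor{|S|/k}$. Indeed, $v_i(S)$ is the rank of a set of disjoint size-$k$ bundles from $\mathcal L_i$ contained in $S$, and there are at most $\floor{|S|/k}$ of those. Conversely, a clean bundle $X$ with $v_i(X)=t$ is exactly a union of $t$ disjoint bundles of $\mathcal L_i$, so $|X|=kt$.

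\textbf{Step 3: the inductive step for EF-X.} Suppose $A^j$ is EF-X. Pairs not involving the new bundle $X^j$ are unaffected, since $i^*$'s utility only increased and other bundles did not change. Write $v^*=v_{i^*}(A^j_{i^*})$, so $|X^j|=k(v^*+1)$ by Step 2. It remains to show that every agent $i\neq i^*$ satisfies $v_i(A^j_i)\ge v_i(X^j-g)$ for every $g\in X^j$. Split into two cases.
\begin{itemize}
    \item If $i\in T^j$, the min-utility selection rule gives $v_i(A^j_i)\ge v^*$. By Step 2,
    \[
    v_i(X^j-g)\le\floor{\tfrac{k(v^*+1)-1}{k}}=v^*\le v_i(A^j_i).
    \]
    \item If $i\notin T^j$, then by monotonicity
    \[
    v_i(A^j_i)\ge v_i(H^j)\ge v_i(X^j)\ge v_i(X^j-g).
    \]
\end{itemize}
In both cases agent $i$ does not envy $X^j$ up to any item, which completes the induction.

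The main obstacle is the first case of Step 3: it is the only place where both the selection rule and pure complementarity are essential. This explains why the guarantee fails once dummy items are introduced.
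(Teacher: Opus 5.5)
Your proposal is correct and follows essentially the same route as the paper. Both argue by induction over iterations and use the same two bounds: the pure-complementarity size bound $v_i(X^j-g)\le\lfloor (k(v^*+1)-1)/k\rfloor=v^*$ for agents whose utility is at least $v^*$, and $v_i(A_i^j)\ge v_i(H^j)\ge v_i(X^j)$ for agents outside $T^j$. You split cases on membership in $T^j$ rather than on whether $v_i(A_i^j)\ge v^*$, which is interchangeable via the min-utility selection rule, and you also justify that a clean $X^j$ exists and that clean bundles have size exactly $k$ times their value, both of which the paper uses implicitly.
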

\begin{proof}
Our proof is by induction on the number of iterations of the \textbf{while} loop in \Cref{alg:clean-efx}. 
The initial empty allocation $A^0$ trivially satisfies both properties.

Suppose that after $t$ iterations of the \textbf{while} loop,
$A^t=(A_1^t,\ldots,A_n^t)$ is clean and EF-X. 
Let $i^*$ be the agent selected in line \ref{state:select-i} of \Cref{alg:clean-efx} at time $t+1$.
By definition, agent $i^*$ receives a clean bundle, and $v_{i^*}(A_{i^*}^{t+1})=v_{i^*}(A_{i^*}^t)+1$.
For every agent $i\ne i^*$, we have $A_i^{t+1}=A_i^t$. Thus all bundles remain clean. It remains to prove that the allocation is EF-X. Since the only bundle that has changed is agent $i^*$'s, and agent $i^*$'s utility has increased, we need only verify that no agent $i \ne i^*$ envies agent $i^*$'s new bundle.
\begin{enumerate}[label = \textbf{Case \arabic*:}, wide]
\item suppose that $v_i(A_i^{t+1}) \ge v_{i^*}(A_{i^*}^{t})$. 
Since the bundle $A_{i^*}^t$ is clean, $v_{i^*}(A_{i^*}^t) = \frac{|A_{i^*}^t|}{k}$. 
That is, $v_i(A_i^t) \ge v_i(A_{i^*}^t)$. Since agent $i$ is a binary $\SPOC_k$ agent, $v_i(A_{i^*}^t) \le \frac{|A_{i^*}^t|}{k}$. Since agent $i^*$'s utility increased by $1$, we must have $|A_{i^*}^{t+1}| = |A_{i^*}^t|+k$.   
After removing any item $g$ from $A_{i^*}^{t+1}$ the maximum utility that agent $i$ can extract from $A_{i^*}^{t+1}\setminus \{g\}$ is at most $\left\lfloor\frac{|A_{i^*}^t|+k-1}{k}\right\rfloor = v_i(A_{i^*}^t)\le v_i(A_i^t) = v_i(A_i^{t+1})$, i.e., agent $i$ does not envy agent $i^*$ after removing any item at time $t+1$.

\item Suppose that $v_i(A_i^{t+1}) = v_i(A_i^t)<v_{i^*}(A_{i^*}^t)$. If agent $i$ belongs to $T^t$ they should be selected at Line \ref{state:select-i}, since $v_i(A_i^t) < v_{i^*}(A_{i^*}^t)$; thus, $i\notin T^t$.
Since $A_{i^*}^{t+1} \subseteq H^t$, we have $v_i(A_{i^*}^{t+1}) \le v_i(H^t)\le v_i(A_i^t)$, i.e., agent $i$ does not envy agent $i^*$.
\end{enumerate}
In either case, the allocation $A^{t+1}$ remains EF-X. 
\end{proof}

\begin{restatable}{lemma}{lemCleanEFXRuntime}\label{lem:clean-efx-runtime}
    \Cref{alg:clean-efx} terminates after at most
    $\left\lfloor\frac{|G|}{k}\right\rfloor$ iterations.
\end{restatable}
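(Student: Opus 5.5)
The plan is a potential argument: track the total utility $\Phi^j=\sum_{i\in N} v_i(A_i^j)$ across iterations. I will show that it starts at $0$, rises by exactly $1$ in every iteration, and never exceeds $\floor{|G|/k}$.

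\textbf{Step 1 (partition invariant).} By induction on $j$, I will show that $(H^j; A_1^j,\ldots,A_n^j)$ is a partition of $G$.
\begin{itemize}
\item At $j=0$ this is immediate, since $H^0=G$ and every bundle is empty.
\item For the inductive step, note that $X^j\subseteq H^j$. The only changes are that agent $i^*$ swaps $A_{i^*}^j$ for $X^j$, and that $H^{j+1}=(H^j\setminus X^j)\cup A_{i^*}^j$.
\item So the items of $X^j$ move from the pool to agent $i^*$, and the items of $A_{i^*}^j$ move from agent $i^*$ back to the pool.
\item Disjointness and coverage are therefore preserved. Hence $\sum_i |A_i^j|\le |G|$ at every step.
\end{itemize}

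\textbf{Step 2 (clean bundles have size $k$ times utility).} I will use that agents are purely complementary $\SPOC_k$, and that by \Cref{lm:cleanEFX} every bundle $A_i^j$ is clean. The claim is that a clean bundle $B$ satisfies $|B|=k\,v_i(B)$.
\begin{itemize}
\item Suppose some item $g\in B$ is not covered by a complement $C\in\mathcal L_i$ with $C\subseteq B$. Then $v_i(B-g)=v_i(B)$, which contradicts cleanliness.
\item So $B$ is a disjoint union of complements in $\mathcal L_i$, each of size $k$.
\item Suppose these complements do not form an independent set of the matroid $r_i$. Then some complement $C$ in $B$ satisfies $r_i(\mathcal C\setminus\{C\})=r_i(\mathcal C)$, where $\mathcal C$ is the family of complements in $B$.
\item Removing one item of such a $C$ destroys only $C$, because complements in $\mathcal L_i$ are disjoint. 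So the value does not drop, again contradicting cleanliness.
\item Hence $B$ is a union of exactly $v_i(B)$ disjoint $k$-sets, giving $|B|=k\,v_i(B)$.
\end{itemize}
This step is the only one that needs any care. The main subtlety is the disjointness of each $\mathcal L_i$, which is what guarantees that removing one item kills only one complement.

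\textbf{Step 3 (counting).} Each iteration changes only agent $i^*$'s bundle, and $v_{i^*}(A_{i^*}^{j+1})=v_{i^*}(A_{i^*}^j)+1$. Hence $\Phi^{j+1}=\Phi^j+1$, and so $\Phi^j=j$. Combining Steps 1 and 2,
\[
j=\Phi^j=\sum_{i\in N}\frac{|A_i^j|}{k}\le \frac{|G|}{k}.
\]
Since $j$ is an integer, $j\le\floor{|G|/k}$. The loop therefore runs at most $\floor{|G|/k}$ times.

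Finally, I will note that each iteration is well defined. When $v_{i^*}(A_{i^*}^j)<v_{i^*}(H^j)$, a clean subset $X^j\subseteq H^j$ of value $v_{i^*}(A_{i^*}^j)+1$ exists: take a maximum independent family of complements inside $H^j$ and keep $v_{i^*}(A_{i^*}^j)+1$ of them. Such an $X^j$ can be found greedily in polynomial time.
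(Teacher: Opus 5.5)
Your proof is correct and takes the same approach as the paper: total utility rises by exactly one per iteration and can never exceed $\lfloor |G|/k\rfloor$. The paper compresses your Steps~1--2 into the single remark that the number of iterations is at most \maxUSW, which is at most $\lfloor |G|/k\rfloor$; your partition invariant and size count fill in that step explicitly (the inequality $v_i(B)\le |B|/k$, valid for any bundle of a purely complementary agent, would also suffice in place of your cleanliness equality).
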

\noindent The proof of \Cref{lem:clean-efx-runtime} is in the appendix.

\begin{lemma}\label{lm:usw-guarantee}
    Let $A$ be the allocation output by \Cref{alg:clean-efx}. 
    Then
    \[
        \sum_{i=1}^n v_i(A_i)\ge\frac{1}{k+1}\maxUSW.
    \]
\end{lemma}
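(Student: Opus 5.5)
Let $A$ be the final allocation and $H$ the final pool. I will compare $A$ against a clean \maxUSW allocation $X^*=(X^*_1,\ldots,X^*_n)$ and charge every unit of $\maxUSW$ either to a unit of $\USW(A)$ or to a set of $k$ items held in $A$. The termination condition of \Cref{alg:clean-efx} gives $v_i(A_i)\ge v_i(H)$ for every agent $i$. Since $v_i$ is an MRF over the complements in $\mathcal L_i$, it is subadditive. Splitting $X_i^*$ into the items in $H$ and the items allocated in $A$, we get
\[
v_i(X_i^*) \le v_i(X_i^*\cap H) + v_i\bigl(X_i^*\setminus H\bigr)\le v_i(A_i) + v_i\bigl(X_i^*\cap (A_1\cup\cdots\cup A_n)\bigr).
\]

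Summing over $i$, the first terms contribute exactly $\USW(A)$. For the second terms I use that $X^*$ is clean and agents are purely complementary. Then $v_i(X^*_i\cap B)$ is at most the number of complements in $X_i^*$ that lie entirely inside $B$, where $B=\bigcup_j A_j$. Each such complement uses $k$ items of $B$, and the bundles $X_i^*$ are disjoint across agents. Hence
\[
\sum_i v_i(X_i^*\cap B)\le \frac{|B|}{k}.
\]
Since $A$ is clean (\Cref{lm:cleanEFX}) and purely complementary, $|A_j| = k\,v_j(A_j)$ for every $j$. Therefore $|B| = k\,\USW(A)$, and the second sum is at most $\USW(A)$.

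Together these give only $\maxUSW\le 2\,\USW(A)$, which matches the claimed $\frac1{k+1}$ bound for $k=1$ and is stronger for $k\ge 2$. The hard part is confirming that this chain is right, since the paper states the weaker $\frac{1}{k+1}$. The likely pitfall is the first inequality: the step $v_i(X_i^*\cap H)\le v_i(H)\le v_i(A_i)$ uses monotonicity and termination, but items of $X_i^*$ in $A_i$ itself also lie in $B$. That is fine, since they are counted in the second term.

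If some subtlety breaks the subadditive split, say complements of $X_i^*$ that straddle $H$ and $B$, I would fall back to a cruder count. A complement of $X_i^*$ that is not contained in $H$ has at least one item in $B$, and $|B|=k\,\USW(A)$. So the number of such complements is at most $k\,\USW(A)$. The complements contained in $H$ contribute at most $\sum_i v_i(H)\le \USW(A)$. In total $\maxUSW\le (k+1)\USW(A)$, which is exactly the stated bound and robust to the straddling issue. This second argument is the one I would present, since it directly yields the lemma.
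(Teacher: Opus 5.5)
Your fallback argument is correct and is essentially the paper's proof: the complements of the $X_i^*$ that meet $B$ are pairwise disjoint, so there are at most $|B| = k\,\USW(A)$ of them, and the complements lying inside $H$ contribute at most $\sum_i v_i(H)\le \USW(A)$ by the termination condition. Discard the first chain entirely rather than just deprioritizing it. Binary \SPOC valuations are not subadditive over items (for a complement $\{a,b\}$, $v(\{a\})+v(\{b\})=0<1$), and the claimed $\maxUSW\le 2\,\USW(A)$ is false. Take $k=2$, let agent $1$ want one of $\{a,b\},\{e,f\}$, and let agents $2,3$ want $\{a,c\}$ and $\{b,d\}$. The algorithm may give $\{a,b\}$ to agent $1$ and stop with $\USW(A)=1$, while $\maxUSW=3$, so the $\frac{1}{k+1}$ bound is tight here.
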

\begin{proof}
Suppose that \Cref{alg:clean-efx} terminates after $b$ rounds.
Let $A^*=(A_1^*,\ldots,A_n^*)$ be a clean $\maxUSW$ allocation. 
For each agent $i\in N$, let $L_i^*$ be the union of complements of $i$ in $A_i^*$ that share at least one good with $\cup_j A_j = G\setminus H^b$. 
Now observe that all the complements of $i$ in $A_i^*$ not contained in $L_i^*$ must be in $H^b$. Hence, $A_i^*\setminus L_i^*\subseteq H^b$. 
Therefore, $\sum_{i\in N} v_i(A_i^*)$ is bounded by:
\begin{align}
     \sum_{i\in N} \left(v_i(L_i^*) + v_i(A_i^*\setminus L_i^*)\right)\le\sum_{i\in N} v_i(L_i^*) + \sum_{i\in N}v_i(H^b)\label{eq:ub-efx}.
\end{align}
Furthermore:
\begin{align}
    \sum_{i\in N}v_i(L_i^*)&=\frac{|\cup_{i\in N}L_i^*|}{k}\le k \cdot\frac{|\cup_{i\in N}A_i|}{k}= k \sum_{i\in N}v_i(A_i^b)\label{eq:pool}
\end{align}

Now, by the termination condition of the \textbf{while} loop we have $\sum_{i\in N}v_i(H^b)\le \sum_{i\in N}v_i(A_i^b)$. By \Cref{eq:ub-efx,eq:pool}:
\begin{align*}
    \sum_{i\in N} v_i(A_i^*)=\maxUSW\le (k+1) \sum_{i\in N}v_i(A_i^b).
\end{align*}
\end{proof}
In particular, \Cref{lm:usw-guarantee} implies the existence of a $\frac13$-\maxUSW+EF-X allocation for binary $\SPOC_2$ preferences.   
\begin{lemma}\label{lm:mms-guarantee}
    Let $A$ be the allocation output by \Cref{alg:clean-efx}. Then for every agent $i\in N$,
    $
    v_i(A_i)\ge \frac{\MMS_i - k}{k+1+\frac{1}{n}}.
    $
\end{lemma}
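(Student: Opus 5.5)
Fix an agent $i$ and a partition $(P_1,\ldots,P_n)$ attaining $\MMS_i$, so that $v_i(P_\ell)\ge \MMS_i$ for every $\ell$. We may assume each $P_\ell$ is clean for $i$: shrink it to a union of exactly $\MMS_i$ of $i$'s complements. The plan is to compare this partition with the final pool $H^b$ and the bundles $A_1,\ldots,A_n$ returned by \Cref{alg:clean-efx}. By the termination condition, $v_i(H^b)\le v_i(A_i)$. Every good lies either in $H^b$ or in some $A_j$. Since $A$ is clean, $|A_j| = k\,v_j(A_j)$.

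First, I bound the allocated goods using EF-X. For each $j\neq i$, \Cref{lm:cleanEFX} gives $v_i(A_j - g)\le v_i(A_i)$ for every $g\in A_j$. I want a bound on how many of $i$'s MMS complements $A_j$ can meet. The size-based bound looks more promising. Let $u=v_i(A_i)$, and consider agent $j$ at the moment it last received a bundle. Agent $j$ was the minimum-utility envier then, so either $i$ did not envy the pool, or $v_i(A_i^t)\ge v_j(A_j^t)$. Combining this with the EF-X argument in Case 1 of \Cref{lm:cleanEFX}, I expect to get $v_j(A_j)\le u+1$, and hence $|A_j|\le k(u+1)$. This is the step I expect to be delicate. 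The reason is that $i$'s utility can only grow over time (agents are only ever given bundles of higher value), while $j$'s final bundle was chosen when $j$ was the minimum among enviers. If $i$ was not an envier at that time, then the pool already had value at most $v_i(A_i^t)$ for $i$. I would handle this case through the second case of \Cref{lm:cleanEFX}: there $v_i(A_j)\le v_i(H^t)\le u$, so $A_j$ gives $i$ at most $u$ complements directly.

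Next, I count the MMS complements that are destroyed. Each allocated good lies in at most one of $i$'s MMS complements, because the complements are disjoint. So the number of MMS complements that intersect $\bigcup_j A_j$ is at most $\sum_j |A_j| \le |A_i| + (n-1)k(u+1)$, where the $j$-terms use the bounds from the previous step. The complements that survive entirely inside $H^b$ contribute to $v_i(H^b)$. However, $v_i$ is an MRF over complements, so the surviving ones need not be independent for $i$ across different $P_\ell$. I handle this by working per bundle: each $P_\ell$ that meets no allocated good lies inside $H^b$, which gives $\MMS_i\le v_i(H^b)\le u$ immediately. Otherwise, every $P_\ell$ meets allocated goods. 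A cleaner route is to use rank: by submodularity, $v_i(H^b)\ge v_i(P_\ell\cap H^b)\ge \MMS_i - (\text{number of complements of }P_\ell\text{ hit})$. Averaging over $\ell$, some $P_\ell$ has at most $\frac{1}{n}\sum_j|A_j|$ hit complements.

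Finally, I combine the bounds:
\[
u \ge v_i(H^b) \ge \MMS_i - \frac{k u + (n-1)k(u+1)}{n}.
\]
Rearranging gives $u\,(k+1 + \text{lower order}) \ge \MMS_i - k$. Matching the exact constant $k+1+\frac1n$ will require the finer per-$j$ bound: $|A_j| \le k u$ for agents that did not gain their last complement while $i$ was an envier, and at most an extra $k$ otherwise, with the $\frac1n$ term coming from $A_i$ itself. The main obstacle is the per-agent size bound in the first step. If the direct argument fails there, I would try induction over iterations, tracking the invariant $v_j(A_j^t)\le v_i(A_i^t)+1$ whenever $i\in T^t$.
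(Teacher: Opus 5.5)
\textbf{There is a genuine gap: your per-agent size bound fails at the final allocation, and the counting depends on it.} You claim $v_j(A_j)\le u+1$, hence $|A_j|\le k(u+1)$, at the \emph{final} allocation. This is false. Once $i$ stops envying the pool, other agents keep being selected, and their utility is no longer tied to $u$.

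For example, take $n=2$, $k=1$. Let agent $i$ value only a good $a$, and let agent $j$ be additive over $c_1,\dots,c_Q$; the same works with $k$-complements in place of single goods. Agent $i$ takes $\{a\}$ in one of the first two iterations and never envies the pool again. Agent $j$ keeps swapping in fresh bundles until $v_j(A_j)=\lceil Q/2\rceil$, far above $u+1=2$.

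Your treatment of the non-envious case does not repair this. The inequality $v_i(A_j)\le v_i(H^t)\le u$ bounds the \emph{value} of $A_j$ to $i$. Your counting needs the \emph{number} of $i$'s MMS complements that $A_j$ meets, which is a different quantity:
\begin{itemize}
\item for $k\ge 2$, a single good of $A_j$ can break one of $i$'s $k$-complements while giving $i$ nothing;
\item whole complements inside $A_j$ can be dependent for $i$ across different $P_\ell$.
\end{itemize}
So the estimate $\sum_j|A_j|\le |A_i|+(n-1)k(u+1)$ is unsupported. The ``finer'' per-$j$ bound fails for the same reason. The fallback invariant $v_j(A_j^t)\le v_i(A_i^t)+1$ whenever $i\in T^t$ is also false in general. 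An agent $j$ that grew while $i$ was not envious can later return goods to the pool that make $i$ envy it, at a moment when $v_j>v_i+1$.

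\textbf{The fix is to do your counting at a well-chosen intermediate iteration, not at the end.} Let $s=v_i(A_i)$ be $i$'s final utility. Let $t$ be the first iteration in which some agent's utility is raised from $s+1$ to $s+2$, or $t=b$ if there is none. Two facts hold at that point:
\begin{itemize}
\item every agent has utility at most $s+1$, so $|A_j^t|\le k(s+1)$;
\item $v_i(A_i^t)\le s<s+1$, so $i$ would have been selected had it envied the pool; hence $v_i(H^t)\le v_i(A_i^t)\le s$.
\end{itemize}
With $A^t,H^t$ in place of $A,H^b$, your per-bundle averaging goes through:
\begin{itemize}
\item at most $s$ of $i$'s complements lie in $A_i^t$, since it is clean and $i$'s complements are disjoint;
\item at most $(n-1)k(s+1)$ of them meet $\bigcup_{j\ne i}A_j^t$;
\item each $P_\ell$ has at most $s$ complements inside $H^t$.
\end{itemize}
This gives $n\MMS_i\le s+(n-1)k(s+1)+ns$, which is exactly the paper's argument.

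You also do not need any sharper per-$j$ estimate to reach the constant $k+1+\frac1n$. Your final displayed inequality, if its hypotheses held, already yields $\MMS_i\le (k+1)u+k$, and this implies the lemma.
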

\begin{proof}
    Suppose that \Cref{alg:clean-efx} terminates after $b$ iterations.
    Fix an agent $i$, and set $s=v_i(A_i)$. 
    Let $t$ be the first iteration of the \textbf{while} loop after which some agent's utility increases from $s+1$ to $s+2$. 
    If such an agent does not exist then we set $t = b$. 
    Therefore, at $t$-th iteration of the \textbf{while} loop, all agents have a utility of at most $s+1$. Also, note that $v_i(A_i^t)\le s$. 
    Therefore, $v_i(H^t)\le v_i(A_i^t)\le s$ (otherwise, agent $i$ would have been chosen by \Cref{alg:clean-efx}).

    Now consider a clean (with respect to $i$'s valuation) MMS partition $X = (X_0;X_1,\ldots,X_n)$ of agent $i$; w.l.o.g. we assume that $v_i(X_1) = v_i(X_2)=\ldots=v_i(X_n)=\MMS_i$. 
    Let $\{g_1,\ldots, g_k\}$ be a complementary bundle for $i$ in $G\setminus X_0$. 
    There are three possibilities. 
    \begin{enumerate*}[label = {(\arabic*)}]
        \item $\{g_1,\ldots,g_k\}\subseteq A_i^t$ \label{mms-case-subset}
        \item $\{g_1,\ldots,g_k\}\cap \cup_{j\in N\setminus\{i\}}A_j^t\ne \emptyset$ or \label{mms-case-others}
        \item $\{g_1,\ldots,g_k\}$ is entirely contained in $H^t$ \label{mms-case-useless}
    \end{enumerate*}
    There can be at most $s$ complements of $i$ of type \ref{mms-case-subset}, at most $k(s+1)(n-1)$ many complements of type \ref{mms-case-others}, and at most $ns$ many complements of type \ref{mms-case-useless}: as $v_i(H^t)\le s$, every bundle $X_j$ can contain at most $s$ independent complements of $H^t$.
    Thus:
    \begin{align*}
        n\MMS_i &\le s + k (s+1) (n-1) + ns\le ((k+1)n + 1) s + kn
    \end{align*}
    Therefore, $v_i(A_i) = s \ge \frac{\MMS_i - k}{k+1+\frac{1}{n}}$.
\end{proof}

\begin{restatable}{theorem}{thmCompleteEFX}\label{thm:completeEFX}
Given a partial EF-X allocation $A^0=(A_1^0,\ldots,A_n^0)$,
\Cref{alg:complete-efx} returns a complete EF-X allocation $A$ in polynomial time such that
$v_i(A_i)\ge v_i(A_i^0)$ for every $i\in N$.
\end{restatable}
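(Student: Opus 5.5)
We have not seen Algorithm (complete-efx), so the plan is to design it in the style of the standard envy-graph completion of Lipton et al.\ and prove its guarantees.

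\textbf{Algorithm.} Let $H = G \setminus \bigcup_i A_i^0$ be the pool. The algorithm processes the items of $H$ one at a time while maintaining an EF-X allocation $A$ with $v_i(A_i)\ge v_i(A_i^0)$ for all $i$. Each step has three parts:
\begin{enumerate*}[label=(\roman*)]
\item build the envy graph on $N$, with an edge $i\to j$ whenever $v_i(A_j)>v_i(A_i)$;
\item eliminate cycles by rotating bundles along any cycle;
\item take a source $s$ of the resulting acyclic graph and set $A_s\gets A_s+g$ for the next pool item $g$.
\end{enumerate*}

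\textbf{Key steps.}
\begin{enumerate}
\item \emph{Cycle rotation preserves the invariants.} Each agent on the cycle strictly increases its utility, and the multiset of bundles is unchanged. EF-X is a property of the pair (agent's utility, bundles held by others), so it is preserved: every agent's utility weakly rises and the set of other bundles is the same up to the agent's own old bundle. That old bundle was not envied beyond EF-X by the agent, since the agent valued it exactly at its old utility.
\item \emph{Giving $g$ to a source preserves EF-X.} Since $s$ is a source, no agent envies $A_s$. For any $i$ and any item $h\in A_s+g$, we need $v_i(A_s+g-h)\le v_i(A_i)$.
\begin{itemize}
\item If $h=g$, this is $v_i(A_s)\le v_i(A_i)$, which holds because $s$ is a source.
\item If $h\ne g$, we need $v_i(A_s+g-h)\le v_i(A_i)$. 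This is the delicate point, since with complements adding $g$ may complete a bundle for $i$.
\end{itemize}
I would handle this by maintaining a stronger invariant: for all $i,j$ and every $h\in A_j$, $v_i(A_j-h)\le v_i(A_i)$. To restore it after adding $g$, I would use the fact that an item added to a bundle raises $v_i$ by at most one complement (valuations are binary on purely complementary bundles). If $v_i(A_s+g-h)>v_i(A_i)$ for some $h$, then $i$ would have the smaller bundle $A_s+g-h$ as a witness. I would then run a standard EF-X fix: replace $A_s$ with a minimal envied subset $Z$ of $A_s+g$, give $Z$ to $i$ (a strict utility improvement for $i$), and return the rest, including $i$'s old bundle, to the pool. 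Minimality of $Z$ gives EF-X toward $Z$ for all agents, by the same argument as Case 2 of Lemma~\ref{lm:cleanEFX}.
\item \emph{Monotonicity.} Every operation (rotation, addition to the source, envy-swap with $Z$) weakly increases each agent's utility, except the agent $s$ whose bundle is cut down. To protect $s$, I would only perform the swap when $s$'s utility does not fall below its value at $A^0$. Since $Z$ is envied by $i$, I would argue via the source property that $v_s(Z)\ge v_s(A_s)$, or else give $s$ the leftover bundle as in the pool-swap step of Algorithm~\ref{alg:clean-efx}.
\end{enumerate}

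\textbf{Termination.} The potential is the vector of utilities (lexicographically), together with the number of unallocated items. Each swap increases some utility by at least one while the others are nonnegative and bounded by $m/k$. Each addition reduces the pool. Thus there are $\mathrm{poly}(n,m)$ steps overall.

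\textbf{Main obstacle.} The hard part is the second step: adding a single item to a source bundle can create a new complement for another agent, and that breaks EF-X. Making the envy-swap fix preserve both EF-X for everyone and $v_s(A_s)\ge v_s(A_s^0)$ is the core of the proof. If a direct argument fails, I would first try the potential argument of Chaudhury et al.\ for EF-X with a charity pool. Since the final allocation must be complete, I would then add a final step arguing that leftover items form no new complement for anyone and can be dumped on a source agent.
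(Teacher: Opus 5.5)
\textbf{There is a genuine gap, at exactly the step you flag as the main obstacle, and your repair does not close it.} Once you hand a minimal envied subset $Z\subseteq A_s+g$ to the violating agent $i$ and send the rest of $A_s+g$ back to the pool, the source $s$ is left with no bundle. So $v_s(A_s)\ge v_s(A_s^0)$ can fail.

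Neither remedy you mention helps. Being a source only says that no \emph{other} agent envies $A_s$. That tells you nothing about $v_s(Z)$ (and $Z$ goes to $i$ anyway), nor about $v_s(A_i)$.

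A two-agent example with $k=2$ already breaks the plan. Let $s$ hold her pair $\{a,h\}$, let $i$ hold nothing and want only $\{a,g\}$, and put $g$ in the pool. Both agents are sources. If you add $g$ to $A_s$, EF-X fails for $i$, since $v_i(\{a,g\})=1>0$. Your swap then leaves $s$ with utility $0<1=v_s(A_s^0)$, and forbidding the swap leaves the violation unresolved. Your plan gives no rule that guarantees a safe choice of bundle exists.

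For the same reason your termination potential is not monotone: welfare can drop when $s$ is stripped, while the pool is refilled.

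Two smaller problems:
\begin{enumerate}
\item Minimality of $Z$ for an arbitrary violator $i$ does not give EF-X toward $Z$ for every agent. A lower-utility agent may still strongly envy $Z$. You must choose $i$ as a lowest-utility agent among those who strictly prefer the augmented bundle; that choice is what makes both cases of \Cref{lm:cleanEFX} go through.
\item Your ``stronger invariant'' is equivalent to EF-X by monotonicity, so it adds nothing.
\end{enumerate}

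\textbf{The missing idea.} In the paper's proof you examine every augmented bundle $A_j+g$, not just a source's, and you exploit binary marginals. Each agent $j$ points to a lowest-utility agent $j^*$ with $v_{j^*}(A_{j^*})<v_{j^*}(A_j+g)$.
\begin{itemize}
\item If some agent $s$ has no out-edge, nobody strictly prefers $A_s+g$ to her own bundle. Giving $g$ to $s$ then creates no envy at all.
\item Otherwise every vertex has out-degree one, so there is a cycle $i_1\to i_2\to\cdots\to i_\ell\to i_1$. One item adds at most one unit of value, so each edge $(j,j^*)$ gives $v_{j^*}(A_j)\ge v_{j^*}(A_{j^*})$. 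Passing old bundles along the cycle therefore weakly helps everyone. Meanwhile $i_2$ takes a clean subset of $A_{i_1}+g$ worth $v_{i_2}(A_{i_2})+1$, and the leftovers return to the pool.
\end{itemize}
This gives the three claims of the theorem:
\begin{itemize}
\item No agent's utility ever decreases, which is the monotonicity claim.
\item EF-X toward the new bundle follows from the lowest-utility choice of $i_2$.
\item Each cycle step raises utilitarian welfare by at least one, which bounds the number of cycle steps. Each sink step shrinks the pool, so the running time is polynomial.
\end{itemize}
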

The proof of \Cref{thm:completeEFX} is in the appendix.

Completing the allocation output by \Cref{alg:clean-efx} using \Cref{alg:complete-efx} yields:
\begin{theorem}\label{thm:efx-usw-mms}
    Under binary $\SPOC_k$ valuations, we can efficiently compute a complete $\frac{1}{k+1}$-\maxUSW EF-X allocation $A$ where for each agent $i$, $v_i(A_i) \ge \frac{\MMS_i-k}{k+1+\frac1n}$.
\end{theorem}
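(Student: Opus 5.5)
The plan is to chain the results already established for \Cref{alg:clean-efx} with the completion guarantee of \Cref{thm:completeEFX}. Concretely, we first run \Cref{alg:clean-efx} on the purely complementary binary $\SPOC_k$ instance and obtain a partial allocation $A^0$. By \Cref{lem:clean-efx-runtime} this takes at most $\lfloor |G|/k\rfloor$ iterations. Each iteration is polynomial: computing $T^j$ and choosing a clean subset $X^j \subseteq H^j$ with $v_{i^*}(X^j)=v_{i^*}(A^j_{i^*})+1$ can be done by greedily building an independent set of complements inside $H^j$ under the MRF $r_{i^*}$, then keeping exactly $v_{i^*}(A^j_{i^*})+1$ of them. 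The following properties of $A^0$ are then available:
\begin{itemize}
\item it is clean and EF-X, by \Cref{lm:cleanEFX};
\item $\sum_i v_i(A^0_i)\ge \frac{1}{k+1}\maxUSW$, by \Cref{lm:usw-guarantee};
\item $v_i(A^0_i)\ge \frac{\MMS_i-k}{k+1+\frac1n}$ for each agent $i$, by \Cref{lm:mms-guarantee}.
\end{itemize}

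Next we feed $A^0$ into \Cref{alg:complete-efx}. By \Cref{thm:completeEFX}, in polynomial time this returns a complete EF-X allocation $A$ with $v_i(A_i)\ge v_i(A^0_i)$ for every $i$. Both welfare-type guarantees are monotone in individual utilities, so they transfer directly. Summing the per-agent inequalities gives $\USW(A)\ge \USW(A^0)\ge\frac{1}{k+1}\maxUSW$. Likewise, each agent keeps $v_i(A_i)\ge v_i(A^0_i)\ge \frac{\MMS_i-k}{k+1+\frac1n}$. EF-X of $A$ is part of the conclusion of \Cref{thm:completeEFX}, and the total running time is the sum of two polynomial bounds.

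The only point needing care is compatibility of hypotheses. \Cref{thm:completeEFX} asks only for a partial EF-X allocation, which \Cref{lm:cleanEFX} supplies. We must also check that the $\maxUSW$ benchmark is the one for the complete instance. This holds because \Cref{lm:usw-guarantee} compares against a $\maxUSW$ allocation over all of $G$, and the unallocated pool $H^b$ is accounted for there. Finally, the EF-X guarantee relies on agents being purely complementary, the standing assumption of this section. I expect no real obstacle beyond confirming the polynomial-time implementation of the bundle-selection step in \Cref{alg:clean-efx}, which follows from oracle access to each $r_i$ over the complements list $\mathcal L_i$.
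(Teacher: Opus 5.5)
Your proposal is correct and is essentially the paper's own argument. The paper likewise runs \Cref{alg:clean-efx}, whose output is clean, EF-X, $\frac{1}{k+1}$-\maxUSW and meets the \MMS bound by \Cref{lm:cleanEFX,lm:usw-guarantee,lm:mms-guarantee}, then completes it with \Cref{alg:complete-efx}, whose utility monotonicity from \Cref{thm:completeEFX} carries both guarantees over (your greedy implementation of the bundle-selection step is a detail the paper leaves implicit).
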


Finally, we bound the welfare achievable by approximately EF allocations. 
An allocation $X=(X_1,\ldots,X_n)$ is envy free up to $\ell$ items (EF-$\ell$) if for any two agents $i\neq j$, there exists a set of $\le \ell$ items $S\subseteq X_j$ such that $v_i(X_i)\ge v_i(X_j\setminus S)$. 
Now we show the following.
\begin{theorem}
For every $\epsilon\in(0,1)$ and every integer $\ell\ge1$,
there exists an instance where agents have binary $\SPOC_2$
valuations such that every EF-$\ell$ allocation $X$ satisfies
$\USW(X)\le(\frac12+\epsilon)\maxUSW$.
\end{theorem}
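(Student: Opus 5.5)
The plan is an explicit family of instances $\mathcal{I}_{m,r}$, parametrized by an integer $m$ and a number $r$ of ``jealous'' agents. In each instance the \maxUSW allocation gives everything to one agent. Envy-freeness up to $\ell$ items should then force a constant fraction of that agent's items to go to other agents, and those agents should convert the items into welfare at a strictly worse rate. Taking $m\gg \ell$ makes the additive $\ell$ terms negligible, and taking $r$ large should push the ratio down to $\frac12+\epsilon$. I have not checked that the amplification step works; only the $r=1$ gadget has been worked through.

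\textbf{Base gadget ($r=1$).} Take items $a_1,\ldots,a_m,b_1,\ldots,b_m$ with $m$ even, and two agents.
\begin{itemize}
\item Agent $A$ is additive over the pairs $\{a_i,b_i\}$.
\item Agent $B$ is additive over the pairs $\{a_{2i-1},a_{2i}\}$.
\end{itemize}
Both are purely complementary binary $\SPOC_2$ agents, and $\maxUSW=m$, attained by giving $A$ everything.

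Given an EF-$\ell$ allocation $X$, I will do the following accounting. Let $p$ be the number of $B$-pairs with both items in $X_A$ and $A$-utility, and let $q$ be the number of $B$-pairs with exactly one such item.
\begin{itemize}
\item Cleanliness (\Cref{lm:lp-1}-style counting) gives $v_A(X_A)\le 2p+q$.
\item Each pair $B$ uses consumes two $a$'s that $A$ cannot use, so $v_B(X_B)\le \frac m2-p-q$.
\item $v_B(X_A)\ge p$, and removing $\ell$ items lowers this by at most $\ell$. So EF-$\ell$ for $B$ gives $v_B(X_B)\ge p-\ell$.
\end{itemize}
Combining these gives $\USW(X)\le \frac m2+p$ and $2p\le \frac m2+\ell$. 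Hence $\USW(X)\le \frac{3m}{4}+\frac{\ell}{2}$, which is already a constant-factor gap, namely $\frac34$.

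\textbf{Amplification to $\frac12+\epsilon$.} The extra slack comes from $B$'s pairs each blocking two $A$-pairs while yielding only one unit. To amplify, I would add agents $B_1,\ldots,B_r$. Each $B_j$ is additive over a different perfect matching $\mathcal M_j$ of $\{a_1,\ldots,a_m\}$, and I want the matchings ``spread'': for every set $S$ of $a$-items, some $\mathcal M_j$ should have about $|S|^2/(2m)$ or more pairs inside $S$. A random choice of matchings should make this hold for all large $S$, via a union bound over $S$ with $r$ growing slowly.

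The accounting then runs as follows. Let $s$ be the number of $a$'s held by $A$ that are used. EF-$\ell$ towards each $B_j$ forces the $B$'s collectively to hold disjoint $a$-pairs whose number is roughly $\max_j(\text{pairs of }\mathcal M_j\text{ in }S)-\ell$. Each such pair costs $A$ two units and gains one. Optimizing over $s$ should give $\USW\le(\frac12+O(1/r)+O(\ell/m))\,m$. Choosing $r=\Theta(1/\epsilon)$ and $m\gg \ell r/\epsilon$ then finishes.

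\textbf{Main obstacle.} The step I expect to be hardest is this amplification. EF-$\ell$ only needs to hold for each $B_j$ individually, so the $B$'s could share a small set of $a$'s, and $A$ might choose $S$ to be sparse in every matching. I would first try to prove the needed density property for random matchings restricted to sets $S$ of size at least $\epsilon m$. For smaller $S$ the bound $\USW\le \frac m2+O(\epsilon m)$ is trivial, since $A$'s utility is then at most $|S|$ and each $B_j$ gets at most $\frac m2$. If the density property fails, the fallback is to chain gadgets hierarchically (a $C$-agent over $B$'s items, and so on), so that each level halves the remaining slack.
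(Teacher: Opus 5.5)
\textbf{There is a genuine gap: the step from $\frac34$ to $\frac12+\epsilon$ is missing, and the fix is much simpler than the route you propose.}

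Your base gadget is, up to relabeling, the paper's construction. Your block $\{a_{2i-1},a_{2i},b_{2i-1},b_{2i}\}$ plays the role of $\{a_j,b_j,a_j',b_j'\}$, your $A$ is agent $1$, and your $B$ is one of the ``other'' agents. Your bound $\USW(X)\le \frac m2+p$ matches the paper's observation that a block contributes $2$ only if both items of its $B$-pair lie in $X_1$, and at most $1$ otherwise. Your $r=1$ analysis is correct, but it only gives about $\frac34$.

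The random-matching amplification you sketch does not work as written. You lower-bound the $B_j$'s' collective holding by $\max_j(\text{pairs of }\mathcal M_j\text{ in }S)-\ell\approx s^2/(2m)-\ell$, and this has no dependence on $r$. Combined with $s+2\cdot(\text{pairs held by the }B_j)\le m$ and $\USW\le \frac m2+\frac s2$, it only certifies about $\frac{1+\sqrt5}{4}\approx 0.81$, which is worse than your $r=1$ bound. To gain anything from $r$ you would have to sum over $j$ and prove a uniform edge-density bound for the union of the random matchings. Even then, your crude accounting yields $\frac12+O(1/\sqrt r)$, not $\frac12+O(1/r)$.

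The missing idea is that no randomness is needed: take $r$ \emph{identical} copies $B_1,\dots,B_r$ of $B$. Your worry that ``the $B$'s could share a small set of $a$'s'' cannot arise, because bundles in an allocation are disjoint. EF-$\ell$ towards $A$ forces each $B_j$ to hold at least $p-\ell$ $B$-pairs of its own. None of these is among the $p+q$ pairs touched by items $A$ uses, and they are pairwise disjoint. Hence $p+q+r(p-\ell)\le \frac m2$, which gives $p\le \frac{m/2+r\ell}{r+1}$.

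Plugging this into your bound $\USW(X)\le \frac m2+p$, with $\maxUSW=m$ still attained by giving $A$ everything, gives $\frac{\USW(X)}{\maxUSW}\le \frac12+\frac{1}{2(r+1)}+\frac{r\ell}{(r+1)m}$. This is at most $\frac12+\epsilon$ once $r\ge 1/\epsilon$ and $m\ge 2\ell/\epsilon$. This is precisely the paper's argument with $n-1=r$ copies. It makes both the density property and the hierarchical fallback unnecessary.
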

\begin{proof}
We construct an instance with $n\ge2$ agents and $m=4t$ items
$\{a_j,b_j,a_j',b_j':j\in[t]\}$.
The set of complements for agent 1 is
$L_1=\{\{a_j,a_j'\},\{b_j,b_j'\}:j\in[t]\}$.
For each of the remaining agents, the set of complements is
$\{\{a_j,b_j\}:j\in[t]\}$.
We assume that all agents have binary additive valuations
over their complements.
We can achieve the maximal possible \USW ($2t$) by assigning all items to agent $1$.
Suppose $X=(X_1,\ldots,X_n)$ is an EF-$\ell$ allocation.

Let $r$ be the number of $\{a_j,b_j\}$ pairs 
with $j\in[t]$ contained in $X_1$.
Therefore, for every agent $i \ne 1$, we have $v_i(X_1)=r$.
Since $X$ is an EF-$\ell$ allocation and removing at most
$\ell$ items destroys at most $\ell$ such pairs, we have
$v_i(X_i)\ge r-\ell$ for every $i\ne 1$.
The pairs counted by $r$ and those contributing to the
utilities of the remaining agents are distinct. 
Hence,
\begin{align*}
r+\sum_{i=2}^{n}v_i(X_i)\le m\Rightarrow r+(n-1)(r-\ell)\le m \Rightarrow r\le\frac{m+(n-1)\ell}{n}.
\end{align*}

Each four-item block
$\{a_j,b_j,a_j',b_j'\}$ contributes at most $2$ to
$\USW(X)$ if both $a_j$ and $b_j$ belong to $X_1$; otherwise, it contributes at most $1$.
Therefore, $\USW(X)\le 2r+(\frac m4-r)=r+\frac m4$.

Using the preceding bounds, we get
\begin{align*}
\frac{\USW(X)}{\maxUSW}\le\frac{r+\frac m4}{\frac m2}\le\frac{\frac{m+(n-1)\ell}{n}+\frac m4}{\frac m2}=\frac12+\frac2n+\frac{2(n-1)\ell}{mn}.
\end{align*}
For any fixed $\ell$, we can choose sufficiently large $t$
and $n$, so that this ratio
is at most $1/2+\epsilon$, concluding the proof.
\end{proof}

\section{Conclusions and Future Work}\label{sec:conclusions}
In this work, we examine the problem of finding fair and efficient allocations when agents have $\SPOC_k$ preferences. 
We present both hardness results and combinatorial methods for computing approximate \maxUSW and \maxNSW allocations, as well as methods for computing fair, approximately efficient allocations. Our results highlight the technical challenges introduced by complementary preferences. While the results we present offer a fairly comprehensive coverage of the binary $\SPOC_k$ setting, it is unclear how one would extend our results beyond binary marginal gains. While $\SPOC_k$ instances with general marginal gains inherit known intractability results for submodular agents, restricting agents' marginal gains to a fixed set may yield positive results, as it does for submodular preferences with restricted marginals \cite{akrami2022mnw,amanatidis2021mnw,cousins2023bivalued,ebadian2022bivaluedchores,garg2021bivaluedadditive,garg2022bivaluedchores}
. 

\paragraph{Acknowledgments.}
Viswanathan and Zick are supported by NSF Grants (CAREER \#2441296, RI:Small \#2337382).



\bibliographystyle{alpha} 
\bibliography{ref.bib}


\appendix
    \onecolumn
    \section{Missing Proofs from \texorpdfstring{\Cref{sec:complexity}}{Section 3}}
Proof of \Cref{lm:lp-1}.
\lemLP*
    \begin{proof}
Given an allocation $X=(X_1,\ldots,X_n)$, we assume that agents' bundles consist only of complementary item pairs: any singleton items do not offer agents utility and are assumed to be unassigned. We define a vector $\vec x$ as follows: given an agent $i$ and complementary item pair $\{g,g'\}\subseteq X_i$ assigned to agent $i$, we set $x_e = 1$ where $e=(i,\{g,g'\})$. 
For every subset of edges $T \subseteq S$ and every agent $i \in N$, let $T_i$ be the edges in $T$ of the form $(i,\{g,g'\})$. We observe that $T_1,\dots,T_n$ partition $T$. 
Consider the set of items $R(T_i) = \{g,g':(i,\{g,g'\})\in T_i\}$. For any edge $e =(i,\{g,g'\})\in T_i$, $x_e=1$ iff the pair $g,g'$ is in $X_i$. Thus, $\sum_{e \in T_i} x_e = |R(T_i) \cap X_i|/2$. Since $v_i$ is binary submodular and $v_i(X_i) = |X_i|/2$, for any $Q \subseteq X_i$ we have $v_i(Q) = |Q|/2$. In particular, $|R(T_i) \cap X_i|/2 = v_i(R(T_i) \cap X_i)$. Since $v_i$ is monotone, $v_i(R(T_i)\cap X_i) \le v_i(R(T_i))$. Putting this all together we get that $r(T)$ equals
\begin{align*}
    \sum_{e \in T}\sum_{i= 1}^nv_i(\{g,g':(i,\{g,g'\}) \in T\}) =& \sum_{i = 1}^n v_i(R(T_i))\ge \sum_{i=1}^nv_i(R(T_i) \cap X_i)\\=&\sum_{i =1}^n \sum_{e \in T_i}x_e = \sum_{e \in T}x_e
\end{align*}
 Thus, \eqref{rank-cons} holds. Since $X$ is a valid allocation, for any item $g \in G$ we have $\sum_{e \in S_g} x_e \le 1$, and by construction $x_e \in \{0,1\}$. Thus, the solution $\vec x$ constructed from the allocation $X$ satisfies all of the constraints of $\LP(M)$, and the value of $\vec x$ is trivially $\sum_{i = 1}^n v_i(X_i)$. 

Conversely, let $\vec x'$ be an integral feasible solution to the LP. We construct an allocation $X'=(X'_1,\ldots,X'_n)$ as follows: for every agent $i \in N$, let $X'_i$ be the set of all item pairs $\{g,g'\}$ for which the edge $e = (i,\{g,g'\})$ satisfies $x'_e=1$. Constraint~\ref{matching-cons} ensures that each good $g \in G$ is assigned to at most one agent, so $X'$ is a valid partition of $G$. Finally, by construction, the objective value of the LP equals the $\USW$ of the allocation $X'$. 
\end{proof}

Proof of \Cref{lm:extreme-point}.
\lemmaextremepoint*
\begin{proof}
   Suppose for contradiction that for every element $e \in S$, $x_e \in \left(0, \tfrac{1}{2}\right)$. 
   Let $A$, $B$, and $C$ denote the sets of the tight constraints corresponding to \eqref{rank-cons}, \eqref{matching-cons}, and \eqref{boundary-cons}, respectively. 
   Since $\vec x$ is an extreme point solution, these tight constraints are linearly independent and satisfy $|A| + |B| + |C| = |S|$.
Moreover, because $x_e \in (0, \tfrac{1}{2})$ for all $e \in S$, none of the boundary constraints in \eqref{boundary-cons} are tight, implying $|C| = 0$. Consequently, $|A| + |B| = |S|$.
\begin{enumerate}[label=\textbf{Case \Roman*:},align=left,wide]
\item Suppose that $|A| > \tfrac{|S|}{3}$. The tight rank constraints in $A$ form a \emph{chain}; that is, there exists a sequence of subsets of elements of $S$, $\mathcal{L} = \{\chi_0, \chi_1, \ldots, \chi_{|A|}\}$ such that $\emptyset = \chi_0 \subset \chi_1 \subset \cdots \subset \chi_{|A|}$, where $\chi_j$ satisfies $\sum_{e \in \chi_j}x_e = r(\chi_j)$ \cite{schrijver2003combopt}.

Suppose, for the sake of contradiction, that for every $t \in [|A|-1]$ we have
$|\chi_{t+1} \setminus \chi_t| \ge 3$.
Then
\[
|\chi_{|A|}| = \sum_{t=0}^{|A|-1} |\chi_{t+1} \setminus \chi_t|
\ge 3|A| > |S|,
\]
which contradicts the fact that $\chi_{|A|} \subseteq S$. 
Hence, there exists some $t \in [|A|-1]$ such that $|\chi_{t+1} \setminus \chi_t| \le 2$.
Since $r(\chi_{t+1})\ge r(\chi_{t})+1$ 
and the constraints are tight for $\chi_t$ and $\chi_{t+1}$, this implies that
$\sum_{e \in \chi_{t+1}}x_e \ge \sum_{e \in \chi_t}x_e + 1$. Since the number of summands in the left-hand side is at most 2 more than the number of summands in the right-hand side, 
there exists an element $e \in \chi_{t+1} \setminus \chi_t$ with $x_e \ge \tfrac{1}{2}$, a contradiction.
\item Now suppose that $|A| < \tfrac{|S|}{3}$, which implies that $|B| > \tfrac{2|S|}{3}$. 
Assume, for the sake of contradiction, that for every good $g$ such that $S_g \in B$, we have $|S_g| \ge 3$. 
Then,
\[
\sum_{g} |S_g| \ge\sum_{g : S_g \in B} |S_g| \ge 3|B| > 2|S|.
\]
For any element $e = (i,\{g,g'\}) \in S$, the element $e$ appears in exactly two sets, namely $S_g$ and $S_{g'}$. Consequently, $\sum_{g} |S_g| \le 2|S|$,
which contradicts the inequality above. Hence, there exists a good $g$ such that $S_g \in B$ and $|S_g| \le 2$. It follows that there exists an element $e \in S_g$ with $x_e \ge \tfrac{1}{2}$, yielding the desired contradiction.
\item 
Finally, consider the case where $|A| = \tfrac{|S|}{3}$ and $|B| = \tfrac{2|S|}{3}$. 
Let $\mathcal{L} = \{\chi_0, \chi_1, \ldots, \chi_{|A|}\}$
denote the list of tight rank constraints in $A$, as in Case I. 
If there exists some $t \in [|A|-1]$ such that $|\chi_{t+1} \setminus \chi_t| \le 2$, then the desired conclusion follows immediately from the argument in Case~I.
We assume that $|\chi_{t+1} \setminus \chi_t| \ge 3$ for all $t \in [|A|-1]$.
It follows that
\[
|\chi_{|A|}| = \sum_{t=0}^{|A|-1} |\chi_{t+1} \setminus \chi_t|
\ge 3|A| = |S|.
\]
Thus, the constraint of $\chi_{|A|}$ must be
\begin{align}
\sum_{e \in S} x_e = r(S) \label{eq:Aconstraints}
\end{align}
Next, if there exists a good $g \in G$ such that $S_g \in B$ and $|S_g| \le 2$, we are again done. 
Thus, assume that for every $g \in G$ with $S_g \in B$, we have $|S_g| \ge 3$. Then,
\[
\sum_{g : S_g \in B} |S_g| \ge 3|B| = 2|S|.
\]
However, as shown in Case~II, $\sum_{g} |S_g| \le 2|S|$ and equality must hold.
Thus, if we sum the rows corresponding to the constraints in $B$, we get a vector of all 2s.  

The constraint \eqref{eq:Aconstraints} is a row of 1s in the constraint matrix, whereas the sum of constraints in $B$ sums to a row of 2s. 
Thus, the tight constraints indexed by $A \cup B$ are linearly dependent, contradicting the fact that we picked a set of linearly independent constraints $A \cup B$.
\end{enumerate}
\end{proof}

\lemnswalgebra*
\begin{proof}
Note that $\log \frac{\NSW(X^*)}{\NSW_\beta(X)}$ is:
\begin{align}
         & \log\left(\frac{\prod_{i \in N}v_i(X_i^*)}{\prod_{i \in N}(v_i(X_i)+\beta)}\right)^{\frac1n} = \frac1n\sum_{i \in N}\log\frac{v_i(X_i^*)}{v_i(X_i)+\beta}\notag\\
        =&\frac1n\sum_{i \in N}\log\frac{v_i(X_i) + (v_i(X_i^*)-v_i(X_i))}{v_i(X_i)+\beta} \le  \frac1n\sum_{i \in N}\log\left(1+\frac{|S_i\cap G_i^c(X)|}{v_i(X_i)+\beta}\right)\notag\\
        \le & \frac1n\sum_{j=0}^{\floor{\frac mk}}\sum_{i\in U_j(X)}\log\left(1+\frac{|S_i\cap G_i^c(X)|}{j+\beta}\right)\label{eq:nash-upperbound-1-draft}
\end{align}
Since the arithmetic mean is at least the geometric mean, we have 
\begin{align*}
\frac{\sum_{i\in U_j(X)} \left(1+\frac{|S_i\cap G_i^c(X)|}{j+\beta}\right)}{|U_j(X)|}
\ge \left(\prod_{i\in U_j(X)}\left(1+\frac{|S_i\cap G_i^c(X)|}{j+\beta}\right)\right)^{\frac{1}{|U_j(X)|}}.
\end{align*}
Thus, \Cref{eq:nash-upperbound-1-draft} is upper-bounded by
\begin{align}
        \frac1n\sum_{j=0}^{m+1}|U_j(X)|\log\left(\frac{|U_j(X)| + \frac{\sum_{i\in U_j(X)}|S_i\cap G_i^c(X)|}{j+\beta}}{|U_j(X)|}\right) \label{eq:nash-upperbound-2-draft}
\end{align}
We apply the log-sum inequality: for nonnegative real numbers $a_1,\ldots,a_n$ and $b_1,\ldots,b_n$,
\[
\sum_{j=1}^{n} a_j \log \frac{b_j}{a_j}\le \left(\sum_{j = 1}^n a_j\right) \log \frac{\sum_{j = 1}^n b_j}{\sum_{j = 1}^n a_j}.
\]
We apply this inequality with $a_j = |U_j(X)|$ and $b_j = |U_j(X)| + \sum_{i\in U_j(X)}\frac{|S_i\cap G_i^c(X)|}{j+\beta}$. We note that $\sum_{j=0}^m |U_j(X)| = n$, since every agent belongs to exactly one of the sets $U_j(X)$. Thus, \Cref{eq:nash-upperbound-2-draft} is upper bounded by
\begin{align*}
        \log\left(\frac{\sum_{j=0}^{\floor{\frac mk}}|U_j(X)| + \sum_{j=0}^{\floor{\frac mk}}\sum_{i\in U_j(X)}\frac{|S_i\cap G_i^c(X)|}{j+\beta}}{n}\right)
        &= \\ 
        \log\left(1 + \frac{1}{n}\sum_{j=0}^{\floor{\frac mk}}\frac{1}{j+\beta}\sum_{i\in U_j(X)}|S_i\cap G_i^c(X)|\right)&
\end{align*}
\end{proof}

\clmlpfeasibleone*
\begin{proof}
    Observe that all goods appearing in the $k$-complements of $S_i$ are contained in $X_i^*$. Since the allocation $X^*$ partitions $G$, for any two agents $i$ and $i'$, the sets $S_i\cap G_i^c(X)$ and $S_{i'}\cap G_{i'}^c(X)$ are disjoint. Therefore, 
$\sum_{j=0}^{t}\sum_{i\in U_j(X)}|S_i\cap G_i^c(X)|$ is bounded by the total number of goods assigned to agents with utility at most $\frac{\alpha-1}{k}(t+\beta)$, which is precisely $|L_t(X)|$.
\end{proof}

\clmlpfeasible*
\begin{proof}
    We first observe that the solution $\vec x^* = x_0^*,\dots,x_{\left\lfloor\frac mk\right\rfloor}^*$ is feasible; since the sets $L_j(X)$ are nested: 
    $$\sum_{j = 1}^t x_j^* = \sum_{j = 1}^t |L_j(X)\setminus L_{j-1}(X)| = |L_t(X)|.$$
    Next, consider any other feasible solution $\vec x = x_1,\dots,x_{m+1}$ to the LP.  
If all constraints are tight, then the solution $\vec x$ must coincide with $\vec{x}^*$. Suppose for contradiction that $\vec x \ne \vec x^*$ is optimal, and let $t_{\min}$ be the smallest index such that
\[
\sum_{j=1}^{t_{\min}} x_j < |L_{t_{\min}}(X)|.
\]
This implies that $x_{t_{\min}} < |L_{t_{\min}}(X)\setminus L_{t_{\min}-1}(X)|$. Let $k > t_{\min}$ be the smallest index for which $x_k > 0$.
\begin{enumerate}[label = \textbf{Case \Roman*:},align =left]
    \item Suppose no such $k$ exists. 
    Then setting $x_{t_{\min}} = |L_{t_{\min}}(X)\setminus L_{t_{\min}-1}(X)|$ 
maintains feasibility of all constraints, while strictly increasing the objective value, a contradiction to the optimality of $\vec x$.
\item Suppose such a $k$ exists. Define 
\[
\delta = \min\left\{\, |L_{t_{\min}}(X)\setminus L_{t_{\min}-1}(X)| - x_{t_{\min}},\; x_k \right\}.
\]
Now increase $x_{t_{\min}}$ by $\delta$ and decrease $x_k$ by $\delta$. This preserves feasibility of all constraints, yet strictly improves the objective value, again yielding a contradiction.
\end{enumerate}
\end{proof}

To prove \Cref{lem:Nash-Welfare-Poly}, we first prove the following lemma.
\begin{restatable}{lemma}{lemNashIncrease}\label{lem:Nash-increase}
    Suppose $\beta(\alpha-1)\ge k^2$ and $\alpha - 1\ge k^2$. Whenever step \ref{step:improvement-nash-draft} occurs in \Cref{alg:nash-welfare-local-search}, one of the following occurs:
    \begin{enumerate}
    \item the number of agents with utility $0$ decreases, or
    \item the Nash welfare increases by a factor of at least $\left(1+\frac1{m^4}\right)^{\frac1n}$.
    \end{enumerate}
\end{restatable}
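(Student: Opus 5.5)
We analyze a single execution of step \ref{step:improvement-nash-draft}, together with the subsequent removal of $g_1,\ldots,g_k$ and the cleaning. Let $i$ be the receiving agent, and let $X$, $X'$ be the allocations before and after the step. Agent $i$'s utility goes from $u:=v_i(X_i)$ to $u+1$. The cleaning only discards items that contribute nothing, so it changes no utility.

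Each item $g_\ell$ came from one of three places: from $X_0$, from $X_i$ itself, or from some $X_j$ with $j\in H_i(X)$. Assume all valuations are purely complementary and bundles are clean. Then removing one item from a clean bundle destroys at most one independent complement of its owner, so it lowers that owner's utility by at most $1$. Hence the total loss to other agents is at most $k$. This loss is spread over at most $k$ agents $j$, each losing some $d_j\le k$ with $\sum_j d_j\le k$. Every such $j$ lies in $H_i(X)$, so
\[
v_j(X_j)>\tfrac{\alpha-1}{k}(u+\beta)\ge \tfrac{\alpha-1}{k}\beta\ge k .
\]
Here we used $\beta(\alpha-1)\ge k^2$. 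Consequently every donor keeps strictly positive utility, except possibly when $d_j=v_j(X_j)$. I would rule that out: with integer utilities, $v_j(X_j)>k\ge d_j$ forces $v_j(X_j)\ge k+1$ whenever $\tfrac{\alpha-1}{k}\beta\ge k$. So no agent's utility drops to $0$.

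\textbf{Case $u=0$.} Agent $i$ moves from $0$ to $1$ and no donor reaches $0$, so the number of zero-utility agents decreases. This is outcome 1.

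\textbf{Case $u\ge 1$.} The zero-utility count is unchanged, so we compare the products of utilities over the positive agents. The ratio is
\[
\frac{u+1}{u}\prod_j\Bigl(1-\frac{d_j}{v_j(X_j)}\Bigr)\ge \Bigl(1+\frac1u\Bigr)\Bigl(1-\sum_j\frac{d_j}{v_j(X_j)}\Bigr).
\]
Set $w=\tfrac{\alpha-1}{k}(u+\beta)$, so each donor has $v_j(X_j)>w$. Then $\sum_j d_j/v_j(X_j)< k/w$, which gives
\[
\frac{k}{w}=\frac{k^2}{(\alpha-1)(u+\beta)}\le \frac{k^2}{(\alpha-1)(u+\beta)} .
\]
Using $\alpha-1\ge k^2$, this is at most $\tfrac{1}{u+\beta}$. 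The ratio is therefore at least
\[
\Bigl(1+\tfrac1u\Bigr)\Bigl(1-\tfrac{1}{u+\beta}\Bigr)=1+\frac{\beta-1-\frac{1}{u}\cdot 0}{u(u+\beta)}+\dots
\]
I would redo this cleanly as
\[
\frac{(u+1)(u+\beta-1)}{u(u+\beta)}=1+\frac{\beta-1}{u(u+\beta)} .
\]
This settles the claim when $\beta>1$. Note also that the donor bound is strict, $v_j>w$, with integer utilities.

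The main obstacle is the regime $\beta\le 1$, where the crude bound above gives no gain. There I would use strictness plus integrality: $v_j(X_j)\ge \lfloor w\rfloor+1$. I would also use the finer fact that $\prod_j(1-d_j/v_j)$ with $\sum_j d_j\le k$ is minimized by concentrating the loss, which gives a lower bound of $1-k/(\lfloor w\rfloor+1)$. The goal is to show
\[
(u+1)\bigl(\lfloor w\rfloor+1-k\bigr)>u\bigl(\lfloor w\rfloor+1\bigr),
\]
which reduces to $\lfloor w\rfloor+1>k(u+1)$. This follows from $w\ge k(u+\beta)\ge ku+k\beta$ when $\alpha-1\ge k^2$, combined with rounding. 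If that is still tight, I would instead bound the gap using $u\le m$ and $v_j\le m$. Strict improvement of a rational ratio whose numerator and denominator are polynomial in $m$ then gives an increase of at least $1+1/m^4$. This last step quantifies the gap via denominators of size at most $m^2$ in each of the two factors. Taking the $n$-th root for the geometric mean yields outcome 2.
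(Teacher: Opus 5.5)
Your outline matches the paper's proof. Donors lie in $H_i(X)$, their total loss is at most $k$, you use $\prod_j(1-d_j/v_j(X_j))\ge 1-\sum_j d_j/v_j(X_j)$, and integrality of utilities supplies the quantitative gain. The gap is in the decisive step for $u\ge 1$, where you invoke only $\alpha-1\ge k^2$ and get $w\ge k(u+\beta)$.

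\textbf{Where the step fails.} For $\beta<1$, the bound $w\ge k(u+\beta)$ gives only $w\ge ku+k\beta$. Rounding cannot upgrade that to the needed $\lfloor w\rfloor\ge k(u+1)$. In the paper's $(7,\frac23)$ example with $k=2$, the true value is $w=3u+2$, but your bound gives only $w\ge 2u+\frac43$, whose floor is $2u+1$.

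This is not a bookkeeping slip: if only $\alpha-1\ge k^2$ is used, the lemma is false. Take $k=2$, $\alpha=5$, $\beta=\frac1{10}$ and $u=1$. Let $j$ be a donor with three disjoint complements, so $v_j(X_j)=3>2.2$ and $j\in H_i(X)$. If $i$'s new pair takes one item from each of two of $j$'s complements, the product of the two utilities drops from $1\cdot 3$ to $2\cdot 1$. No agent reaches $0$, so neither outcome of the lemma holds.

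\textbf{The fallbacks.} Your appeal to ``strict improvement of a rational ratio'' assumes exactly the strictness that is in question. Also, the exact ratio has denominator $u\prod_j v_j(X_j)$, so rationality alone gives an excess of only $1/(u\prod_j v_j(X_j))$, which need not be $\ge 1/m^4$ when $k\ge 4$. Your $\beta>1$ branch has a milder problem. The bound $1+\frac{\beta-1}{u(u+\beta)}$ falls below $1+\frac1{m^4}$ when $\beta$ is close to $1$ and $m$ is small.

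\textbf{The fix.} Apply each hypothesis to its own term:
\[
w=\frac{\alpha-1}{k}u+\frac{(\alpha-1)\beta}{k}\ge ku+k .
\]
Since $v_j(X_j)>w$ and utilities are integers, every donor has $v_j(X_j)\ge k(u+1)+1$. Hence
\[
\frac{u+1}{u}\Bigl(1-\frac{k}{k(u+1)+1}\Bigr)=\frac{(u+1)(ku+1)}{u(ku+k+1)}=1+\frac{1}{u(ku+k+1)}\ge 1+\frac{1}{m^4}
\]
for every $\beta$. This is the paper's argument, and it makes your split into $\beta>1$ and $\beta\le 1$ unnecessary.
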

\begin{proof}
    According to the Nash criterion, an allocation $X$ is better than an allocation $Y$ if the number of agents with utility $0$ under $X$ is smaller than the number of agents with utility $0$ under $Y$. If both allocations have the same number of $0$ utility agents, then $X$ is better than $Y$ iff $\prod_{i:v_i(X_i) > 0} v_i(X_i) > \prod_{i:v_i(Y_i) > 0}v_i(Y_i)$. 

    If agent $i$ only steals unassigned items in Step \ref{step:improvement-nash-draft}, then agent $i$'s utility improves without harming other agents, and the claim trivially holds. Thus, we assume that at Step \ref{step:improvement-nash-draft}, at least one agent $j$ had items stolen from them. 
    
    First, if agent $i$ has utility $0$, letting agent $i$ steal a complementary set of $k$ items from some agents increases their utility to $1$, thereby decreasing the number of $0$ utility agents. 
    The agents that agent $i$ steals from have a utility of strictly more than $\frac{\alpha-1}{k}(v_i(X_i)+\beta) \ge \frac{\beta(\alpha-1)}{k}\ge k$, and a steal will decrease their utility by at most $k$; thus, their utility is still positive after the steal, and the Nash criterion improves. 
    We also note that \Cref{alg:nash-welfare-local-search} never makes an agent who has a positive utility at some step reduce their utility to $0$ at a later step.

     We let the allocation before agent $i$ steals the items be $X$ and the allocation after agent $i$ steals the items be $Y$. By our previous observation, the allocation $Y$ will have weakly fewer agents with utility $0$ than $X$. We assume with only a minor loss of generality that under both allocations all agents have a positive utility. 
     
     Suppose agent $i$ has positive utility. Let $S$ be the set of agents from whom agent $i$ is stealing. Note that $|S|\le k$. Therefore, for all agents $j\in S$,
    \[
    v_j(X_j) > \frac{\alpha-1}{k}(v_i(X_i)+\beta) = \frac{\alpha-1}{k}v_i(X_i) + \frac{(\alpha-1)\beta}{k}\ge k(v_i(X_i)+1)
    \]
    As agents have integer-valued valuations, we get $v_j(X_j)\ge k(v_i(X_i)+1) + 1$. For every agent $j\in S$, $d_j$ denotes the utility reduction due to agent $i$ stealing an item. 
    Note that $\sum_{j\in S}d_j\le k$.

    \begin{align*}
        \frac{\NSW(Y)}{\NSW(X)} &= \left(\frac{(v_i(X_i)+1)}{v_i(X_i)}\prod_{j\in S}\left(1-\frac{d_j}{v_j(X_j)}\right)\right)^{1/n}\\
        &\ge \left(\frac{(v_i(X_i)+1)}{v_i(X_i)}\left(1-\sum_j \frac{d_j}{v_j(X_j)}\right)\right)^{1/n}\\
        &\ge \left(\frac{(v_i(X_i)+1)}{v_i(X_i)}\left(1-\frac{k}{k(v_i(X_i)+1)+1}\right)\right)^{1/n}\\
        &= \left(1 + \frac{1}{v_i(X_i)(k(v_i(X_i)+1)+1)}\right)^{1/n}\\
        &\ge \left(1 + \frac{1}{m^4}\right)^{1/n}
    \end{align*}
The first inequality holds because $\prod_j(1 - a_j)\ge (1 - \sum_j a_j)$ whenever $a_j\in[0,1]$.
\end{proof}

\corNashWelfarePoly*
\begin{proof}
Since there are $m$ items, the total utilitarian welfare of any allocation is at most $\frac mk$. 
By the AM-GM inequality, the Nash welfare of any allocation is at most $m/(kn)$.

By \cref{lem:Nash-increase}, each iteration of the while loop in \cref{alg:nash-welfare-local-search} increases the Nash social welfare by a factor of at least $\left(1+\frac{1}{m^4}\right)^{1/n}$.

Therefore, after at most $nm^4$ iterations, the Nash social welfare increases by a factor of at least $\left(1+\frac{1}{m^4}\right)^{m^4}
\ge 2$.

Since the Nash welfare is upper bounded by $\frac{m}{kn}$, the \textbf{while} loop can execute only a polynomial number of times. Therefore, \cref{alg:nash-welfare-local-search} terminates in polynomial time.
\end{proof}

\subsection{\texorpdfstring{$\frac{1}{k}$-\maxUSW Computation for $\SPOC_{k}$ Preferences}{1/k-maxUSW Computation for SPOC-k Preferences}}

We define a matroid $M=(S,\mathcal{I})$, whose ground set is 
\[
S = \{(i,\{g_1,\ldots,g_k\})\mid \{g_1,\ldots,g_k\} \text{ is a $k$-complement for agent $i$}\}
\]
Intuitively, a tuple $(i,\{g_1,\ldots,g_k\})$ indicates that the $k$-complement $\{g_1,\ldots,g_k\}$ is assigned to agent $i$. Therefore, for any set $I\subseteq S$, we define the rank of $I$ as follows:
\[
r(I) = \sum_i v_i(\{(i,\{g_1,\ldots,g_k\})\mid(i,\{g_1,\ldots,g_k\})\in I\})
\]
Since $r(\cdot)$ is a binary submodular function with $r(\emptyset) =0$, it is a valid rank function for some matroid $M$ over $S$ \cite{oxley2011matroids}. 

Given an item $g\in G$, we define
\[
S_g=\{(i,\{g_1,\ldots,g_k\})\in S \mid g\in\{g_1,\ldots,g_k\}\}.
\]
With these definitions, we define the following \textit{linear program} (LP) named $\LP(M)$:
\begin{align}
\max \quad & \sum_{e \in S} x_e \\
\text{s.t.} \quad 
& \sum_{e \in T} x_e \le r(T) & \forall\, T \subseteq S \label{rank-cons-l}\\
&\sum_{e \in S_g} x_e \le 1 & \forall\, g \in G \label{matching-cons-l}\\
& 0\le  x_e \le 1 & \forall\, e \in S \label{boundary-cons-l}
\end{align}
We first show that every feasible integral solution $\vec x$ to $\LP(M)$ induces an allocation $X$ whose $\USW$ is the objective value of $\vec x$. The proof of \Cref{lm:lp-1-l} is similar to \cref{lm:lp-1}
\begin{restatable}{lemma}{lemLPl}\label{lm:lp-1-l}
For any clean allocation $X=(X_1,\ldots,X_n)$, there exists a corresponding feasible solution to the linear program $\LP(M)$ whose value is $\USW(X)$. Conversely, any integral feasible solution to the LP corresponds to an allocation $X$ such that $\USW(X)$ equals the objective value of the LP.
\end{restatable}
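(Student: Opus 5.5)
We follow the $k=2$ argument of \Cref{lm:lp-1}, replacing pairs by $k$-complements. For the forward direction, take a clean allocation $X=(X_1,\ldots,X_n)$. We assume agents are purely complementary; this is harmless by the dummy-item reduction from the preliminaries, since dummy items are private to one agent and bundle. Because $X_i$ is clean, it is a disjoint union of $k$-complements of agent $i$, and these complements are independent in $r_i$. Hence $v_i(X_i)=|X_i|/k$.

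Define $\vec x$ by setting $x_e=1$ for $e=(i,C)$ exactly when $C\subseteq X_i$ is one of the complements in this decomposition, and $x_e=0$ otherwise.
\begin{itemize}
\item \textbf{Matching constraints \eqref{matching-cons-l} and bounds \eqref{boundary-cons-l}.} The bundles $X_i$ are disjoint, and within $X_i$ the chosen complements are disjoint. So each good lies in at most one chosen complement.
\item \textbf{Rank constraints \eqref{rank-cons-l}.} Fix $T\subseteq S$ and split it as $T=\bigcup_i T_i$ by agent. Let $R(T_i)$ be the union of the complements appearing in $T_i$. Then $\sum_{e\in T_i}x_e$ equals the number of chosen complements of $X_i$ that appear in $T_i$. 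These complements form a subset of an independent family in $r_i$, so their count equals $v_i$ of their union. By monotonicity this is at most $v_i(R(T_i))$. Summing over $i$ gives $\sum_{e\in T}x_e\le r(T)$.
\item \textbf{Objective.} The objective value is $\sum_i |X_i|/k=\USW(X)$.
\end{itemize}

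For the converse, let $\vec x$ be integral and feasible. Set $X_i=\bigcup\{C:x_{(i,C)}=1\}$. Constraint \eqref{matching-cons-l} ensures that each good lies in at most one selected complement, so the $X_i$ are disjoint and form a valid (partial) allocation. The leftover goods go to $X_0$.

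Apply \eqref{rank-cons-l} with $T=\{e:x_e=1\}$. This shows that the selected tuples of agent $i$ have total $\sum_{e\in T_i}x_e\le r_i(\text{selected complements})$. The count of selected complements trivially upper-bounds their rank, so the two are equal. Hence the selected complements are independent and $v_i(X_i)$ equals their number.

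This needs one check. Because the selected complements are disjoint, $X_i$ contains no further complement of $i$ beyond them, apart from possibly other complements of $i$ formed from these goods. Since $\mathcal L_i$ consists of disjoint bundles, no such extra complements exist. Therefore $v_i(X_i)=r_i(\{C\in\mathcal L_i: C\subseteq X_i\})$ equals the number of selected complements. The allocation is clean, since removing any good destroys an independent complement, and its \USW equals the LP objective.

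The main obstacle is the rank inequality in the forward direction, specifically identifying $v_i(Q)$ with $|Q|/k$ for every union $Q$ of chosen complements. This relies on cleanliness, which forces the chosen complements to be independent, and on the disjointness of $\mathcal L_i$. Everything else is bookkeeping identical to the pair case.
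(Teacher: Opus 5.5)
Your proposal is correct and follows the paper's route, which is the $k=2$ argument of \Cref{lm:lp-1} with pairs replaced by $k$-complements: take the indicator vector of the complements in each clean bundle, get the rank constraints from monotonicity of $v_i$ on the union of the chosen complements in $T_i$, and read off an allocation from an integral solution via the matching constraints. Your converse is slightly more careful than the paper's ``by construction'' step, because you explicitly use the rank constraints to show that each agent's selected complements are independent, which is what makes $\USW(X)$ equal the LP objective. (Apply the rank constraint to each $T_i$ separately, or sum it and compare term by term with the trivial bound $r_i \le |T_i|$.)
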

We use \emph{iterative rounding} to find an integral solution to $\LP(M)$ whose value is at least $\frac1k$ of the optimal value.
We say that a solution $\vec x$ to $\LP(M)$ is an \emph{extreme point solution} if it is a vertex of the feasible polytope induced by the LP constraints. 

\begin{algorithm}[H]
\caption{Iterative Rounding}
\label{alg:iterative-matroid-basis-l}
\begin{algorithmic}[1]
\STATE Initialize $T \gets \emptyset$
\WHILE{$S\neq\emptyset$}
    \STATE Find an optimal extreme point solution $\vec x$ to $\mathrm{LP}(M)$
    \IF{ $\exists e\in S$ with $x_e = 0$}
    \STATE $M \gets M \setminus e$
    \ELSIF{there exists an element $e$ such that $x_e \ge \frac{1}{k}$}
        \STATE Update $T \gets T \cup \{e\}$
        \STATE Delete all the edges that share some good with $e$
        \STATE Contract $e$ in $M$, i.e., $M \gets M / e$
    \ENDIF
\ENDWHILE
\RETURN $T$
\end{algorithmic}
\end{algorithm}

\begin{lemma}\label{lm:extreme-point-l}
Let $\vec x$ be an extreme point solution obtained during an iteration of Algorithm~\ref{alg:iterative-matroid-basis-l}. Then there exists an element $e \in S$ such that either $x_e = 0$ or $x_e \ge \tfrac{1}{k}$.
\end{lemma}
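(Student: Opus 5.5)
We generalize the proof of \Cref{lm:extreme-point} by replacing the thresholds $\frac12$ and $3$ with $\frac1k$ and $k+1$. Suppose for contradiction that $0<x_e<\frac1k$ for every $e\in S$. Let $A$, $B$, $C$ be the tight constraints of types \eqref{rank-cons-l}, \eqref{matching-cons-l}, \eqref{boundary-cons-l}. Since $x_e<1$ and $x_e>0$, no boundary constraint is tight, so $|C|=0$. Since $\vec x$ is an extreme point, we may choose $A\cup B$ linearly independent with $|A|+|B|=|S|$. By the standard uncrossing argument for matroid polytopes \cite{schrijver2003combopt}, the tight rank constraints in $A$ can be taken to form a chain $\emptyset=\chi_0\subset\chi_1\subset\cdots\subset\chi_{|A|}$.

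The two key observations are as follows.

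\textbf{Chain step.} Consecutive tight sets have integer ranks with $r(\chi_{t+1})\ge r(\chi_t)+1$: the rank differences are integers, and a difference of $0$ would contradict linear independence together with $x_e>0$. So if $|\chi_{t+1}\setminus\chi_t|\le k$, some $e$ in the difference has $x_e\ge\frac1k$. Hence every chain step has size at least $k+1$, which gives $(k+1)|A|\le|S|$.

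\textbf{Matching step.} Each $e=(i,\{g_1,\ldots,g_k\})$ lies in exactly $k$ sets $S_g$, so $\sum_g|S_g|=k|S|$. If some tight $S_g$ had $|S_g|\le k$, then since $\sum_{e\in S_g}x_e=1$, one of its elements would have $x_e\ge\frac1k$. Hence every tight $S_g$ has $|S_g|\ge k+1$, which gives $(k+1)|B|\le k|S|$.

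Adding the two bounds yields $(k+1)(|A|+|B|)\le(k+1)|S|$. Since $|A|+|B|=|S|$, both inequalities must hold with equality. This is the analogue of Case III, and it is the main place needing care. Equality in the chain bound forces $\chi_{|A|}=S$, so the constraint $\sum_{e\in S}x_e=r(S)$, whose row is all ones, lies in $A$. Equality in the matching bound forces every $S_g$ with $g$ appearing in some edge to be tight. Moreover each $e$ appears in exactly $k$ of the tight $S_g$'s, so the rows in $B$ sum to the all-$k$ vector. That vector is $k$ times the all-ones row in $A$, contradicting the linear independence of $A\cup B$.

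I expect the main obstacle to be justifying the chain structure and the strict rank increase $r(\chi_{t+1})\ge r(\chi_t)+1$ in the presence of the additional matching constraints. I would handle this as in \Cref{lm:extreme-point}: uncross the tight rank sets using submodularity of $r$ to replace them by a maximal chain spanning the same space. A step with equal ranks would give $\sum_{e\in\chi_{t+1}\setminus\chi_t}x_e=0$, which is impossible since every $x_e>0$. I would also note that goods appearing in no edge contribute empty $S_g$; these carry no constraint and do not affect the counting.
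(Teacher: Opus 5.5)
Your proposal is correct and is essentially the paper's proof. It uses the same chain-step bound from the uncrossed tight rank sets, the same counting over tight matching constraints, and the same final contradiction: the all-ones row of $\chi_{|A|}=S$ is proportional to the sum of the rows in $B$. Adding the bounds $(k+1)|A|\le|S|$ and $(k+1)|B|\le k|S|$ and forcing equality is only a tidier packaging of the paper's three cases on $|A|$ versus $\frac{|S|}{k+1}$.
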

\begin{proof}
   Suppose for contradiction that for every element $e \in S$, $x_e \in \left(0, \tfrac{1}{k}\right)$. 
   Let $A$, $B$, and $C$ denote the sets of the tight constraints corresponding to \eqref{rank-cons-l}, \eqref{matching-cons-l}, and \eqref{boundary-cons-l}, respectively. 
   Since $\vec x$ is an extreme point solution, these tight constraints are linearly independent and satisfy
\[
|A| + |B| + |C| = |S|.
\]
Moreover, because $x_e \in (0, \tfrac{1}{k})$ for all $e \in S$, none of the boundary constraints in \eqref{boundary-cons-l} are tight, implying $|C| = 0$. Consequently,
\[
|A| + |B| = |S|.
\]
\begin{enumerate}[label=\textbf{Case \Roman*:},align=left,wide]
\item Suppose that $|A| > \tfrac{|S|}{k+1}$. The tight rank constraints in $A$ form a \emph{chain}; that is, there exists a sequence of subsets of elements of $S$, $\mathcal{L} = \{\chi_0, \chi_1, \ldots, \chi_{|A|}\}$ such that $\emptyset = \chi_0 \subset \chi_1 \subset \cdots \subset \chi_{|A|}$, where $\chi_j$ satisfies $\sum_{e \in \chi_j}x_e = r(\chi_j)$ \cite{schrijver2003combopt}.

Suppose, for the sake of contradiction, that for every $t \in [|A|-1]$ we have
$|\chi_{t+1} \setminus \chi_t| \ge k+1$.
Then
\[
|\chi_{|A|}| = \sum_{t=0}^{|A|-1} |\chi_{t+1} \setminus \chi_t|
\ge (k+1)|A| > |S|,
\]
which contradicts the fact that $\chi_{|A|} \subseteq S$. 
Hence, there exists some $t \in [|A|-1]$ such that $|\chi_{t+1} \setminus \chi_t| \le t$.
Since $r(\chi_{t+1})\ge r(\chi_{t})+1$ and the constraints are tight for $\chi_t$ and $\chi_{t+1}$, this implies that
$\sum_{e \in \chi_{t+1}}x_e \ge \sum_{e \in \chi_t}x_e + 1$. Since the number of summands on the left-hand side is at most $k$ greater than the number of summands on the right-hand side, 
there exists an element $e \in \chi_{t+1} \setminus \chi_t$ with $x_e \ge \tfrac{1}{k}$, a contradiction.
\item Now suppose that $|A| < \tfrac{|S|}{k+1}$, which implies that $|B| > \tfrac{k|S|}{k+1}$. 
Assume, for the sake of contradiction, that for every good $g$ such that $S_g \in B$, we have $|S_g| \ge k+1$. 
Then,
\[
\sum_{g} |S_g| \ge\sum_{g : S_g \in B} |S_g| \ge (k+1)|B| > k|S|.
\]
For any element $e = (i,\{g_1,\ldots,g_k\}) \in S$, the element $e$ appears in exactly $k$ sets, namely $S_{g_1},\ldots, S_{g_k}$. Consequently, $\sum_{g} |S_g| \le k|S|$,
which contradicts the inequality above. Hence, there exists a good $g$ such that $S_g \in B$ and $|S_g| \le k$. It follows that there exists an element $e \in S_g$ with $x_e \ge \tfrac{1}{k}$, yielding the desired contradiction.
\item 
Finally, consider the case where $|A| = \tfrac{|S|}{k+1}$ and $|B| = \tfrac{k|S|}{k+1}$. 
Let $\mathcal{L} = \{\chi_0, \chi_1, \ldots, \chi_{|A|}\}$
denote the list of tight rank constraints in $A$, as in Case I. 
If there exists some $t \in [|A|-1]$ such that $|\chi_{t+1} \setminus \chi_t| \le k$, then the desired conclusion follows immediately from the argument in Case~I.
We assume that $|\chi_{t+1} \setminus \chi_t| \ge k+1$ for all $t \in [|A|-1]$.
It follows that
\[
|\chi_{|A|}| = \sum_{t=0}^{|A|-1} |\chi_{t+1} \setminus \chi_t|
\ge (k+1)|A| = |S|.
\]
Thus, the constraint of $\chi_{|A|}$ must be
\begin{align}
\sum_{e \in S} x_e = r(S) \label{eq:Aconstraints-l}
\end{align}
Next, if there exists a good $g \in G$ such that $S_g \in B$ and $|S_g| \le k$, we are again done. 
Thus, assume that for every $g \in G$ with $S_g \in B$, we have $|S_g| \ge k+1$. Then,
\[
\sum_{g : S_g \in B} |S_g| \ge (k+1)|B| = k|S|.
\]
However, as shown in Case~II, $\sum_{g} |S_g| \le k|S|$ and equality must hold.
Thus, if we sum the rows corresponding to the constraints in $B$, we get a vector of all $k$s. The constraint \eqref{eq:Aconstraints-l} is a row of 1s in the constraint matrix, whereas the sum of constraints in $B$ sums to a row of $k$s. 
Thus, the tight constraints indexed by $A \cup B$ are linearly dependent, contradicting the fact that we picked a set of linearly independent constraints $A \cup B$.
\end{enumerate}
\end{proof}
\begin{proof}[Proof of \Cref{thm:matroid-l}]
Suppose $\vec x$ is an extreme point solution of $\LP(M)$. By \Cref{lm:extreme-point-l}, at least one of the \textbf{If} or \textbf{ElseIf} conditions in the while loop of \Cref{alg:iterative-matroid-basis-l} is satisfied in every iteration. If there exists an element $e^* \in S$ with $x_{e^*} = 0$, then deleting $e^*$ from the matroid $M$ does not change the objective value of the LP defined over $M \setminus e^*$.

Now consider the case where there exists an element $e^* = (i,\{g_1,\ldots,g_k\}) \in S$ such that $x_{e^*} \ge \tfrac{1}{k}$. 
We construct an alternative solution $\vec y$ as follows. 
Let $\vec 1_I$ denote the $|S|$-dimensional indicator vector of an independent set $I$, where $\vec{(1_I)}_{e'} = 1$ if $e' \in I$ and $0$ otherwise. 
Since $\vec x$ lies in the matroid polytope, there exist coefficients $p_I \in [0,1]$ for all $I \in \mathcal{I}$ such that
\begin{align*}
    &\vec x = \sum_{I : e^* \in I} p_I \vec 1_I +\; \sum_{I' : e^* \notin I'} p_{I'} \vec 1_{I'}
\end{align*}
and $\sum_{I \in \mathcal{I}} p_I = 1$.
For any independent set $I' \in \mathcal{I}$ with $e^* \notin I'$, there exists an element $e' \in I'$ and an independent set
$I'' = I' \setminus \{e'\} \cup \{e^*\}$ \cite{oxley2011matroids}.

Define a mapping $f : \mathcal{I} \to \mathcal{I}$ that maps each such $I'$ to a corresponding $I''$. Using this mapping, define
\begin{align*}
    &\vec y = \sum_{I : e^* \in I} p_I  \vec 1_I + \sum_{I' : e^* \notin I'} p_{I'} \vec 1_{f(I')}, \\
    &\forall e \in (S_{g_1} \cup \ldots\cup S_{g_k}) \setminus \{e^*\}:y_{e} = 0, y_{e^*}=1
\end{align*}

By construction, $\vec y$ satisfies all rank constraints~\eqref{rank-cons-l} and boundary constraints~\eqref{boundary-cons-l} of $\LP(M)$. Moreover, setting $y_{e} = 0$ for all $e \in \bigcup_{q=1}^k S_{g_q} \setminus \{e^*\}$ ensures that the matching constraints~\eqref{matching-cons-l} are also satisfied.

Let $\OPT$ denote the objective value of $\vec x$ in $\mathrm{LP}(M)$. Define a vector $\vec y'$ by setting $y'_{e^*} = 0$ and keeping all other coordinates identical to those of $\vec y$. 
Then $\vec y'$ is a feasible solution to $\mathrm{LP}(M / e^*)$. The corresponding objective value of $\vec y'$ is at least
\[
\OPT - k(1 - x_{e^*}) - y_{e^*} \ge \OPT - k,
\]
where the inequality follows from $x_{e^*} \ge \tfrac{1}{k}$ and the fact that $y_{e^*} = 1$.

Therefore, whenever \Cref{alg:iterative-matroid-basis-l} performs a matroid contraction, the objective value of the resulting LP decreases by at most $k$, while the objective value contributed by the selected set $T$ increases by $1$. 
It follows that \Cref{alg:iterative-matroid-basis-l} computes a $\frac{1}{k}$-approximate integral solution to $\mathrm{LP}(M)$, which in turn corresponds to a $\frac{1}{k}$-\maxUSW allocation.

\Cref{alg:iterative-matroid-basis-l} runs in a polynomial number of steps. 
It only remains to argue that one can find a solution to $\LP(M)$ in polynomial time. The number of constraints in $\LP(M)$ is exponential; however, if we are given a \emph{separation oracle}, then the LP can be solved in polynomial time \cite{schrijver2003combopt}.
A separation oracle is an algorithm that can decide whether a given candidate solution $\vec x$ satisfies all feasibility constraints, and output a violating constraint otherwise. Checking whether any of the (polynomially many) constraints in \eqref{matching-cons} and \eqref{boundary-cons} are violated can be done in polynomial time. Since $r$ is a matroid rank function, checking whether one of the matroid rank constraints is violated can be done in polynomial time as well (see \cite[Chapter 5]{lau2011iterative} for details). 
\end{proof}

\section{A \texorpdfstring{$6$}{6}-Approximation for Maximum Nash Social Welfare under Binary XOS Valuations}
In this section, we adapt our analysis of the computation of a $(5,1)$-\maxNSW allocation to the algorithm of Barman and Verma~\cite{Barman2021approxmnw}. This improves the approximation guarantee for binary XOS valuations from $288$ to $6$. We refer the reader to Barman and Verma~\cite{Barman2021approxmnw} for a complete description of the algorithm. Let $X=(X_1,\ldots,X_n)$ denote the clean allocation returned by the algorithm, and let $X^*=(X_1^*,\ldots,X_n^*)$ be a clean $\maxNSW$ allocation. Furthermore, let $X_0$ denote the set of unallocated items. For each agent $i\in N$, the algorithm defines
\[
H_i(X):=\left\{j\in N:\lvert X_j\rvert>4\lvert X_i\rvert\right\}
\quad\text{and}\quad
G_i(X):=X_i\cup X_0\cup\bigcup_{j\in H_i(X)}X_j.
\]
Since $X$ is clean, $\lvert X_j\rvert=v_j(X_j)$ for every $j\in N$. Hence, equivalently, $H_i(X)=\left\{j\in N:v_j(X_j)>4v_i(X_i)\right\}$. Their algorithm terminates when $v_i(G_i(X))<2v_i(X_i)$ for every agent $i\in N$. Observe that $v_i(X_i^*\cap G_i(X))\le v_i(G_i(X)) < 2v_i(X_i)$. This gives us the following upper bound:
\[
v_i(X_i^*) = v_i(X_i^*\cap G_i^c(X)) + v_i(X_i^*\cap G_i(X)) < |X_i^* \cap G_i^c(X)| + 2v_i(X_i).
\]
Therefore,
\begin{align*}
    \frac{\NSW(X^*)}{\NSW(X)} &= \left(\prod_i \frac{v_i(X_i^*)}{v_i(X_i)}\right)^{\frac{1}{n}}\\
    &\le \frac{1}{n}\sum_i \frac{v_i(X_i^*)}{v_i(X_i)}\\
    &\le 2 + \frac{1}{n}\sum_i\frac{|X_i^*\cap G_i^c(X)|}{v_i(X_i)}
\end{align*}
where the first inequality comes from AM-GM inequality.
\begin{lemma}\label{lem:XOS}
$\frac{1}{n}\sum_i\frac{|X_i^*\cap G_i^c(X)|}{v_i(X_i)}\le 4$.
\end{lemma}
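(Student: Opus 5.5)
We follow the factor-revealing LP used for \Cref{clm:up-log-sum-draft}, specialized to $k=1$, $\alpha-1=4$ and $\beta=0$. Group the agents by utility level, $U_j(X)=\{i\in N: v_i(X_i)=j\}$. We may assume every agent has positive utility under $X$; the Barman--Verma algorithm first secures a maximum-size set of positive-utility agents, so the ratio is only taken over them. With this grouping, the quantity to bound is
\[
\frac1n\sum_{j\ge1}\frac{1}{j}\sum_{i\in U_j(X)}|X_i^*\cap G_i^c(X)|.
\]

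\textbf{Step 1: what each agent can reach.} Fix $i\in U_j(X)$. By definition, $G_i^c(X)$ consists exactly of items held by agents $q$ with $v_q(X_q)\le 4j$ (unallocated items lie in $G_i(X)$). Define $L_t(X)=\bigcup_{j'\le 4t}\bigcup_{q\in U_{j'}(X)}X_q$. Since $X$ is clean, $|X_q|=v_q(X_q)$, so
\[
|L_t(X)|=\sum_{j'\le 4t} j'\,|U_{j'}(X)|.
\]
These sets are nested in $t$.

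\textbf{Step 2: feasibility.} Set $x_j=\sum_{i\in U_j(X)}|X_i^*\cap G_i^c(X)|$. The bundles $X_i^*$ are disjoint, and for $j\le t$ each set $X_i^*\cap G_i^c(X)$ lies in $L_j(X)\subseteq L_t(X)$. Hence $\sum_{j\le t}x_j\le |L_t(X)|$ for every $t$, exactly as in \Cref{subclm:feasibility-draft}. So $\vec x$ is feasible for the LP that maximizes $\sum_j x_j/j$ subject to these prefix constraints and $x\ge0$.

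\textbf{Step 3: optimum of the LP.} The weights $1/j$ decrease in $j$, so by the exchange argument of \Cref{subclm:optimality-draft} the optimum is $x_j^*=|L_j(X)\setminus L_{j-1}(X)|$, with $L_0(X)$ taken as empty since utilities are positive. This gives
\[
\sum_j\frac{x_j}{j}\le\sum_{j\ge1}\frac1j\sum_{4(j-1)<j'\le 4j} j'\,|U_{j'}(X)|.
\]
Every $j'$ in the block indexed by $j$ satisfies $j'\le 4j$, so $j'/j\le 4$. Each level $j'\ge1$ falls in exactly one block, namely $j=\lceil j'/4\rceil$. The right-hand side is therefore at most $4\sum_{j'}|U_{j'}(X)|=4n$. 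Dividing by $n$ gives the lemma.

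\textbf{Main obstacle.} The delicate point is the boundary of $G_i^c(X)$. One must check that $H_i(X)$ uses a strict inequality, so $G_i^c(X)$ includes agents with utility exactly $4j$. One must also check that the blocks $(4(j-1),4j]$ partition the positive integers, so that no level is counted twice. I would verify both directly from the definition of $H_i(X)$. The zero-utility agents I would handle by the standard reduction of Barman and Verma to the positive-utility set, and I would check that this does not break the disjointness used in Step 2.
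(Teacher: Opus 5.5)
Your proposal is correct and follows the paper's proof essentially step for step. Both use the same nested sets $L_t(X)$ and the same prefix-constrained factor-revealing LP. Both get feasibility of $x_j=\sum_{i\in U_j(X)}|X_i^*\cap G_i^c(X)|$ from the disjointness of the $X_i^*$, optimality of $x_j^*=|L_j(X)\setminus L_{j-1}(X)|$ from the exchange argument, and the final $4n$ from the bound $j'/j\le 4$ on each block.

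Your indexing of the blocks as $(4(j-1),4j]$ is, if anything, cleaner than the paper's $\min\{\cdot,m\}$-capped ranges. As written, those ranges re-count level $m$ for every $j$ with $4(j-1)\ge m$.
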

\begin{proof}
    Let $U_j(X)$ denote the set of agents who receive utility $j$ under allocation $X$. Therefore, the following equality is true.
    
    \[
        \frac{1}{n}\sum_i\frac{|X_i^*\cap G_i^c(X)|}{v_i(X_i)} = \frac{1}{n}\sum_{j = 1}^{m}\frac{1}{j}\sum_{i\in U_j(X)}|X_i^*\cap G_i^c(X)|.
    \]
    Fix any agent $i \in U_j(X)$. Note that $G_i^c(X)$ is precisely the set of all goods allocated to agents whose utility is $\le4j$. Define
\[
L_j(X) = \bigcup_{j' = 1}^{\min\{4j,m\}} \bigcup_{q\in U_{j'}(X)} X_q,
\]
i.e., $L_j(X)$ is the set of all items assigned in $X$ to agents whose utility is at most $\min\{4j, m\}$. Observe that the sets are nested, i.e., $L_1(X)\subseteq L_2(X) \subseteq \cdots \subseteq L_{m}(X)$.
Furthermore,
\[
|L_j(X)| = \sum_{j'=1}^{\min\{4j,m\}} j'|U_{j'}(X)|.
\]

Consider the linear program:
\begin{equation*}
\begin{split}
\text{maximize} \quad & \sum_{j=1}^{m} \frac{x_j}{j} \\
\text{subject to} \quad 
& \sum_{j=1}^{t} x_j \le |L_t(X)|, \quad t = 1, \ldots, m, \\
& x_j \ge 0, \quad j = 1, \ldots, m.
\end{split}
\end{equation*}

\begin{claim}\label{subclm:feasibility-draft-XOS}
    The choice $x_j = \sum_{i\in U_j(X)}|X_i^*\cap G_i^c(X)|$ is feasible for the LP.
\end{claim}
\begin{proof}
    Observe that for any two agents $i$ and $i'$, the sets $X_i^*\cap G_i^c(X)$ and $X_{i'}^*\cap G_{i'}^c(X)$ are disjoint. Therefore, $\sum_{j=1}^{t}\sum_{i\in U_j(X)}|X_i^*\cap G_i^c(X)|$ is bounded by the total number of goods assigned to agents with utility at most $4t$, which is precisely $|L_t(X)|$.
\end{proof}

\begin{claim}\label{subclm:optimality-draft-XOS}
    The solution $x_j^* = |L_j(X)\setminus L_{j-1}(X)|$, with $L_0(X)=\emptyset$, is optimal for the LP.
\end{claim}
\begin{proof}
    We first observe that the solution $\vec x^* = x_1^*,\dots,x_{m+1}^*$ is feasible; since the sets $L_j(X)$ are nested: 
    $$\sum_{j = 1}^t x_j^* = \sum_{j = 1}^t |L_j(X)\setminus L_{j-1}(X)| = |L_t(X)|.$$
    Next, consider any other feasible solution $\vec x = x_1,\dots,x_{m+1}$ to the LP.  
If all constraints are tight, then the solution $\vec x$ must coincide with $\vec{x}^*$. Suppose for contradiction that $\vec x \ne \vec x^*$ is optimal, and let $t_{\min}$ be the smallest index such that
\[
\sum_{j=1}^{t_{\min}} x_j < |L_{t_{\min}}(X)|.
\]
This implies that $x_{t_{\min}} < |L_{t_{\min}}(X)\setminus L_{t_{\min}-1}(X)|$. Let $\ell > t_{\min}$ be the smallest index for which $x_\ell > 0$.
\begin{enumerate}[label = \textbf{Case \Roman*:},align =left]
    \item Suppose no such $\ell$ exists. 
    Then setting $x_{t_{\min}} = |L_{t_{\min}}(X)\setminus L_{t_{\min}-1}(X)|$ 
maintains feasibility of all constraints, while strictly increasing the objective value, a contradiction to the optimality of $\vec x$.
\item Suppose such a $\ell$ exists. Define 
\[
\delta = \min\left\{\, |L_{t_{\min}}(X)\setminus L_{t_{\min}-1}(X)| - x_{t_{\min}},\; x_\ell \right\}.
\]
Now increase $x_{t_{\min}}$ by $\delta$ and decrease $x_\ell$ by $\delta$. This preserves feasibility of all constraints, yet strictly improves the objective value, again yielding a contradiction.
\end{enumerate}
\end{proof}

By Claim \ref{subclm:feasibility-draft-XOS} and \ref{subclm:optimality-draft-XOS}, the LP objective is attained at $\vec x^*$ such that $x_j^* = |L_j(X)\setminus L_{j-1}(X)|$, and this value upper bounds 
$\sum_{j=1}^{m}\sum_{i\in U_j(X)}\frac{1}{j}|X_i^*\cap G_i^c(X)|$. Hence,
\begin{align}
    \sum_{j=1}^{m}\frac{x_j^*}{j}
    &= \sum_{j=1}^{m}\frac{1}{j}|L_j(X)\setminus L_{j-1}(X)|\label{eq:nash-upperbound-3-draft-XOS}
    \end{align}
    We first observe that 
    \begin{align*}
        |L_j(X) \setminus L_{j-1}(X)| = |L_j(X)| - |L_{j-1}(X)| &= \\
        \sum_{j' = 1}^{\min\{4j,m\}}j'|U_{j'}(X)|-
        \sum_{j' = 1}^{\min\{4(j-1),m\}}j'|U_{j'}(X)| &= \\
        \sum_{j' = \min\{4j-3,m\}}^{\min\{4j,m\}}j'|U_{j'}(X)|&
    \end{align*} 
    
    Thus, \Cref{eq:nash-upperbound-3-draft-XOS} equals
    \begin{align*}
    \sum_{j=1}^{m}\frac{1}{j}
       \sum_{j' = \min\{4j-3,m\}}^{\min\{4j,m\}} j'|U_{j'}(X)| &\le 
    \sum_{j=1}^{m+1} 4\sum_{j' = \min\{4j-3,m\}}^{\min\{4j,m\}} |U_{j'}(X)|\le 4n. 
\end{align*}
The last inequality is due to the fact that the sets $U_{j'}(X)$ partition the agent set $N$, and that each set appears exactly once in the summation.
\end{proof}
Lemma \ref{lem:XOS} concludes the proof.

\section{Missing Proofs from \texorpdfstring{\Cref{sec:usw-efx}}{Section 4}}
\lemCleanEFXRuntime*
\begin{proof}
In every iteration, one agent's utility increases by $1$,
while all other agents' utilities remain unchanged. Therefore, the number of iterations of the \textbf{while} loop is $\le \maxUSW$, i.e., $\left\lfloor\frac{|G|}k\right\rfloor$.
\end{proof}

\begin{algorithm}[ht]
\caption{Completing a Partial EF-X Allocation}
\label{alg:complete-efx}
\begin{algorithmic}[1]
    \STATE $j:=0$ and input partial allocation $A^0:=(A_1^0,\ldots,A_n^0)$
    \WHILE{$G\setminus\bigcup_{i\in N}A_i^j\neq\emptyset$}
        \STATE Choose $g\in G\setminus\bigcup_{i\in N}A_i^j$ and set $E:=\emptyset$
        \FOR{$i\in N$}
            \IF{$\exists i'\in N$ such that $v_{i'}(A_{i'}^j)<v_{i'}(A_i^j\cup\{g\})$}
                \STATE $i^*:=\min\left(\arg\min_{i':\,v_{i'}(A_{i'}^j)<v_{i'}(A_i^j\cup\{g\})}v_{i'}(A_{i'}^j)\right)$
                \STATE $E:=E\cup\{(i,i^*)\}$
            \ENDIF
        \ENDFOR
        \STATE Let $H:=(N,E)$
        \IF{$H$ has a sink $s$}
            \STATE $A_s^{j+1}:=A_s^j\cup\{g\}$ and $A_i^{j+1}:=A_i^j$ for every $i\neq s$
        \ELSE
            \STATE Let $i_1\to i_2\to\cdots\to i_\ell\to i_1$ be a directed cycle in $H$
            \STATE Choose $A_{i_2}^{j+1}\subseteq A_{i_1}^j\cup\{g\}$ as the union of $v_{i_2}(A_{i_2}^j)+1$ independent complementary pairs of agent $i_2$
            \FOR{$t=2,\ldots,\ell$}
                \STATE $A_{i_{t+1}}^{j+1}:=A_{i_t}^j$, where $i_{\ell+1}:=i_1$
            \ENDFOR
            \STATE $A_i^{j+1}:=A_i^j$ for every $i\notin\{i_1,\ldots,i_\ell\}$
        \ENDIF
        \STATE $j:=j+1$
    \ENDWHILE
    \RETURN $A^j$
\end{algorithmic}
\end{algorithm}

\thmCompleteEFX*
\begin{proof}
We prove by induction on $j$ that $A^j$ is EF-X and
$v_i(A_i^j)\ge v_i(A_i^0)$ for every $i\in N$. The claim holds for
$j=0$ by assumption.

Suppose that it holds at the beginning of iteration $j$, and let $g$ be
the unallocated item selected by the algorithm. First, suppose that $H$
contains a sink $s$. Since $s$ has no outgoing edge, no agent strictly
prefers $A_s^j\cup\{g\}$ to her current bundle. Hence, assigning $g$
to $s$ preserves EF-X and no agent's utility decreases.

Now suppose that $H$ has no sink. Since every vertex has exactly one
outgoing edge, $H$ contains a directed cycle
$i_1\to i_2\to\cdots\to i_\ell\to i_1$. For every edge $(i,i')\in E$,
we have $v_{i'}(A_i^j\cup\{g\})>v_{i'}(A_{i'}^j)$. Utilities are
integral, and adding one physical item increases utility by at most one, therefore,
$v_{i'}(A_i^j)\ge v_{i'}(A_{i'}^j)$. Thus, every agent on the cycle
weakly prefers the old bundle of her predecessor to her own bundle.

Since $(i_1,i_2)\in E$, we have $v_{i_2}(A_{i_1}^j\cup\{g\})\ge v_{i_2}(A_{i_2}^j)+1$. Therefore, $A_{i_1}^j\cup\{g\}$ contains
$ v_{i_2}(A_{i_2}^j)+1$ independent complementary bundles of agent $i_2$. The algorithm
sets $A_{i_2}^{j+1}$ to the union of such bundles, and hence
$v_{i_2}(A_{i_2}^{j+1})= v_{i_2}(A_{i_2}^j)+1$. Every other agent on the cycle receives the old bundle of her
predecessor and therefore weakly improves. Agents outside the cycle
retain their bundles, while agent $i_2$ gains one unit of utility.

Now we show that the corresponding allocation is EF-X. Note that agent $i_2$ has the least utility among all agents who envy the bundle $A_{i_1}^j\cup\{g\}$. Therefore, whoever envies $A_{i_2}^{j+1}$, their envy disappears after the removal of any item from that bundle. Furthermore, every bundle other than $A_{i_2}^{j+1}$ belonged to the
previous EF-X allocation. Since no agent's utility decreases, these
bundles remain EF-X. Hence, $A^{j+1}$ is EF-X and the induction claim
continues to hold.

Finally, every cycle update increases the utilitarian social
welfare by at least $1$, and the welfare is bounded. Thus, only $\left\lfloor\frac{|G|}2\right\rfloor$
many cycle updates can occur. Every sink update reduces the number of
unallocated items by $1$. Therefore, \Cref{alg:complete-efx} terminates and the \textbf{while} loop iterates at most $|G|$ times.
\end{proof}
\end{document}